\newif\ifcomments

\commentsfalse

\documentclass[review=false]{jfp-epi}

\usepackage[utf8]{inputenc}
\usepackage[verbose]{newunicodechar}
\usepackage[capitalise]{cleveref} 
\Crefname{example}{Example}{Examples}
\Crefname{definition}{Definition}{Definitions}
\Crefname{law}{Law}{Laws}
\usepackage[textsize=tiny]{todonotes} 
\usepackage[misc]{ifsym}

\crefname{appendix}{Appendix}{Appendices}

\ifcomments
\newcommand\jp[1]{\todo[backgroundcolor=blue!10]{JP: #1}}
\newcommand\andreas[1]{\todo[backgroundcolor=green!10]{AA: #1}}
\else
\newcommand\jp[1]{}
\newcommand\andreas[1]{}
\fi

\usepackage{tikz}
\usetikzlibrary{matrix}
\newwrite\boxesfile\immediate\openout\boxesfile=main.boxes
\newsavebox{\marxupbox}
\DeclareMathAlphabet{\mymathds}{U}{BOONDOX-ds}{m}{n}
\newcommand\llbrace{\mathopen{{\{\kern-0.24em|}}}    
\newcommand\rrbrace{\mathclose{{|\kern-0.24em\}}}}   
\newcommand{\bidirmultimap}{\mathrel{\ooalign{$\mkern6mu\multimap$\cr$\reflectbox{\(\multimap\)}$\cr}}}
\newunicodechar{ }{{\,}}
\newunicodechar{¬}{\ensuremath{\neg}} %
\newunicodechar{²}{^2}
\newunicodechar{³}{^3}
\newunicodechar{·}{\ensuremath{\cdot}}
\newunicodechar{¹}{^1}
\newunicodechar{×}{\ensuremath{\times}} %
\newunicodechar{÷}{\ensuremath{\div}} %
\newunicodechar{ʳ}{^r}
\newunicodechar{ˡ}{^l}
\newunicodechar{Γ}{\Gamma}   %
\newunicodechar{Δ}{\ensuremath{\Delta}} %
\newunicodechar{Η}{\ensuremath{\textrm{H}}} %
\newunicodechar{Θ}{\ensuremath{\Theta}} %
\newunicodechar{Λ}{\ensuremath{\Lambda}} %
\newunicodechar{Ξ}{\ensuremath{\Xi}} %
\newunicodechar{Π}{\ensuremath{\Pi}}   %
\newunicodechar{Σ}{\ensuremath{\Sigma}} %
\newunicodechar{Φ}{\ensuremath{\Phi}} %
\newunicodechar{Ψ}{\ensuremath{\Psi}} %
\newunicodechar{Ω}{\ensuremath{\Omega}} %
\newunicodechar{α}{\ensuremath{\mathnormal{\alpha}}} %
\newunicodechar{β}{\ensuremath{\beta}} %
\newunicodechar{γ}{\ensuremath{\mathnormal{\gamma}}} %
\newunicodechar{δ}{\ensuremath{\mathnormal{\delta}}} %
\newunicodechar{ε}{\ensuremath{\mathnormal{\varepsilon}}} %
\newunicodechar{ζ}{\ensuremath{\mathnormal{\zeta}}} %
\newunicodechar{η}{\ensuremath{\mathnormal{\eta}}} %
\newunicodechar{θ}{\ensuremath{\mathnormal{\theta}}} %
\newunicodechar{ι}{\ensuremath{\mathnormal{\iota}}} %
\newunicodechar{κ}{\ensuremath{\mathnormal{\kappa}}} %
\newunicodechar{λ}{\ensuremath{\mathnormal{\lambda}}} %
\newunicodechar{μ}{\ensuremath{\mathnormal{\mu}}} %
\newunicodechar{ν}{\ensuremath{\mathnormal{\mu}}} %
\newunicodechar{ξ}{\ensuremath{\mathnormal{\xi}}} %
\newunicodechar{π}{\ensuremath{\mathnormal{\pi}}} %
\newunicodechar{ρ}{\ensuremath{\mathnormal{\rho}}} %
\newunicodechar{σ}{\ensuremath{\mathnormal{\sigma}}} %
\newunicodechar{τ}{\ensuremath{\mathnormal{\tau}}} %
\newunicodechar{φ}{\ensuremath{\mathnormal{\varphi}}} %
\newunicodechar{χ}{\ensuremath{\mathnormal{\chi}}} %
\newunicodechar{ψ}{\ensuremath{\mathnormal{\psi}}} %
\newunicodechar{ω}{\ensuremath{\mathnormal{\omega}}} %
\newunicodechar{ϕ}{\ensuremath{\mathnormal{\phi}}} %
\newunicodechar{ϵ}{\ensuremath{\mathnormal{\epsilon}}} %
\newunicodechar{ᵏ}{^k}
\newunicodechar{ᵢ}{_i} %
\newunicodechar{•}{\ensuremath{\cdot}} %
\newunicodechar{ }{{\,}}
\newunicodechar{′}{\ensuremath{^\prime}}  %
\newunicodechar{″}{\ensuremath{^\second}} %
\newunicodechar{‴}{\ensuremath{^\third}}  %
\newunicodechar{ⁱ}{^i}
\newunicodechar{⁵}{\ensuremath{^5}}
\newunicodechar{⁺}{\ensuremath{^+}} 
\newunicodechar{⁻}{\ensuremath{^-}} 
\newunicodechar{ⁿ}{^n}
\newunicodechar{₀}{\ensuremath{_0}} %
\newunicodechar{₁}{\ensuremath{_1}}
\newunicodechar{₂}{\ensuremath{_2}}
\newunicodechar{₃}{_3}
\newunicodechar{₊}{\ensuremath{_+}} 
\newunicodechar{₋}{\ensuremath{_-}} 
\newunicodechar{ₙ}{_n}
\newunicodechar{ₚ}{_p} %
\newunicodechar{ℂ}{\ensuremath{\mathds{C}}} %
\newunicodechar{ℕ}{\mathbb{N}} %
\newunicodechar{ℝ}{\ensuremath{\mathbb{R}}} %
\newunicodechar{ℤ}{\ensuremath{\mathbb{Z}}} %
\newunicodechar{ℳ}{\mathcal{M}}
\newunicodechar{⅋}{\ensuremath{\parr}} %
\newunicodechar{←}{\ensuremath{\leftarrow}} %
\newunicodechar{↑}{\ensuremath{\uparrow}} %
\newunicodechar{↓}{\ensuremath{\downarrow}} %
\newunicodechar{→}{\ensuremath{\rightarrow}} %
\newunicodechar{↔}{\ensuremath{\leftrightarrow}} %
\newunicodechar{↖}{\nwarrow} %
\newunicodechar{↗}{\nearrow} %
\newunicodechar{↝}{\ensuremath{\rightsquigarrow}}
\newunicodechar{↦}{\ensuremath{\mapsto}}
\newunicodechar{⇝}{\ensuremath{\rightsquigarrow}} %
\newunicodechar{↭}{\ensuremath{\leftrightsquigarrow}} %
\newunicodechar{↾}{\ensuremath{\upharpoonright}} %
\newunicodechar{↿}{\ensuremath{\upharpoonleft}} %
\newunicodechar{⇆}{\ensuremath{\leftrightarrows}} %
\newunicodechar{⇐}{\ensuremath{\Leftarrow}} %
\newunicodechar{⇒}{\ensuremath{\Rightarrow}} %
\newunicodechar{⇔}{\ensuremath{\Leftrightarrow}} %
\newunicodechar{∀}{\ensuremath{\forall}}   %
\newunicodechar{∃}{\ensuremath{\exists}} %
\newunicodechar{∅}{\ensuremath{\varnothing}} %
\newunicodechar{∈}{\ensuremath{\in}}
\newunicodechar{∉}{\ensuremath{\not\in}} %
\newunicodechar{∋}{\ensuremath{\ni}}  %
\newunicodechar{∎}{\ensuremath{\qed}}
\newunicodechar{∏}{\prod}
\newunicodechar{∑}{\sum}
\newunicodechar{∗}{\ensuremath{\ast}} %
\newunicodechar{∘}{\ensuremath{\circ}} %
\newunicodechar{∙}{\ensuremath{\bullet}} %
\newunicodechar{∙}{\ensuremath{\cdot}}
\newunicodechar{∞}{\ensuremath{\infty}} %
\newunicodechar{∣}{\ensuremath{\mid}} %
\newunicodechar{∧}{\ensuremath{\wedge}}%
\newunicodechar{∨}{\ensuremath{\vee}}%
\newunicodechar{∩}{\ensuremath{\cap}} %
\newunicodechar{∪}{\ensuremath{\cup}} %
\newunicodechar{∫}{\int}
\newunicodechar{∷}{::} %
\newunicodechar{∼}{\ensuremath{\sim}} %
\newunicodechar{≃}{\ensuremath{\simeq}} %
\newunicodechar{≅}{\ensuremath{\cong}} %
\newunicodechar{≈}{\ensuremath{\approx}} %
\newunicodechar{≔}{:=} %
\newunicodechar{≜}{\ensuremath{\stackrel{\scriptscriptstyle {\triangle}}{=}}}
\newunicodechar{≜}{\triangleq}
\newunicodechar{≝}{\ensuremath{\stackrel{\scriptscriptstyle {\text{def}}}{=}}}
\newunicodechar{≟}{\ensuremath{\stackrel {_\text{\textbf{?}}}{\text{\textbf{=}}\negthickspace\negthickspace\text{\textbf{=}}}}}
\newunicodechar{≠}{\ensuremath{\neq}}%
\newunicodechar{≡}{\ensuremath{\equiv}}%
\newunicodechar{≤}{\ensuremath{\leq}} %
\newunicodechar{≥}{\ensuremath{\geq}} %
\newunicodechar{≺}{\ensuremath{\prec}}
\newunicodechar{≼}{\ensuremath{\preceq}}
\newunicodechar{⊂}{\ensuremath{\subset}} %
\newunicodechar{⊃}{\ensuremath{\supset}} %
\newunicodechar{⊆}{\ensuremath{\subseteq}} %
\newunicodechar{⊇}{\ensuremath{\supseteq}} %
\newunicodechar{⊎}{\ensuremath{\uplus}} %
\newunicodechar{⊑}{\ensuremath{\sqsubseteq}} %
\newunicodechar{⊒}{\ensuremath{\sqsupseteq}} %
\newunicodechar{⊓}{\ensuremath{\sqcap}} %
\newunicodechar{⊔}{\ensuremath{\sqcup}} %
\newunicodechar{⊕}{\ensuremath{\oplus}} %
\newunicodechar{⊗}{\ensuremath{\otimes}} %
\newunicodechar{⊞}{\ensuremath{\boxplus}} %
\newunicodechar{⊟}{\ensuremath{\boxminus}} %
\newunicodechar{⊡}{\ensuremath{\boxdot}} %
\newunicodechar{⊢}{\ensuremath{\vdash}} %
\newunicodechar{⊣}{\ensuremath{\dashv}}
\newunicodechar{⊤}{\ensuremath{\top}}%
\newunicodechar{⊥}{\ensuremath{\bot}} 
\newunicodechar{⊧}{\vDash} %
\newunicodechar{⊨}{\vDash} %
\newunicodechar{⊩}{\Vdash}
\newunicodechar{⊸}{\ensuremath{\multimap}} %
\newunicodechar{⋁}{\ensuremath{\bigvee}}
\newunicodechar{⋃}{\ensuremath{\bigcup}} %
\newunicodechar{⋄}{\ensuremath{\diamond}} %
\newunicodechar{⋆}{\ensuremath{\star}} %
\newunicodechar{⋮}{\ensuremath{\vdots}} %
\newunicodechar{⋯}{\ensuremath{\cdots}} %
\newunicodechar{─}{---}
\newunicodechar{□}{\ensuremath{\Box}} %
\newunicodechar{▹}{\ensuremath{\rhd}} %
\newunicodechar{◂}{\ensuremath{\triangleleft}}
\newunicodechar{▴}{\ensuremath{\mathbin{\triangle}}}
\newunicodechar{◃}{\triangleleft{}} %
\newunicodechar{◆}{\blacklozenge} %
\newunicodechar{◇}{\ensuremath{\Diamond}} %
\newunicodechar{◽}{\ensuremath{\square}} %
\newunicodechar{★}{\ensuremath{\star}}   %
\newunicodechar{♢}{\ensuremath{\diamondsuit}} %
\newunicodechar{♭}{\ensuremath{\flat}} %
\newunicodechar{♯}{\ensuremath{\sharp}} %
\newunicodechar{✓}{\ensuremath{\checkmark}} %
\newunicodechar{⟂}{\ensuremath{\bot}} 
\newunicodechar{⟦}{\ensuremath{\llbracket}}
\newunicodechar{⟧}{\ensuremath{\rrbracket}}
\newunicodechar{⟨}{\ensuremath{\langle}} %
\newunicodechar{⟩}{\ensuremath{\rangle}} %
\newunicodechar{⟶}{\ensuremath{\longrightarrow}}
\newunicodechar{⟷}{\longleftrightarrow}
\newunicodechar{⟹}{\ensuremath{\Longrightarrow}} %
\newunicodechar{⦃}{\ensuremath{\llbrace}} %
\newunicodechar{⦄}{\ensuremath{\rrbrace}} %
\newunicodechar{⧟}{\ensuremath{\bidirmultimap}} %
\newunicodechar{⨁}{\bigoplus} %
\newunicodechar{⨂}{\bigotimes} %
\newunicodechar{⫪}{\ensuremath{\top\kern-1.3ex\top}}
\newunicodechar{ⱼ}{_j} %
\newunicodechar{𝒟}{\ensuremath{\mathcal{D}}} %
\newunicodechar{𝒢}{\ensuremath{\mathcal{G}}}
\newunicodechar{𝒦}{\ensuremath{\mathcal{K}}} %
\newunicodechar{𝔸}{\ensuremath{\mathds{A}}} %
\newunicodechar{𝔹}{\ensuremath{\mathds{B}}} %
\newunicodechar{𝟘}{\ensuremath{\mymathds{0}}} %
\newunicodechar{𝟙}{\ensuremath{\mymathds{1}}} %
\newunicodechar{≋}{\ensuremath{\approx}}
\newunicodechar{≷}{\ensuremath{\lessgtr}}

\ifdefined\LONGVERSION
\newcommand{\SHORTVERSION}[1]{}
\else
\newcommand{\LONGVERSION}[1]{}
\newcommand{\SHORTVERSION}[1]{#1}
\fi

\newcommand{\EXTENDED}[1]{}
\newwrite\agdarefs
\immediate\openout\agdarefs=References.agda

\write\agdarefs{{-\string# OPTIONS --warning=error \string#-}}
\write\agdarefs{module References where}
\write\agdarefs{}
\newcommand\agdacomment[1]{\write\agdarefs{-- #1}}
\newcommand\agdablank{\write\agdarefs{}}
\NewDocumentCommand\agdaref{o m o m}{%
  \href{html/#2.html\#\IfValueT{#3}{#3.}#4}{✓}%
  \IfValueT{#1}{\agdacomment{#1}}%
  \write\agdarefs{import #2}%
  \write\agdarefs{open #2\IfValueT{#3}{.#3} using (#4)} %
  \write\agdarefs{}%
}

\newcommand{\NOVERSION}[1]{}  
\newcommand\seeExtendedMaterial{(See extended material)}
\newcommand\category[1]{\ensuremath{\mathsf{#1}}}
\newcommand\structure[1]{\ensuremath{\mathcal{#1}}}
\newcommand\Funct[1]{\structure{E}(\category{#1})} 
\NewDocumentCommand\Lamp{o}{\ensuremath{λ^{\mathsf{≼}}\IfNoValueTF{#1}{}{_{\structure{#1}}}}\xspace}
\NewDocumentCommand\PLamp{o}{\ensuremath{λ^{◃}\IfNoValueTF{#1}{}{_{\structure{#1}}}}\xspace}

\SHORTVERSION{
\newcommand\proofInSupp{The proof can be found in the long version of
  the paper, provided as supplementary material.
  \renewcommand\proofInSupp{\hfill\seeExtendedMaterial} }
}
\LONGVERSION{
\newcommand\proofInSupp{}
}

\newenvironment{definition*}{}{} 

\usepackage{dsfont}
\usepackage{mathpartir}
\usepackage{mathtools} 
\usepackage{calc} 

\usepackage{booktabs}   
\usepackage{subcaption} 

\newtheorem{remark}{Remark}

\setcopyright{none} 
\acmDOI{10.1145/3408972}
\acmYear{2026}
\copyrightyear{2026}
\acmJournal{PACMPL}
\acmVolume{10}
\acmNumber{ICFP}
\acmArticle{90}
\acmMonth{8}

\input{syntaxMacros.tex}

\newcommand{\abbrev}[1]{#1} 

\newcommand{\mgoal}[1][]{\mbox{goal\ifthenelse{\equal{#1}{}}{}{~#1}}}

\newcommand{\rulename}[1]{\ensuremath{\mbox{\sc#1}}}
\newcommand{\ru}[2]{\dfrac{\begin{array}[b]{@{}c@{}} #1 \end{array}}{#2}}

\newcommand{\nru}[3]{#1\ \ru{#2}{#3}}
\newcommand{\nrux}[4]{#1\ \ru{#2}{#3}\ #4}
\newcommand{\dstack}[2]{\begin{array}[b]{c}#1\\#2\end{array}}

\begin{document}

\title{%
  Arbitrary Functors as Modalities in a Linear Calculus
  \LONGVERSION{ (Extended~Version)}}


\author{Jean-Philippe Bernardy}
\orcid{0000-0002-8469-5617}             
\affiliation{
  \department{Department of Computer Science and Engineering}
  \institution{University of Gothenburg and Chalmers University of Technology}
  \city{Göteborg}
  \country{Sweden}
}
\email{jean-philippe.bernardy@gu.se}          

\author{Andreas Abel}
\orcid{0000-0003-0420-4492}             
\affiliation{
  \department{Department of Computer Science and Engineering}
  \institution{University of Gothenburg and Chalmers University of Technology}
  \city{Göteborg}
  \country{Sweden}
}
\email{andreas.abel@gu.se}          


\begin{abstract}
  We study modalities in the linear lambda calculus, by interpreting
  them as endofunctors in monoidal closed categories. We support
  functors that do not distribute over products, including (co)monads.
  We also support arbitrary natural transformations as conversion between modalities.
  We provide a generic categorical semantics, which we
  then specialise to a sound and complete resource semantics. We use
  this resource semantics to derive normalisation by evaluation.
  Along the way, we describe several applications of such models.
  This work aims at providing a foundation for future programming
  languages with modalities.  The paper is backed by a mechanisation
  in Agda.
\end{abstract}

 \begin{CCSXML}
<ccs2012>
<concept>
<concept_id>10003752.10003790.10011740</concept_id>
<concept_desc>Theory of computation~Type theory</concept_desc>
<concept_significance>500</concept_significance>
</concept>
<concept>
<concept_id>10003752.10010124.10010125.10010130</concept_id>
<concept_desc>Theory of computation~Type structures</concept_desc>
<concept_significance>500</concept_significance>
</concept>
<concept>
<concept_id>10003752.10010124.10010138.10010142</concept_id>
<concept_desc>Theory of computation~Program verification</concept_desc>
<concept_significance>300</concept_significance>
</concept>
<concept>
<concept_id>10003752.10010124.10010131.10010134</concept_id>
<concept_desc>Theory of computation~Operational semantics</concept_desc>
<concept_significance>100</concept_significance>
</concept>
</ccs2012>
\end{CCSXML}

\ccsdesc[500]{Theory of computation~Type theory}
\ccsdesc[500]{Theory of computation~Type structures}

\keywords{linear types, graded types, modal logic, denotational semantics}  

\maketitle



%

\section{Introduction}\label{sec:intro}

Modalities, rooted in modal logic, provide a formal framework for
expressing notions of necessity, possibility, and other modal
operators.  In programming languages they can be used to model
resource usage, concurrency, effects and coeffects, and security
constraints~\citep{murphy_distributed_2005,dal_lago_bounded_2009,girard_new_1991,orchard_quantitative_2019,petricek_coeffects_2014}.
While the term \emph{grades} is more common in the programming
community, we use the word \emph{modalities} consistently throughout
this paper.

Modalities are typically assumed to form a preordered, ring‑like
structure~\citep{abel_unified_2020,brunel_core_2014}.
Such a structure is sufficiently general to accommodate
all of the applications mentioned above, which makes it tempting to
unify them within a single system.  However, the ring structure
imposes (1) distributivity \(p(q+r)=pq+pr\)
and (2) that modality conversion is computationally irrelevant, i.e. there
is a single interpretation of the preorder relation \(p ≼ q\).  Our
starting observation is that these two assumptions break down when one
pushes the generic system to its limits.

Consider first
distributivity. As we see below (\cref{lem:distr-lax}), distributivity
corresponds to modalities having the structure of (op)lax monoidal
functors; meaning that there are these conversions:
\begin{center}
  \begin{tikzpicture}
    \node(factored)         at (-1,0)                   {
      \(\begin{array}{c}
        \BOX p (A \Tensor B)
      \end{array}\)
    };
    \node(expanded)      at (2,0)  {\(
      \begin{array}{c}
        \BOX p A \Tensor \BOX p B
      \end{array}
      \)};
    \path [->] (factored) edge [bend left]
      node[above,font=\scriptsize] {$\mathit{oplax}$}
      (expanded);
    \path [->] (expanded) edge [bend left]
      node[below,font=\scriptsize] {$\mathit{lax}$}
      (factored);
  \end{tikzpicture}
\end{center}
But even commonplace effects such as the writer monad
are not lax unless the payload is itself a monoid. Furthermore, if one
simultaneously cares about resource usage, the underlying language may
be linear, and then even the reader monad
is not always lax.\footnote{Indeed, laxity of the reader monad
  \(\BOX{\mathsf{reader}_R}A = R \multimap A\) requires
  \(ψ_2 : (R ⊸ A) ⊗ (R ⊸ B) ⊸ R ⊸ (A ⊗ B)\), which cannot be realised
  unless \(R\) is a comonoid; and not all types are comonoids in a
  linear language.}

Regarding the second point, assuming that there is a single possible
conversion \(p ≼ q\) between modalities \(p\) and \(q\) may be reasonable if there is
a single kind of effect, or if they do not interact.~\citep{abel_unified_2020,orchard_quantitative_2019} This breaks down
also if one considers data structures as modalities. In such a case a
modality conversion is a reshaping of the structure, and such conversions are not
determined by their type.

Our goal is to study type-systems where modalities are not necessarily
distributive and where conversion carries computational content.
\subsection{Roadmap}

The idea that guided this work is that it would suffice to take an
existing core language with modalities and relax it appropriately so that:
\begin{itemize}
\item not all modalities automatically distribute, and
\item modality conversions are made explicit.
\end{itemize}
(See \cref{sec:modalities} for the precise description of the new modality structure.)
Then these relaxations would be applied everywhere in the meta-theory
in a straightforward way. However, we found that it is not so
simple. Following a standard method of semantic analysis, we found
that the calculus must be thoroughly modified in a pervasive way.

In this article, we could just present the final calculus, but we believe that
doing so would be less pedagogical, and we prefer to guide the reader
through the true path which led us to it. This means that our
methodology sometimes departs from a standard description of core
calculi. We explain the detours taken in the rest of this section.

First, we define our starting point core language. It is meant to
resemble usual calculi with modalities (\cref{sec:calculus}). We make
the minimal modifications necessary to relax the assumptions that all
modalities distribute. We do so by adding side conditions to the rules
which would normally just assume distributivity. Unfortunately this
calculus does not seem to have the desirable properties. In particular
it is unclear how to reduce some redexes.

Rather than extending the syntax haphazardly, we investigate the
implications of the design by semantic means.  We do so by providing a
generic categorical semantics (\cref{sec:semantics}); then a (sound
and complete) resource semantics based on presheaves
(\cref{sec:forgetful-model,sec:computational-model}). Along the way,
we consider usual and novel applications of modalities and grades,
sometimes via the relational model (\cref{sec:relmodel}) arising from
the generic semantics.

Generally, presheaf models can be utilized in normalisation
by evaluation techniques: one evaluates the syntax in a presheaf model
and then reifies back to syntax. The reification outputs forms which
do not context redexes, and are in this sense normal forms.
Furthermore, the completeness proof strongly suggests what normal
forms one should have.  In our case, it provides us with the design of the
stable calculus---one with all the desired properties, see
\cref{sec:normalisation}. This final language has the same semantics
as the original calculus, but its syntax departs significantly from
usual lambda calculi. Our conclusion is that one can not relax
modalities to be non-distributive without paying a significant cost in
terms of syntactic complications. Yet, we remark that the simplified
calculus (the one we started with) appear to be sufficient to describe
most applications, even if not all syntactic reasoning can be done
using it.

We suggest that the reader eager to see this final version of the
calculus does not skip directly to \cref{sec:normalisation}, but first
goes through \cref{sec:modalities,sec:semantics} and
\cref{thm:completeness}.

\subsection{Methodological notes}

We have completely mechanised our results in the form of an Agda
library.  Because of the modular structure of the paper, readers may
need to navigate between concepts, hence it is heavily hyperlinked,
both within the document and to the Agda formalisation. In particular,
whenever a checkmark (✓) appears, it links to the corresponding
formalisation.\footnote{To reviewers: to access the formalisation
  in this way, the extended material must be unpacked so that the
  \texttt{html} directory resides in the same directory as the PDF
  file. Additionally, the PDF reader must support local links, as many
  do not. Google Chrome works on all tested platforms, and Evince
  works on Linux.}

Thanks to its generality and methodology, we hope that this
paper will provide a solid foundation for the development of future
languages with modalities.

\section{The Structure of Modalities}\label{sec:modalities}
\agdacomment{Section: Modality Structure}
\agdacomment{---------------------------}
\agdablank

\providecommand\modSet[1]{M_{\structure #1}}
\begin{definition}[modality structure]\label[definition]{def:ringoid}\agdaref[Modality structure]{Modalities}{S}
  A modality structure is a category augmented with:
   \begin{itemize}
     \item Two monoidal structures: $(·,1)$ and $(+,0)$.
     \item Proof-relevant predicates $\distributive$ and $\codistributive$ that single out distributive and codistributive modalities.
        I.e., the propositions \(\distributes p \) and \(\codistributes p\) are actually sets containing witnesses of (co)distributivity.
     \item A distinguished element of \(\distributes 1\) and \(\codistributes 1\).
     \end{itemize}
   \end{definition}
   To distinguish modality structures we will denote them with a script font.
    The object set of a modality structure \structure F, written \(\modSet F\), contains the modalities, and
   the homsets of the above category represents convertibility between modalities.
   To our knowledge, programming language literature~\citep{abel_unified_2020,orchard_quantitative_2019,brunel_core_2014} assumes that each such homset has at most one morphism.
   In other words, modalities form a poset. (See also \cref{sec:poset-modalities}.)
   Even though we do not assume such, it remains an important motivating situation, and hence we write \(p ≼ q\) for the set of conversions from \(p\) to \(q\).
   %

Modalities $p$ act on types $A$ via qualification \(\BOX p A\), producing a modal type.
The calculus is designed such that the following principles of interaction between modalities and other type formers hold.
First, the additive monoid interacts with \(\Tensor\) types like so:
\begin{align}
\BOX {0} A & \funOp 1 \tag{zero principle} \label{principle:zero}
  & \text{\agdaref[Zero principle]{Natural.Pats.Examples}{tmThrow}} \\
\BOX {p+q} A & \funOp \BOX p A \Tensor \BOX q A \tag{sum principle} \label{principle:sum}
  & \text{\agdaref[Sum principle]{Natural.Pats.Examples}{tmSplit}}
\end{align}
Hence the modality \(0\) corresponds to no use, and \(p+q\) corresponds to
combined \(p\)-use and \(q\)-use.
Multiplication arises from composition, and is governed by the
following principles (in this case, bimaps, unlike the addition principles above):
\begin{align}
\BOX 1 A & \funOpBidir A  & \text{\agdaref[Unit principle]{Natural.Pats.Examples}{gradedMult}} \tag{unit principle} \label{principle:unit}  \\
\BOX p {(\BOX q A)} & \funOpBidir \BOX {p · q} A & \text{\agdaref[Product principle]{Natural.Pats.Examples}{gradedUnit}} \tag{product principle} \label{principle:product}
\end{align}
The unit qualification is always equivalent to no qualification, and
combination behaves multiplicatively.
Because qualification also behaves functorially \agdaref[Functoriality]{Natural.Pats.Examples}{boxFmap},
\(\BOX p\) is always simultaneously a graded monad and a graded co-monad.
\begin{equation}
\text{If~} {⊢ A \funOp B}, \text{~then~} {⊢ \BOX p A \funOp \BOX p B} \tag{functoriality}\label{principle:functoriality}
\end{equation}
\noindent
A modality morphism can be lifted to a function:\agdaref[Subsumption]{Natural.Pats.Examples}{convertibility}
\begin{equation}
\text{If~} p ≼ q, \text{~then~} \BOX p A \funOp \BOX q A . \tag{lift} \label{principle:convertibility}
\end{equation}

Most of the literature on modalities posits that \((·)\) distributes
over \((+)\), and then \(\BOX p\) distributes over products.  As
mentioned in the introduction, we will not make this assumption in
general, and restrict distributivity to the modalities $p$ which satisfy
$\distributive$ or $\codistributive$, like so:
\providecommand\boththings[2]{\begin{cases}#1\\#2\end{cases}}
\begin{align}
\text{Given~} \distributes p & \text{~we have~}
  \boththings
    {\fun 1 \BOX p 1
       \text{\agdaref[Oplax nullary]{Natural.Pats.Examples}{oplax0}}}
    {\fun {(\BOX p A \Tensor \BOX p B)} \BOX p {(A \Tensor B)}
       \text{\agdaref[Oplax binary]{Natural.Pats.Examples}{oplax}}}
  \tag{factorisation}\label{principle:lax}
\\
\text{Given~} \codistributes p & \text{~we have~}
  \boththings
    {\fun {\BOX p 1} 1
      \text{\agdaref[Lax nullary]{Natural.Pats.Examples}{lax0}}}
    {\fun {\BOX p {(A \Tensor B)}} (\BOX p A \Tensor \BOX p B)
      \text{\agdaref[Lax binary]{Natural.Pats.Examples}{lax}}}
  \tag{distribution}\label{principle:oplax}
\end{align}
The purpose of \(\distributes p\) and \(\codistributes p\) is to carry
the computational content associated with the above types.
When modalities from a poset,
\(\distributes p\) and \(\codistributes p\) are just conditions.
The articulation between those sets (or conditions) and distributivity is explained
in detail in \cref{sec:semantics}.

\section{Lambda Calculus with General Modalities}\label{sec:calculus}
\agdacomment{Section: Lambda Calculus with General Modalities}
\agdacomment{------------------------------------------------}
\agdablank

In this section we introduce  $\Lamp[{F}]$, a
typed lambda calculus augmented with modalities in \(\structure
F\).
It is designed as a minimal extension of modern lambda-calculi \cite{abel_unified_2020,Orchard_2020} with modalities, to
support non-distributive modalities and explicit modality conversions.
The calculus can also be considered as a modal (linear) logic.
For most of the paper, we only consider simple types, but discuss polymorphism in \cref{sec:polymorphism}.



Types are given by the following grammar.\agdaref[Types]{Natural.Types}{Typ}
\begin{align*}
  A,B,C & ::= \basetype ∣ 1 ∣ A \Tensor B ∣ \Ttop ∣ A \With B ∣  \fun A B ∣ \BOX p A 
\end{align*}
The letter $\basetype$ ranges over a set $\TyConst$ of base
types. These types will be used in the semantics to freely
interpret type variables beyond the types formed from $1$, $\With$,
$\Tensor$, $\funOp$, and $\BOX p$, which have a fixed meaning.  Base types
$\basetype$ are generic unless equipped with primitive operations. We
have two product types $A \Tensor B$ (multiplicative) and $A \With B$
(additive), and they behave as in linear logic. Namely,
eliminating $A \Tensor B$ requires consuming both $A$ and $B$, whereas
$A \With B$ only allows either $A$ or $B$ to be consumed. The function
type $\fun A B$ is purely linear, and $\BOX p A$ is the qualification
of $A$ by an arbitrary modality $p$.

Consequently, $\Lamp$ is a
calculus for substructural typing~\citep{walker:substructural}. The
availability of structural rules is precisely controlled by
modalities.
There is no sum type, but bits of information can be exchanged using
the type $\fun {(X \With X)} X$, for some suitable continuation type
\(X\).  In general, we can define
$\Plus[X] A B = \fun {((\fun A X) \With (\fun B X))} X$. In the
presence of polymorphism we can even define
$\Plus A B = ∀α. \Plus[{α}] A B$, which is isomorphic to the usual sum
type.

\paragraph{Contexts}
In contexts, variable bindings are annotated with a modality:\agdaref[Contexts]{Contexts}{Ctx}
\begin{align*}
  Γ,Δ ::= \emptyContext \mid Γ, x{:}\moda q A \quad (x \not\in\Gamma)
\end{align*}

We can scale contexts by a modality $q$, pointwise:\agdaref[Context scaling]{Contexts}{_◃_}
\begin{align*}
  q·\emptyContext & = \emptyContext \\
  q·(x : \moda p A, Γ) & = x : \moda {q·p} A , q · Γ
\end{align*}
Thus contexts are a free module generated by simple bindings
of form \(x : A\).

\paragraph*{Terms and typing}
Our syntax follows that of typed usual lambda calculus with patterns:
\begin{align*}
  t,u & ::= x ∣ \lam m t \mid t \app u 
  & \text{variables; functions} \\
  & ∣ \ttop{Γ} ∣ \twith t u ∣ \tout 1 t ∣ \tout 2 t & \text{\(\With\)-tuples} \\
  & ∣ () ∣ (t,u)  & \text{\Tensor-tuples} \\
  & ∣ \TBOX p t & \text{qualification} \\
  & ∣ \tlet m t u & \text{pattern binding}
\end{align*}
But we allow pattern bindings in \(\ttlet\) and abstraction. Patterns admit the following syntax:
\begin{align*}
  l,m & ::= x ∣ \pthrow ∣ \panalyse p l q m ∣(\pconv {φ} m) ∣ (l,m) ∣ () ∣ \TBOX p m
\end{align*}
The typing judgement has the form $Γ ⊢ t : A$, meaning that $t$ has
type $A$ (with an implicit modality 1) in context $Γ$.
A variable occurrence always corresponds to usage
1.  This ensures stability of usage under variable
substitution. In the rest of the paper, any omitted modality annotation defaults to $1$.
The pattern typing judgement \(\typat {Γ} m p A\) means that with modality \(p\), the pattern \(m\) analyses the type \(A\) into the variable bindings \(Γ\).

%
  
\newcommand\figsepar[1]{	\hrulefill \\ \small #1 }
\begin{figure}
  \figsepar{Terms}
\begin{mathpar}
  {\rul{var}}\inferrule{{}}{x : \moda 1 A ⊢ x : A}


  {\rul{abs}}\inferrule{\typat {Δ} m 1 A \\ Γ,Δ ⊢ t : B}{Γ ⊢ \lam m t : \fun A B}
  \and
  {\rul{app}}\inferrule{Γ ⊢ t : \fun A B\\ Δ ⊢ u : A}{Γ,Δ  ⊢ t u : B}\\





  {\rul{\Unit-intro}}\inferrule{\bla}{\emptyContext ⊢ () : \Unit}

  {\rul{\Tensor-intro}}\inferrule{Γ ⊢ t : A\\ \Delta ⊢ u : B}{Γ,Δ ⊢ (t,u) : A \Tensor B}

  {\rul{\Ttop-intro}}\inferrule{\bla}{Γ ⊢ \ttop{Γ} : \Ttop}
  \and
  {\rul{\(\With{}\)-elim}}\inferrule{Γ ⊢ t : A₁ \With A₂}{Γ ⊢ \tout i t : Aᵢ}
  \and
  {\rul{\(\With{}\)-intro}}\inferrule{Γ ⊢ t_1 : A_1 \\ \Gamma \vdash t_2 : A_2}{Γ ⊢ \twith {t₁} {t₂} : A_1 \With A_2}

  {\BOXintro}\inferrule{Γ ⊢ t : A}{pΓ ⊢ \TBOX p t : \BOX p A}{\distributesMaybe{Γ}{p}}

  {\rlet{}}\inferrule{ Γ ⊢ t : A  \\ \typat {Ξ} m 1 A \\ Δ,Ξ ⊢ u : C}{Δ,Γ ⊢ \wlet {m = t} u : C}

\end{mathpar}
\figsepar{Patterns}
\begin{mathpar}
  \inferrule{ϕ : p ≼ q \\ \typat {Γ} m q A}{\typat {Γ} {\pconv {ϕ} m} p A}

  \inferrule{\typat {Γ} l p A \\ \typat {Δ} m q A}{ \typat {Γ,Δ} {\panalyse p l q m} {p+q} A}

  \inferrule{~}{\typat {\emptyContext} {\pthrow} 0 A}

  \inferrule{~}{ \typat {x : \moda p A} x p A}

  \inferrule{\typat {Γ} l p A \\ \typat {Δ} m p B}{\typat {Γ,Δ} {(l,m)} p {(A ⊗ B)}}{\codistributes p}

  \inferrule{\bla}{\typat {\emptyContext} {()} p {\Unit}}{\codistributes p}

  \inferrule{\typat {Γ} m {qp} A}{ \typat {Γ} {\TBOX p m} {q\,} {\BOX p A}}
\end{mathpar}

\caption{%
  Typing rules of \Lamp \agdaref[Typing rules]{Natural.Pats.Syntax}{}. The judgement \(Γ ⊢ t : A\) is for terms and \(\typat {Γ} m p A\) is for patterns. } %
\label{fig:typing-rules}
\end{figure}

The typing rules are shown in \cref{fig:typing-rules}. A number of
points are worth detailing.  First, while the eliminators of
\(A \With B\) are usual projections, the elimination of
\(A \Tensor B\), \(\Unit\) and \(\BOX q A \) is done by combining
$\ttlet$ with patterns (respectively \((l,m)\), \(()\) and
\(\TBOX q m\)). We also allow pattern binding in \(λ\)-abstraction
for convenience.
All listed patterns accommodate arbitrary modalities, though
$(l,m)$ and $()$ impose side conditions, yielding the restriction
of \ref{principle:oplax}. Meanwhile, the $\TBOX q m$ pattern
sustains the \ref{principle:product} in the
left-to-right direction.

Second, several patterns handle modalities specifically, instead of type constructors.
The $\pconv {φ} m$ pattern applies the morphism
$φ : p ≼ q$ to adjust the modality of a pattern, sustaining \ref{principle:convertibility}.
The $\pthrow$ pattern matches any
type with modality $0$ without binding variables, sustaining
\ref{principle:zero}. Finally, $\panalyse p l q m$ matches a type with
modality $p+q$ by decomposing it into $l$ (matching modality $p$) and
$m$ (matching modality $q$), thereby supporting \ref{principle:sum}.

Finally, $\rul{\BOXintro}$ requires either a singleton context $Γ$ or
an element of $\distributes p$. Accordingly,
\ref{principle:functoriality} has no side condition, while
\ref{principle:lax} has one. A detailed semantic analysis of these
conditions follows in the next section. \Cref{fig:principles} shows
the lambda terms for each principle, demonstrating the purpose of
patterns.

\begin{figure}
  \centering
  \begin{align*}
    \input{moldy/zero_principle_term.tex} & : \input{moldy/zero_principle_type.tex}  & \text{\ref{principle:zero}} \\
    \input{moldy/sum_principle_term.tex} & : \input{moldy/sum_principle_type.tex}  & \text{\ref{principle:sum}} \\
    \input{moldy/unit_principle_l_term.tex} & : \input{moldy/unit_principle_l_type.tex}  & \text{\ref{principle:unit}, left to right} \\
    \input{moldy/unit_principle_r_term.tex} & : \input{moldy/unit_principle_r_type.tex}  & \text{\ref{principle:unit}, right to left} \\
    \input{moldy/product_principle_l_term.tex} & : \input{moldy/product_principle_l_type.tex}  & \text{\ref{principle:product}, left to right} \\
    \input{moldy/product_principle_r_term.tex} & : \input{moldy/product_principle_r_type.tex}  & \text{\ref{principle:product}, right to left} \\
    \input{moldy/distr_l_1_term.tex} & : \input{moldy/distr_l_1_type.tex}  & \text{\ref{principle:oplax}, unit} \\
    \input{moldy/distr_r_1_term.tex} & : \input{moldy/distr_r_1_type.tex}  & \text{\ref{principle:lax}, unit} \\
    \input{moldy/distr_l_2_term.tex} & : \input{moldy/distr_l_2_type.tex}  & \text{\ref{principle:oplax}, pair, given \codistributes p} \\
    \input{moldy/distr_r_2_term.tex} & : \input{moldy/distr_r_2_type.tex}  & \text{\ref{principle:lax}, pair, given \distributes p} \\
    \input{moldy/conv_principle_term.tex} & : \input{moldy/conv_principle_type.tex}  & \text{\ref{principle:convertibility}} \\
    \input{moldy/functor_principle_term.tex} & : \input{moldy/functor_principle_type.tex}  & \text{\ref{principle:functoriality}}
  \end{align*}
  \caption{Implementation of the principles relating modality qualification and other type formers.}
  \label{fig:principles}
\end{figure}


\subsection{Issues with Operational Semantics}

We arrived at the above syntax by making minimal changes to earlier
calculi to support non-distributive modalities. Specifically, we
conditioned some rules with \(\distributive\) and \(\codistributive\)
as needed and extended the pattern syntax for explicit conversions and
explicit handling of modality sums. This change allows us to support
the desired programming principles (\cref{fig:principles}) in a convenient way.

However, \Lamp{} is not stable under computation: comprehensive
extensions to pattern bindings are required.  The usual approach is to
define how eliminator patterns reduce with the corresponding
construction. For instance, for products, the binding
\((m,l) = (t,u)\) reduces to two parallel bindings: \(m=t\) and
\(l=u\). For \Lamp{}, the binding \(\TBOX p m = \TBOX p t\) would
reduce to a \(p\)-way pattern \(m =_p t\). This already complicates
bindings beyond \Lamp{} syntax. We could extend the syntax to support
this form, but there are further complications because of the
modality-transformation patterns \(\panalyse p l q m\) and
\(\pconv {φ} m\), which do not interact with the type-structure. For
instance, we could have bindings such as
\(\pconv {φ} {\panalyse p {(l,m)} q x} =_r (t,u)\). It is clear that
the pattern syntax must be further extended to reduce this binding
form, but it is unclear at this stage how exactly.

Instead of trying to complete the calculus in a
\textit{ad-hoc} way, we will \emph{derive} the final form of
patterns, using semantic means (\cref{sec:normalisation}). The first
step is to define a categorical semantics for \(\Lamp{}\).

\section{Categorical Semantics}\label{sec:semantics}
\agdacomment{Section: Categorical Semantics}
\agdacomment{------------------------------} \agdablank

\providecommand\ourcategory{SMC${}^{{\funOp},{\With}}$}

Let us, for the sake of presentation, consider a simplified version of
$\Lamp$ without modalities—effectively assuming that every modality is
1. This yields a purely linear language. The semantics of this
restricted language naturally aligns with a symmetric monoidal closed
category equipped with finite products, which we denote as
\ourcategory{}.

\begin{definition}[\ourcategory]\label[definition]{def:bicat}\agdaref[Symmetric Monoidal Closed Category plus Standard Products]{Cats}{BiCat}
  A \ourcategory{} is a category \(\category C\) augmented with:
  \begin{itemize}
  \item A symmetric monoidal structure \((\tensorObj , \unitObj )\).
  \item Finite products \((\withObj,\topObj)\).
    In contrast to the \(\tensorObj\) product, it has projections \(A \withObj B \hom A\) and \(A \withObj B \hom B\) and pairing \((C \hom A) × (C \hom B) → (C \hom (A \withObj B))\).
        This product is in general distinct from the \(\tensorObj\) product.
      \item Internal hom-sets \((A \funObj B)\), such that
        \((A \tensorObj B \hom C) ≅ (A \hom (B \funObj C))\). Note
        that these are adjoint to the \(\tensorObj\) product,
        not the \(\withObj\) product.
      \end{itemize}
  Standard coherence conditions apply.
  The \(\tensorObj\) and \(\withObj\) products are independent, but
  if they coincide the category is Cartesian closed.  We
  overload the notation for types and their interpretation,
  using color to distinguish them.
\end{definition}
The notation that we use the morphisms of a \ourcategory{} is as follows.
For the \((\tensorObj , \unitObj )\) monoid structure:
\begin{itemize}
\item If \(f : A → B\) and \(g : C → D\) then \(f ⊗ g : A \tensorObj C → B \tensorObj D\)
\end{itemize}
For finite products \((\withObj,\topObj)\):
\begin{itemize}
\item \(π₁ : A \withObj B \hom A\) and \(π₂ : A \withObj B \hom B\)
\item if \(f : C \hom A\) and \(g : C \hom B\) then  \(f ▴ g : C \hom (A \withObj B)\).
  \item \(ε : A → \topObj\)
\end{itemize}
The isomorphism \((A \tensorObj B \hom C) ≅ (A \hom (B \funObj C))\) is witnessed by:
\begin{itemize}
\item \(\appl   : ((A \funObj B) \tensorObj A) → B\)
\item if \(f :((A \tensorObj B) → C)\) then \(\curry f : A → (B \funObj C)\)
\end{itemize}

Modalities are interpreted as endofunctors on $\category{C}$. Crucially, the collection of such endofunctors forms a \nameref{def:ringoid}.

\begin{definition}[endofunctor modalities]\label[definition]{def:functor-modalities}\agdaref[Functor modalities]{Sem}{endoMod}
  Let {\Funct C} be the modality structure defined by:
\begin{align*}
  M & := \category C → \category C & \text{endofunctors} \\
  p ≼ q & := p ⟶ q & \text {natural transformations between those}\\
  p · q & := p ∘ q & \text {corresponding functor composition}\\
  1 & := \mathsf{id} &  \text{identity functor} \\
  p + q & := p \tensorObj q & \text {pointwise monoidal product}\\
  0 & := \KObj(\unitObj) &  \text{constant functor (map every object to the unit)} \\
  \distributes p,\codistributes p  & := \text{witness that \(p\) is (op)lax}
    & \text{(see Def.~\ref{def:laxoplax})}
\end{align*}
\end{definition}

Remarkably, there is no restriction on the chosen
functors or the chosen natural transformations. Indeed, the modality
structure arises from already assumed structures:
\begin{itemize}
\item Its category structure is provided by the category
  structure of natural transformations.
\item The monoidal product \((·,1)\) is given by the
  monoid of endofunctor composition \((∘, id)\).
\item The monoidal product \((+,0)\) is given by the monoidal
  product structure \((\tensorObj , \unitObj )\).
\end{itemize}

As we detail in the rest of the section, the choice of
\(\distributes p\) and \(\codistributes p\) is exactly that which is
required to interpret \Lamp, but it is also one which corresponds to
distributivity laws. Let us first review the definitions.

\begin{definition}[lax and oplax functors]\label{def:laxoplax}
  A functor \(F\) is lax \agdaref[Lax functor]{Cats}[BiCat.Endo.Functor]{IsLax}
  if there are natural morphisms
  \(F A \tensorObj F B \hom F (A \tensorObj B)\) and
  \(\unitObj \hom F \unitObj\).
  Conversely, it is oplax \agdaref[Oplax functor]{Cats}[BiCat.Endo.Functor]{IsOplax}
  if we have natural morphisms
  \(F (A \tensorObj B) \hom F A \tensorObj F B\) and
  \(F \unitObj \hom \unitObj\). The definition of lax and oplax functors come with a number of additional coherence laws,
  but they are orthogonal to our development.
\end{definition}

\begin{lemma}\label[lemma]{lem:distr-lax}
In \Funct{C}, the sets \(\distributes p\) and
  \(\codistributes p\) (by definition witnessing laxity and
  op-laxity) are in bijection with (natural) distributivity homsets for modalities:
  \begin{align*}
    {0 ≼ p · 0} &↔  \unitObj \hom p \unitObj\\
    ∀ q r.\, {p·q + p·r ≼ p · (q + r)} &↔  ∀ A B.\, p A \tensorObj p B \hom p {(A \tensorObj B)}
      & \text{\agdaref[Distributive implies lax]{Sem}[Distr-Lax]{isLax} \agdaref[lax implies Distributive]{Sem}[Distr-Lax]{isDistr}} \\
    {p · 0 ≼ 0} &↔  p \unitObj \hom \unitObj \\
    ∀ q r.\, {p · (q + r) ≼ p·q + p·r} &↔  ∀ A B.\, p (A \tensorObj B) \hom p A \tensorObj p B
      & \text{\agdaref[Distributive implies oplax]{Sem}[Distr-Lax]{isOplax} \agdaref[oplax implies Distributive]{Sem}[Distr-Lax]{isDistr'}}
  \end{align*}
\end{lemma}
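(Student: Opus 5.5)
We unfold \cref{def:functor-modalities} and apply the Yoneda lemma for natural transformations between functors built from \(p\) and the arguments \(q,r\). Each left-hand side is a natural transformation between endofunctors, so we first compute the endofunctors. We have \(p·0 = p ∘ \KObj(\unitObj) = \KObj(p\unitObj)\) and \(0 = \KObj(\unitObj)\). Also \(p·q + p·r = X \mapsto p(qX) \tensorObj p(rX)\) and \(p·(q+r) = X \mapsto p(qX \tensorObj rX)\). The unit cases are then immediate. A natural transformation between constant functors \(\KObj(\unitObj) ⟶ \KObj(p\unitObj)\) has, at every object, a component \(\unitObj \hom p\unitObj\). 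Naturality forces these components to agree along every morphism. Conversely, a single morphism \(\unitObj \hom p\unitObj\) gives a constant family, which is natural. If the category is nonempty (for instance, \(\unitObj\) is an object), evaluation at \(\unitObj\) together with the constant extension gives the bijection. Naturality is exactly the statement that the two maps are mutually inverse up to the connectedness of \(\category C\). If needed, we restrict to components connected to \(\unitObj\); every object is connected to it via \(ε\)-like maps \(X \hom \topObj\) and so on. The same argument, dualised, handles \(p·0 ≼ 0\).

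For the binary cases, the direction from lax to distributive is simple. Given a natural \(ψ_{A,B} : pA \tensorObj pB \hom p(A \tensorObj B)\) and functors \(q,r\), we set the component at \(X\) to be \(ψ_{qX,rX}\). Naturality in \(X\) follows from the naturality of \(ψ\) in both arguments, composed with \(q f\) and \(r f\). For the converse, we are given a family \(δ_{q,r}\) natural in \(X\) and indexed by all endofunctors \(q,r\). To recover \(ψ_{A,B}\) we instantiate \(q = \KObj(A)\) and \(r = \KObj(B)\), which gives components \(pA \tensorObj pB \hom p(A\tensorObj B)\) at any \(X\). We then evaluate at a fixed object, say \(\unitObj\). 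The oplax case is identical with the arrows reversed: we use \(δ_{q,r,X}\) at \(X \mapsto p(qX\tensorObj rX) \hom p(qX)\tensorObj p(rX)\) and again specialise to constant functors.

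The main obstacle is naturality of the recovered \(ψ\) in \(A\) and \(B\), together with the round-trip identities. The quantification ``\(∀ q r\)'' in the statement should be read as a family natural in \(q\) and \(r\) as well. In the Agda formalisation, we expect it to be a family parametric in them, so we will assume naturality in \(q\) and \(r\) with respect to natural transformations. Then a map \(f : A \hom A'\) induces a natural transformation \(\KObj(f) : \KObj(A) ⟶ \KObj(A')\), and naturality in \(q\) yields naturality of \(ψ\) in \(A\); the same holds for \(B\). For the round trip distributive \(→\) lax \(→\) distributive, we have \(δ_{q,r,X}\) versus \(δ_{\KObj(qX),\KObj(rX),\unitObj}\). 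Relating these needs naturality in \(q\) along the transformation \(\KObj(qX) ⟶ q\), which exists only pointwise. We would therefore try first to use naturality at the single component \(X\), or, failing that, to state the bijection as a logical equivalence (maps in both directions), as the Agda names \texttt{isLax}/\texttt{isDistr} suggest. The other round trip, lax \(→\) distributive \(→\) lax, holds definitionally.
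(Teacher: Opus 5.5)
Your two constructions are the right ones, and they are what the paper's formal references (\texttt{isLax}, \texttt{isDistr}, \texttt{isOplax}, \texttt{isDistr'}, two implications each way) provide. From distributive to lax, you restrict to constant functors \(\KObj(A),\KObj(B)\) and evaluate anywhere. From lax to distributive, you take \(\psi_{qX,rX}\). The nullary clauses are genuine bijections because \(\category C\) is connected through \(\topObj\), and the round trip lax \(\to\) distributive \(\to\) lax is the identity.

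However, the step you left open cannot be closed under your reading of ``(natural)'', namely natural in \(X\) and in \(q,r\) along natural transformations. So the ``single component \(X\)'' idea is doomed. Here is a counterexample. Take \(\category{Vect}\) over a field, with tensor, internal hom and direct sums. Let \(p=\mathsf{id}\) with the identity lax structure, and set \(\delta_{q,r,X} = q(0_X)\otimes \mathrm{id}_{rX}\), where \(0_X : X\to X\) is the zero map. Since \(f\circ 0_X = 0 = 0_Y\circ f\) and \(\alpha_X\circ q(0_X) = q'(0_X)\circ\alpha_X\), this family is natural in \(X\), \(q\) and \(r\). On constant functors it is the identity, so it yields the same lax structure as the family \(\mathrm{id}_{qX\otimes rX}\). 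Yet \(\delta_{\mathsf{id},\mathsf{id},X} = 0\). Under that reading, the binary clauses hold only as a logical equivalence (a retraction), which is exactly what your argument and the formal references establish.

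If you want an actual bijection, the missing idea is to require compatibility with right multiplication: \(\delta_{q\circ t,\,r\circ t,\,X} = \delta_{q,r,tX}\), i.e.\ \(\delta_{q\cdot t,\,r\cdot t} = \delta_{q,r}\cdot t\). Every family of the form \(\psi_{qX,rX}\) satisfies this, while the family above violates it at \(t=\KObj(A)\). Take \(t = \KObj(X)\) and use \(q\circ\KObj(X) = \KObj(qX)\). This gives \(\delta_{q,r,X} = \delta_{\KObj(qX),\KObj(rX),Y} = \psi_{qX,rX}\), which closes the round trip. Moreover, \(\psi_{A,B} = \delta_{\mathsf{id},\KObj(B),A} = \delta_{\KObj(A),\mathsf{id},B}\). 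Hence naturality of \(\psi\) in each argument follows from naturality in \(X\) alone, and you no longer need to postulate naturality in \(q\) and \(r\).
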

The above lemma is of practical interest to the \Lamp{} programmer,
because one can characterise \(\distributes p\) and
\(\codistributes p\) syntactically, in terms of the modality
structure, rather than semantically.

We have all pieces in place and thus can proceed with the
interpretation of \Lamp.

\begin{theorem}\label{thm:semantics1}
  For any \nameref{def:bicat} \category{C}
   and a map of base types \(\basetype \in \TyConst\) into objects of \category{C},
  there is an interpretation \(⟦\_⟧\) of
  \Lamp[{\Funct C}] into
  \(\category C\).\agdaref[Soundness of typing]{Natural.Pats.Sem1}{eval}
\end{theorem}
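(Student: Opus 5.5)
We define the interpretation by structural recursion, in three layers: types, contexts (together with patterns), and typing derivations.

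\emph{Types.} Base types $\basetype$ go to their given objects. The constants $1$, $\Ttop$ and the connectives $\Tensor$, $\With$, $\funOp$ go to $\unitObj$, $\topObj$, $\tensorObj$, $\withObj$, $\funObj$. The qualification $\BOX p A$ goes to $p\,⟦A⟧$, since in \Funct C a modality simply is an endofunctor.

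\emph{Contexts.} A context $x_1{:}\moda{p_1}A_1,\dots,x_n{:}\moda{p_n}A_n$ is interpreted as $p_1⟦A_1⟧ \tensorObj \dots \tensorObj p_n⟦A_n⟧$, with $\emptyContext$ mapped to $\unitObj$. Two facts follow immediately and are used repeatedly:
\begin{itemize}
\item Context concatenation $Γ,Δ$ is interpreted as $⟦Γ⟧\tensorObj⟦Δ⟧$ up to the associator.
\item Scaling $q·Γ$ is interpreted as the $\tensorObj$-product of the $q(p_i⟦A_i⟧)$, because $q·p$ is composition of functors.
\end{itemize}

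\emph{Patterns.} A pattern derivation $\typat{Γ}{m}{p}{A}$ is interpreted as a morphism $⟦m⟧ : p⟦A⟧ \hom ⟦Γ⟧$, by induction on the derivation:
\begin{itemize}
\item For a variable, the identity.
\item For $\pthrow$, the identity on $\unitObj$, since $0\,A=\unitObj$.
\item For $\panalyse p l q m$, the component $(p+q)A = pA\tensorObj qA$ followed by $⟦l⟧\tensorObj⟦m⟧$.
\item For $\pconv{φ}{m}$, the component $φ_{⟦A⟧}$ composed with $⟦m⟧$.
\item For $\TBOX p m$, the object $(q·p)A = q(pA)$ is literally $q(p A)$, so we take $⟦m⟧$ itself.
\item For $(l,m)$ and $()$ at modality $p$, we use the oplax structure in $\codistributes p$: first $p(A\tensorObj B)\hom pA\tensorObj pB$, then $⟦l⟧\tensorObj⟦m⟧$. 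The unit pattern uses $p\unitObj\hom\unitObj$.
\end{itemize}

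\emph{Terms.} A typing derivation $Γ⊢t:A$ is interpreted as a morphism $⟦Γ⟧\hom⟦A⟧$, by induction on the derivation:
\begin{itemize}
\item Variables are interpreted by the identity.
\item Application is $\appl\circ(⟦t⟧\tensorObj⟦u⟧)$, and the $\Tensor$-intro rule is $⟦t⟧\tensorObj⟦u⟧$.
\item The rules for $\Ttop$ and $\With$ use $ε$, $▴$ and $π_i$. Here the additive rules reuse the same $Γ$ in each premise, which matches the Cartesian structure.
\item Abstraction is $\curry(⟦t⟧\circ(\mathrm{id}\tensorObj⟦m⟧))$, composing the pattern morphism (at modality $1$, so $1A=A$) into the body.
\item $\rlet{}$ is interpreted as $⟦u⟧\circ(\mathrm{id}_{⟦Δ⟧}\tensorObj(⟦m⟧\circ⟦t⟧))$.
\end{itemize}

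\emph{The main obstacle} is \BOXintro. We need a morphism $⟦pΓ⟧\hom p⟦A⟧$ from $⟦t⟧:⟦Γ⟧\hom⟦A⟧$. When $Γ$ is a singleton $x{:}\moda{r}B$, we have $⟦pΓ⟧=p(rB)=p⟦Γ⟧$, and plain functoriality gives $p⟦t⟧$. Otherwise we use the lax structure from $\distributes p$ to build $p(X_1)\tensorObj\dots\tensorObj p(X_n)\hom p(X_1\tensorObj\dots\tensorObj X_n)$. This is done by induction on the length of $Γ$: the nullary lax map $\unitObj\hom p\unitObj$ handles the empty context, and the binary map is iterated for longer ones. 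We then post-compose with $p⟦t⟧$.

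The care needed here is bookkeeping only. The right-nested tensor of contexts must match the iteration of the binary lax map. The singleton case must avoid needing laxity. And associators and symmetries from the monoidal structure must be inserted so that context splittings $Γ,Δ$ line up with the tensor. Since the statement asks only for existence of an interpretation, not equations between interpretations, none of the coherence laws of (op)lax functors are needed. Well-definedness is immediate by structural recursion on derivations.
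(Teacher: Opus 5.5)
Your proposal is correct and is essentially the paper's proof. Contexts go to $\otimes$-products of the $p_i$ applied to the interpreted types, and a pattern at modality $p$ becomes a morphism out of $p$ applied to the interpreted type; the zero, sum, conversion and box patterns then follow definitionally from the endofunctor modality structure, and the only real work is the lax witness gathering a non-singleton scaled context under $p$ in box introduction and the oplax witness for the pair and unit patterns.
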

\begin{proof}
  The product \(\tensorObj\) is used to interpret \(\Tensor\) types,
  but also context concatenation.
  As usual, every type \(A\) is
  interpreted as an object \(⟦A⟧\) and every context \(Γ\) is
  interpreted as an object \(⟦Γ⟧\), relying on the monoidal structure
  to interpret concatenation of contexts:
  \(⟦Γ,Δ⟧ = ⟦Γ⟧ \tensorObj ⟦Δ⟧\) and \(⟦\emptyContext⟧ =
  1\). Modalities being endofunctors, one can interpret qualified
  types as \(⟦\BOX p A⟧ = p ⟦A⟧\) and modalities in contexts as
  \(⟦x : \moda p A⟧ = p ⟦A⟧\). The judgement \(Γ ⊢ t : A\) is
  interpreted as a morphism \(⟦Γ⟧ \hom ⟦A⟧\), and a pattern \(\typat {Γ} m p A\) is interpreted as a morphism \(p ⟦A⟧ \hom ⟦Γ⟧\).
  The object \(A \funObj B\) is used to interpret
  function types; and \(\withObj\) product objects interpret \(\With\)
  product types.

  The choice of the functorial interpretation of the monoid \(0,+\)  is just
  the one needed to make the interpretation of \(\pthrow\) and
  \(\panalyse p l q m\). The pattern \(\pconv {φ} m\) is interpreted to
  \(⟦m⟧ ∘ φ\).

  The interpretation of \(\BOXintro\) relies on using the functorial
  action of \(p\) on the interpretation of the premise. Additionally,
  the interpretations of the
  hypotheses \((p⟦A₁⟧ \tensorObj ⋯ \tensorObj p⟦Aₙ⟧)\) must be grouped
  together as \(p (⟦A₁⟧ \tensorObj ⋯ \tensorObj ⟦Aₙ⟧)\). This can be done
  thanks to \(p\) being lax, \textit{via} the
  condition \(\distributes p\) when \(Γ\) isn't a singleton; otherwise no operation is required.

  Additionally, pattern matching a \(p\)-qualified pair (resp. unit)
  is possible under the condition that \(p\) is oplax, and this is
  given exactly by the side conditions.
\end{proof}
We can summarise the semantics as follows. First, the semantics of terms is given by:
\begin{align*}
  ⟦ x ⟧ & = id \\
  ⟦ \lam m t ⟧ & = \curry {⟦t⟧ ∘ (⟦m⟧ ⊗ id)} \\
  ⟦ t u ⟧ & = \appl ∘ (⟦t⟧ ⊗ ⟦u⟧) \\
  ⟦ () ⟧ & = id \\
  ⟦ (t,u) ⟧ & = ⟦t⟧ ⊗ ⟦u⟧ \\
  ⟦ \ttop{Γ} ⟧ & = ε \\
  ⟦ \twith t u ⟧ & = ⟦t⟧ ▴ ⟦u⟧ \\
  ⟦ \tout i t ⟧ & = πᵢ ∘ ⟦t⟧ \\
  ⟦ \TBOX p t ⟧ & = p ⟦t⟧ ∘ ψ_{\lvert Γ \rvert} & \text {Where \(Γ\) is the context of the term.} \\
  ⟦ \wlet {m=t} u ⟧ & = ⟦u⟧ ∘ ((⟦m⟧ ∘ ⟦t⟧) ⊗ id)
\end{align*}
The (silent) context re-orderings are interpreted using the monoidal structure, and
the semantics of patterns is the following:
\begin{align*}
  ⟦ x ⟧ & = id \\
  ⟦\pconv {φ} m ⟧ & = ⟦m⟧ ∘ φ \\
  ⟦\panalyse p l q m⟧ & = ⟦l⟧ ⊗ ⟦m⟧ \\
  ⟦\pthrow⟧ & = id \\
  ⟦(l,m)⟧ & = (⟦l⟧ ⊗ ⟦m⟧) ∘ \bar {ψ}_2 \\
  ⟦()⟧ & = \bar {ψ}_0 \\
  ⟦\TBOX p m⟧ & = ⟦m⟧
\end{align*}
In the above if we let \(ψ₀ : \unitObj \hom F \unitObj\) and
\(ψ₂ : F A \tensorObj F B \hom F (A \tensorObj B)\) be the morphisms
witnessing the laxity of functors, and in general \(ψ_n\) collects
\(n\) objects under the functor. Then \(ψ₁\) is the identity. The
notation \(\bar {ψ}_n\) refers to oplaxity witnesses.

The semantics of modalities in terms of endofunctors is quite general:
one can pick any \nameref{def:bicat}, and then any endofunctor is available as a
modality. This is useful, but this level of generality restricts the
(meta)theoretical results that one can obtain about the
calculus. Therefore, most actual applications do not to expose
arbitrary functors as modalities, but only a substructure. Yet, our guiding
idea is that \emph{every useful modality structure can be mapped to
  the functorial one}. This idea is illustrated throughout the paper with a number of examples.
To formalise it, we first need the notion of
homomorphism between modality structures.

\begin{definition}[modality functor]\label[definition]{def:mod-interp}\agdaref[modality functor]{HomoSem}{ModHomo}
  A modality functor \(\mathbf h : \structure F → \structure G\) is a
  functor between the underlying category structures of
  \(\structure F\) and \(\structure G\) which
  \begin{itemize}
  \item weakly respects the \((+,0)\) monoid (oplax functor wrt. \((+,0)\))
  \item strongly respects the \((·,1)\) monoid (strong monoidal functor wrt. \((·,1)\))
  \item respects the \(\distributive\) and \(\codistributive\) families
  \end{itemize}
This implies that we have the following natural morphisms, natural isomorphisms and functions:
\[
 \begin{array}[t]{r@{\;}l}
      \mathbf h (p + q) &≼ \mathbf h (p) + \mathbf h (q) \\
      \mathbf h (0) &≼ 0 \\
    \end{array}
    \hspace{1cm}
    \begin{array}[t]{r@{\;}l}
      \mathbf h (p · q) &≅ \mathbf h (p) · \mathbf h (q) \\
      \mathbf h (1) &≅ 1 \\
    \end{array}
    \hspace{1cm}
    \begin{array}[t]{r@{\;}l}
      \distributes p & → \distributes {\mathbf h(p)} \\
      \codistributes p & → \codistributes {\mathbf h(p)}
    \end{array}
\]  Note that the operations and conditions on the right-hand sides refer to the modality structure {\structure G}.


\end{definition}

\begin{remark}\label{thm:syntax-mapping}\agdaref[Language morphism]{Natural.Pats.Map}{}
  Any \nameref{def:mod-interp} \(\mathbf h : \structure F → \structure G\) gives
  rise to a map from \Lamp[{F}] to
  \Lamp[{G}]. This map applies to
  types, contexts, terms and patterns; and preserves typing.
\end{remark}

\begin{corollary}[categorical semantics]\agdaref[Natural semantics]{Natural.Pats.Sem}{eval}\label{thm:semantics}
  Given a \nameref{def:mod-interp} \(\structure F → \Funct C\) and a map \(g : \TyConst → \mathsf{Obj}(\category C)\)
  we can interpret \Lamp[F] in a \nameref{def:bicat} \(\category C\).
\end{corollary}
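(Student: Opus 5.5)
The plan is to compose the two results just established: first apply \cref{thm:syntax-mapping} to the given modality functor \(\mathbf h : \structure F → \Funct C\), and then apply \cref{thm:semantics1} to the result.

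By \cref{thm:syntax-mapping}, \(\mathbf h\) yields a typing-preserving translation of \Lamp[F] into \Lamp[{\Funct C}]. On types it replaces each \(\BOX p A\) by \(\BOX {\mathbf h(p)} A\) and leaves base types untouched. On contexts it sends \(x : \moda p A\) to \(x : \moda {\mathbf h(p)} A\). On terms and patterns it acts homomorphically, with one adjustment at each rule that mentions the modality operations. In \(\TBOX p t\), the context \(pΓ\) translates to bindings graded by \(\mathbf h(p·q)\), which are converted to \(\mathbf h(p)·\mathbf h(q)\) using the strong monoidality of \(\mathbf h\) for \((·,1)\). The witness \(\distributes p\) is transported to \(\distributes{\mathbf h(p)}\). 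The patterns \(\pthrow\) and \(\panalyse p l q m\) are preceded by conversion patterns \(\pconv{φ}{-}\), where \(φ\) is \(\mathbf h(0) ≼ 0\) or \(\mathbf h(p+q) ≼ \mathbf h(p)+\mathbf h(q)\). The pattern \(\TBOX p m\) uses \(\mathbf h(qp) ≅ \mathbf h(q)\mathbf h(p)\). Finally, \(\pconv{φ} m\) maps to \(\pconv{\mathbf h(φ)}{m}\).

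Next, \cref{thm:semantics1}, instantiated with the base-type map \(g\), interprets the image in \(\category C\). Composing the two gives \(⟦\_⟧\) for \Lamp[F]: a type \(A\) goes to \(⟦\mathbf h(A)⟧\), and a derivation \(Γ ⊢ t : A\) goes to a morphism \(⟦\mathbf h(Γ)⟧ \hom ⟦\mathbf h(A)⟧\).

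The main obstacle is the direction of the \((+,0)\) comparison maps, which must match the variance of patterns. A pattern \(\typat Γ m p A\) denotes a morphism \(p⟦A⟧ \hom ⟦Γ⟧\). To interpret \(\panalyse p l q m\) at grade \(\mathbf h(p+q)\), we therefore need a map \(\mathbf h(p+q)⟦A⟧ \hom \mathbf h(p)⟦A⟧ \tensorObj \mathbf h(q)⟦A⟧\). This is exactly the oplax structure, and likewise \(\mathbf h(0) ≼ 0\) serves \(\pthrow\). By contrast, a lax comparison would not suffice, and this explains why the definition of modality functor asks for oplaxity in \((+,0)\) only.

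The other delicate point is the context scaling \(pΓ\) in \BOXintro. Under \(\mathbf h\) it must still have the shape \(\mathbf h(p)\,\mathbf h(Γ)\), so that the functorial action and \(ψ_n\) from \cref{thm:semantics1} apply. Here the isomorphism is needed rather than a one-way morphism, because context entries occur contravariantly in \(⟦Γ⟧ \hom ⟦A⟧\). I would dispatch this by a pointwise induction on \(Γ\) that inserts the isomorphisms \(\mathbf h(p·q) ≅ \mathbf h(p)·\mathbf h(q)\), relying on the fact that silent reorderings are already interpreted by the monoidal structure.
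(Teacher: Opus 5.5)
Your proposal is correct and follows the paper's own argument: translate \Lamp[F] into \Lamp[{\Funct C}] along $\mathbf h$ using the syntax-mapping remark, then interpret with \cref{thm:semantics1}; your rule-by-rule account of the translation supplies detail the paper leaves implicit. One small correction: at box introduction only $\mathbf h(p\cdot q) \preceq \mathbf h(p)\cdot\mathbf h(q)$ is used, because contravariance fixes a direction but does not force an isomorphism, while the box pattern needs the converse, abstraction and let need $1 \preceq \mathbf h(1)$, and variables need $\mathbf h(1) \preceq 1$ --- so strong monoidality is needed across the whole translation, not at that single rule.
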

\begin{proof}
  First we apply the modality functor on the terms of the calculus. Then we interpret terms using \cref{thm:semantics1}.
\end{proof}
It can be enlightening to consider what a homomorphism
\(\structure F → \Funct C\) entails. Note that there are two levels of functors here: endofunctors in \category C, and a functor from \structure F to those. So, we have a map \(h\) from
\(\modSet F\) to \category {C}-endofunctors, which respects
several conditions. First, we must have the following natural
transformations:
\[  \begin{array}[t]{r@{\;}l}
    h(0) &⟶ \KObj(\unitObj) \\
    h(p+q) &⟶ h(p) \tensorObj h(q) \\
\end{array}
\hspace{2cm}
  \begin{array}[t]{r@{\;}l}
    h (1) &⟷ id \\
    h (p · q) &⟷ h (p) ∘ h (q) \\
  \end{array}
\]
Second, from \(\distributes p\) (resp. \(\codistributes p\)) one must be able to construct a lax (resp. oplax) structure for \(h(p)\)
The rest of the paper
provides several applications, in the form of specialised
modality structures, and their morphisms to \(\Funct C\).

\subsection{Example: Forgetful Modality Functor}\label{ex:forget}
\agdacomment{Subsection: Forgetful Modality Functor}
\agdacomment{-------------------------------------------}
\agdablank

A simple but important \nameref{def:mod-interp} is the one which forgets
all modalities.
Consider the trivial modality structure \(\structure I\), given by chosing
\(M := \braces{◆}\) (a set with a single element, implying
\(0 = 1 = ◆\)). 
Then, \Lamp[L] is a usual lambda calculus with cartesian product types. In this
situation, the \nameref{thm:semantics} reduces to the usual
semantics of the lambda calculus in terms of cartesian-closed
categories.

\begin{theorem}[Semantics of {\Lamp[I]}]\agdaref[Forgetful modality functor]{Sem.Forgetful}{forgetful}
  If\/ \(\category C\) has cartesian products, there is a \nameref{def:mod-interp}
  \(\structure I → \Funct C\), given by mapping \(◆\) to the identity
  functor (\(h(◆) = id\)).  The additive laws (\(h (p + q) ≼ h (p) + h (q)\) and \(h (0) ≼ 0\))
   are satisfied, thanks to an
  argument similar to \cref{thm:cartesian-informational}.  The rest of the
  conditions demanded by \cref{def:mod-interp} are trivially
  satisfied.
\end{theorem}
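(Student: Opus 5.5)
We need a modality functor \(\mathbf h : \structure I \to \Funct C\) with \(h(◆) = \mathsf{id}\). The plan is to check each clause of \cref{def:mod-interp} in turn. The only substantive point is that the category \(\category C\) here is one where \(\tensorObj\) is cartesian, i.e.\ \(\tensorObj\) coincides with \(\withObj\) and \(\unitObj\) with \(\topObj\); that hypothesis is what we use.

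\emph{Functor part.} The underlying category of \(\structure I\) has one object \(◆\). We take its homset \(◆ ≼ ◆\) to be trivial, containing only the identity, or more generally any monoid that we send to the identity transformation. We map the object to \(\mathsf{id}_{\category C}\) and every morphism to the identity natural transformation, and functoriality is immediate.

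\emph{Multiplicative structure.} Since \(◆ · ◆ = ◆\) and \(1 = ◆\), we need natural isomorphisms \(h(◆·◆) = \mathsf{id} \cong \mathsf{id}\circ\mathsf{id}\) and \(h(1) = \mathsf{id} \cong \mathsf{id}\). Both are identities, and the strong monoidal coherence laws hold trivially because every component is an identity.

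\emph{Additive structure.} This is the key step. We must give natural transformations \(h(◆+◆) = \mathsf{id} \longrightarrow \mathsf{id}\tensorObj\mathsf{id}\), with components \(A \hom A \tensorObj A\), and \(h(0) = \mathsf{id} \longrightarrow \KObj(\unitObj)\), with components \(A \hom \unitObj\).
\begin{itemize}
\item With \(\tensorObj\) cartesian, the first is the diagonal \(\mathsf{id} ▴ \mathsf{id}\) and the second is the terminal map \(ε\). This is the argument alluded to via \cref{thm:cartesian-informational}.
\item Naturality of both follows from the universal property of products: for \(f : A \hom B\) we have \((f\tensorObj f)\circ(\mathsf{id}▴\mathsf{id}) = f▴f = (\mathsf{id}▴\mathsf{id})\circ f\), and \(ε\circ f = ε\).
\item The oplax coherence conditions (coassociativity and counitality of the diagonal) follow from the same universal property, since any two maps into a product agreeing on projections are equal.
\end{itemize}

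\emph{Distributivity families.} For \(\distributes ◆\) and \(\codistributes ◆\), whatever witnesses \(\structure I\) supplies, \(\mathsf{id}\) is canonically lax and oplax, with all structure maps identities. So we map every witness to that canonical structure. This also sends the distinguished elements for \(1\) to the distinguished ones in \(\Funct C\).

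I expect the additive clause to be the only real obstacle. It is precisely where cartesianness is needed: in a general \ourcategory{} no diagonal \(A \hom A\tensorObj A\) or discard map \(A \hom \unitObj\) exists. Once those maps are in hand, everything else is bookkeeping.
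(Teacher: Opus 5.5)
Your proposal is correct and follows the same route as the paper. The only substantive clause is the oplax structure for $(+,0)$, which you realise, as \cref{thm:cartesian-informational} suggests, by the diagonal $A \to A \tensorObj A$ and the discard map $A \to \unitObj$ once $\tensorObj$ is cartesian; the multiplicative isomorphisms and the (co)distributivity witnesses are all identities, and your reading of ``has cartesian products'' as the coincidence of $(\tensorObj,\unitObj)$ with $(\withObj,\topObj)$ is exactly the hypothesis that remark relies on.
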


\begin{remark} 
  For every \structure{F}, there is a homomorphism
  \(\structure F → \structure I\). Consequently, one can erase modalities
  from any \Lamp{} instance, and obtain a usual lambda calculus.
\end{remark}

\subsection{Application: Arrays}
\agdacomment{Subsection: Application: Arrays}
\agdacomment{-------------------------------}
\agdablank
A first non-trivial application is arrays with known sizes (natural numbers). We
let modalities be the sizes of arrays, such that \(\BOX p A\) is interpreted as an array of A of size \(p\).  In this application the \ref{principle:product},
\(\BOX p {(\BOX q A)} \funOpBidir \BOX {p·q} A\) corresponds to flattening and
unflattening of arrays.
\(\BOX {p+q} A \funOp (\BOX p A \Tensor \BOX q A)\) corresponds to array split.
In general, bare \Lamp{} is not sufficient to program all useful array programs. Primitives such as \(sum : \BOX{\mathit{p}}\mathit{ℕ}⊸\mathit{ℕ}\) or
concatenation are also needed. (For this last case, see also \cref{sec:merge}.)
However, \BOXintro{} and the corresponding pattern lets us map, zip and unzip arrays conveniently.
Here is for instance the dot product of two \(ℕ\) arrays of length \(p\), given \(mul : ℕ ⊸ ℕ ⊸ ℕ\):
\begin{align*}
  dot &: \BOX{\mathit{p}}\mathit{ℕ}⊸\BOX{\mathit{p}}\mathit{ℕ}⊸\mathit{ℕ} \\
  dot &= \lam{\TBOX{\mathit{p}}{\mathit{x}}}{\lam{\TBOX{\mathit{p}}{\mathit{y}}}{\mathit{sum}\app{\TBOX{\mathit{p}}{\mathit{mul}\app{\mathit{x}}\app{\mathit{y}}}}}}
\end{align*}

\paragraph{Semantics}
Thanks to the groundwork layed out above, to provide semantics we only
need to specify the modality structure \(\structure N\) and a \nameref{def:mod-interp} \(\structure N → \Funct C\).
\begin{definition}[\structure N]\agdaref[Array modalities]{Sem.Arrays}{modalities}
  We can enthuse the natural numbers with a modality structure by
  leveraging the usual the additive and multiplicative structures. The
  convertibility relation \((≼)\) is just equality. We assert that all
  arrays sizes can be (un)zipped by letting \(\distributes p\), \(\codistributes p\) be a unit set.
\[  \begin{array}[t]{c@{\hspace{2cm}}c@{\hspace{2cm}}c}
    \modSet N ≔ ℕ &
    p ≼ q ≔ p =_ℕ q &
    \distributes p, \codistributes p := \braces {◆}
  \end{array}
\]
\end{definition}
\begin{definition}\agdaref[Array definition]{Sem.Arrays}{copies}
  Let \(\mathbf{a}(n)\) be an endofunctor which maps \(A\) to the \(n\)-ary product of \(A\):
  \[\mathbf{a}(n)(A) = \underbrace{A \tensorObj ⋯ \tensorObj A}_{\text{\(n\) times}}\]
\end{definition}
\begin{theorem}
   \(\mathbf {a}\) is a \nameref{def:mod-interp} \(\structure N → \Funct C\).
   \agdaref[Array modality functor]{Sem.Arrays}{homo}
 \end{theorem}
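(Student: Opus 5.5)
We check the clauses of \cref{def:mod-interp} one by one for \(\mathbf a\), working in the strict-monoidal presentation of \(\category C\) via Mac~Lane coherence. Write \(A^{⊗n}\) for \(\mathbf a(n)(A)\), with \(A^{⊗0} = \unitObj\) and \(A^{⊗(n+1)} = A \tensorObj A^{⊗n}\).

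\textbf{Step 1: functoriality.} On morphisms set \(\mathbf a(n)(f) = f^{⊗n}\). This is an endofunctor because \(\tensorObj\) is a bifunctor. The source category \(\structure N\) is discrete (\(p ≼ q\) is equality), so its only morphisms are identities, and mapping them to identity natural transformations is trivially functorial.

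\textbf{Step 2: the monoid laws.} For \((+,0)\) we only need natural transformations \(\mathbf a(0) ⟶ \KObj(\unitObj)\) and \(\mathbf a(p+q) ⟶ \mathbf a(p) \tensorObj \mathbf a(q)\). The first is the identity, since \(A^{⊗0} = \unitObj\). The second is the associator \(A^{⊗(p+q)} ≅ A^{⊗p} \tensorObj A^{⊗q}\), obtained by induction on \(p\). Naturality in \(A\) follows from naturality of the associator. The oplax coherence laws reduce to associator and unitor coherence.

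For \((·,1)\) we need natural isomorphisms \(\mathbf a(1) ≅ id\), which is the right unitor \(A \tensorObj \unitObj ≅ A\), and \(\mathbf a(p·q) ≅ \mathbf a(p) ∘ \mathbf a(q)\), i.e.\ \(A^{⊗(pq)} ≅ (A^{⊗q})^{⊗p}\). The latter is built by induction on \(p\) using \(A^{⊗(q + (p-1)q)} ≅ A^{⊗q} \tensorObj A^{⊗((p-1)q)}\) from the previous paragraph. Strong monoidality coherence again follows from coherence of the monoidal structure.

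\textbf{Step 3: distributivity witnesses.} For each \(n\) we must exhibit lax and oplax structures on \(\mathbf a(n)\). The lax unit is \(\unitObj ≅ \unitObj^{⊗n}\), via unitors. The lax binary map is
\[
A^{⊗n} \tensorObj B^{⊗n} \hom (A \tensorObj B)^{⊗n},
\]
the interleaving (``zip'') isomorphism assembled from symmetries and associators by induction on \(n\). The oplax maps are the inverses of these. Naturality holds because symmetry and associativity are natural.

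\textbf{Main obstacle.} The paper treats the lax/oplax coherence laws as orthogonal to the development, so the substantive work lies elsewhere: checking that the isomorphisms in Step 2 are coherent with each other, in particular that the product isomorphism is compatible with the monoidal-functor associativity and unit laws. I would handle this by appealing to the coherence theorem for symmetric monoidal categories: every diagram involved consists solely of structural isomorphisms between tensor powers of a fixed object, so each such diagram commutes. In the Agda development this becomes induction on \(p\) with associator rewriting.
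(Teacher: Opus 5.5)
Your proposal is correct and follows the same route as the paper. The paper's proof simply states that every clause (in particular laxity and oplaxity of \(\mathbf a(n)\)) is either immediate or proved by induction on \(n\), which is what you do, using associators and unitors for the \((+,0)\) and \((\cdot,1)\) structure and a symmetry-based interleaving for the (op)lax maps. One small correction: your last paragraph should appeal to Mac~Lane's non-symmetric coherence theorem, which suffices because your Step~2 isomorphisms use only associators and unitors, whereas structural diagrams on tensor powers of a single object that involve symmetries need not commute (e.g.\ the symmetry on \(A \otimes A\) is not the identity).
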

 \begin{proof}
   All the cases (in particular, that \(\mathbf{a}(n)\) is both lax
   and oplax) are either immediate or can be proved by induction on
   \(n\).
 \end{proof}
 We stress that in \Lamp, copies of data must be made explicit.  In
 this instance, this means that the management of the memory
 associated with arrays is explicit.
 This instance of \Lamp~
 does not keep track of structured array shapes (for ``tensors'').
 To support such shapes, one could make modalities
 be expressions of a language with explicit sums and
 products (\(\times\)).
 Operations like transpositions would then be
 provided as modality morphisms, such as
 \(p × q ≼ q × p\).

\subsection{Application: Sensitivity Analysis}
\agdacomment{Subsection: Application: Sensitivity Analysis}
\agdacomment{---------------------------------------------}
\agdablank

Another application is differential privacy, where one is interested
in publishing statistically anonymised data without revealing
individual secrets. For this purpose, \citet{reed_distance_2010}
present a system which ensures that if a certain amount of noise is
introduced in the inputs of a program, then at least the same amount
is present in the outputs.
In their system, every type~$A$ comes with a \emph{metric}
$d_A : A × A → \Rinf$, where $\Rinf$ denotes the set of non-negative
reals augmented with positive infinity.
Then a function $f$ from $A$ to $B$ is defined to be $c$-sensitive if
it does not increase distances by a factor greater than $c$ as
defined by the metrics for $A$ and $B$:
$d_B(f(x),f(y)) ≤_{ℝ} c · d_A(x,y)$.  Consequently, under the
assumption that distance \(0\) means equality at all types,
$0$-sensitive functions are necessarily constant.  Conversely
$∞$-sensitive functions impose no restriction on the argument.
Because of the inequality, $c$-sensitivity is subject to subsumption:
if $c' ≥_{ℝ} c$ and $f$ is $c$-sensitive then $f$ is also
$c'$-sensitive.

We can frame this system as an instance of \Lamp, and by making
\(c\)-sensitivity an appropriate family of endofunctors. We first
define the underlying category, following
\citet{amorim_semantic_2017}.

\begin{definition}\agdaref[category of metric spaces]{Sem.Sensitivity}{cat}
  Let \(\category S\) be the \nameref{def:bicat} of metric spaces and \(1\)-sensitive functions.
  To be precise, each object \(A\) is a carrier set \(\carrier{A}\) associated with a
  metric \(d_A : \carrier A × \carrier A → \Rinf \) and
  the homset \(A \hom B\) is comprised of functions \(f : \carrier{A} → \carrier{B}\) together with a guarantee that \(d_B (f (x),f (y)) ≤ d_A (x,y)\) for all \(x, y \in \carrier{A}\).
  The main non-trivial choice is that of which metric to assign to function spaces.
  Following \citeauthor{reed_distance_2010},
  we let the distance between two functions
  \(f₁, f₂ : A → B\) be
  \(\max \braces{d_B(f_1(x),f_2(x)) ∣ x ∈ \carrier A}\).
  The carrier set of both \(A \tensorObj B\) and \(A \withObj B\) is
  the cartesian product of \(\carrier A\) and \(\carrier B\), and the associated distances are
  \begin{align*}
    d_{A \tensorObj B} (x₁,y₁) (x₂,y₂) & = d_A(x₁,x₂) + d_B(y₁,y₂) \\
    d_{A \withObj B} (x₁,y₁) (x₂,y₂) & = \max (d_A(x₁,x₂), d_B(y₁,y₂))
  \end{align*}
  For \(\withObj\) it is adequate to use the maximum (rather than the
  sum) because only one of the components of the pair can ever be
  used.
\end{definition}

\begin{definition}
  Let \(\structure S\) be the modality structure with
  $\modSet S ≔ \Rpos$, with the usual arithmetic operations. The modality structure is a preorder rather than a general category and the preorder relation
  is the opposite of the usual order on ℝ: $(≼) = (≥_{ℝ})$.  The sets
  \(\distributes p\) and \(\codistributes p\) are trivial, meaning that one
  can analyse \(\Tensor\) products and \(\BOX p\) regardless of modalities in the context.
\end{definition}

\begin{definition}[magnifying functor]
  \agdaref[Magnifying functor]{Sem.Sensitivity}{scale}
  Let \(\mathbf s(p)\) be a family of
  endofunctors over \(\category S\) which leave the underlying set
  unchanged and multiplies the metric by the factor \(p\),
  so \(d_{\mathbf s(p)(A)}(x,y) = p\cdot d_A(x,y)\).
\end{definition}
Note that \(\mathbf s(0)(A)\) ``squashes'' \(A\) by identifying all of its values.
\begin{theorem}
  \(\mathbf s\) is a \nameref{def:mod-interp} \(\structure S \to \Funct S\).
  \agdaref[Scaling is a modality functor]{Sem.Sensitivity}{homo}
\end{theorem}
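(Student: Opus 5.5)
Following \cref{def:mod-interp}, the plan is to verify each component of the definition in turn. Every component is an inequality between real-valued metrics on the \emph{same} carrier set, because \(\mathbf s(p)\) leaves carriers unchanged. Hence each required natural transformation will be the identity function on carriers. Naturality is then automatic, and the only content is checking that the identity is \(1\)-sensitive between the relevant metrics.

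\textbf{Functoriality and action on morphisms.} First I show that \(\mathbf s(p)\) is an endofunctor. For \(f : A \hom B\), take \(\mathbf s(p)(f) = f\); this is valid since \(d_B(fx,fy) ≤ d_A(x,y)\) implies \(p\,d_B(fx,fy) ≤ p\,d_A(x,y)\) for \(p ≥ 0\).

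Next, \(\mathbf s\) must act on conversions. A conversion \(p ≼ q\) means \(p ≥ q\), and the identity map \(\mathbf s(p)A \hom \mathbf s(q)A\) is non-expansive because \(q\,d ≤ p\,d\). The preorder makes the functor laws trivial.

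\textbf{Multiplicative structure.} For the strong monoidal structure with respect to \((·,1)\), I use \(d_{\mathbf s(p)(\mathbf s(q) A)} = p(q\,d_A) = (pq)\,d_A\) and \(1\cdot d_A = d_A\). These give equalities of metrics, so the identities are isomorphisms.

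\textbf{Additive structure.} For the oplax structure with respect to \((+,0)\), I need an identity map \(\mathbf s(p+q)A \hom \mathbf s(p)A \tensorObj \mathbf s(q)A\), where \(\mathbf s(p)A \tensorObj \mathbf s(q)A\) is interpreted as the pointwise monoidal product of functors. Its carrier is \(\carrier A × \carrier A\), not \(\carrier A\). The map must therefore be the diagonal \(x \mapsto (x,x)\), and it is non-expansive since \(p\,d(x,y)+q\,d(x,y) = (p+q)\,d(x,y)\).

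Similarly, \(\mathbf s(0)A \hom \unitObj\) is the unique map into the one-point space, which is non-expansive because distances in \(\unitObj\) are \(0\). I expect these to be the only steps where the map is not literally the identity, but naturality of the diagonal and of the terminal map is immediate.

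\textbf{(Co)distributivity witnesses.} Since the sets \(\distributes p\) and \(\codistributes p\) are trivial, I must give lax and oplax structures for every \(\mathbf s(p)\).

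For the binary case, I use the identity on \(\carrier A × \carrier B\). Here \(p(d_A+d_B) = p\,d_A + p\,d_B\), so it is an isometry in both directions.

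For the nullary case, \(\unitObj \leftrightarrow \mathbf s(p)\unitObj\) is the identity on the one-point space, and both metrics are \(0\).

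The main obstacle I anticipate is the edge case \(p = 0\) or \(∞\) together with distances equal to \(∞\), where \(0\cdot∞\) must be fixed by convention. Since \(\modSet S = \Rpos\) excludes \(∞\) as a modality, I take \(0\cdot∞ = 0\). This is consistent with \(\mathbf s(0)\) squashing, and I would check that each inequality above still holds under this convention, in particular distributivity \(p(d_A+d_B)=p\,d_A+p\,d_B\) and \((p+q)\,d = p\,d+q\,d\) when \(d=∞\).

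With all components checked, \(\mathbf s\) satisfies \cref{def:mod-interp}, and \cref{thm:semantics} then yields the interpretation of \Lamp[S] in \(\category S\).
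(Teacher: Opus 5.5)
Your proposal is correct and follows essentially the same approach as the paper. The paper gives no written proof and defers to its Agda mechanisation, which amounts to the same componentwise check: identity maps on carriers for conversions, for the multiplicative isomorphisms and for the (op)lax structures, and the diagonal and terminal maps for the oplax $(+,0)$ structure. Your concern about $\infty$ is harmless either way, because $[0,\infty]$ with $0\cdot\infty=0$ is a commutative semiring, so every metric equality you use still holds even if $\infty$ were admitted as a modality.
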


Then \Lamp[S]{} is a calculus for sensitivity.
\Citeauthor{reed_distance_2010} use logical relations to show that their
functions satisfy the intended semantics, but we do not need this
additionally machinery for \Lamp[S].
Instead, the standard \nameref{thm:semantics} provides us with the witnesses that
any \Lamp[S] function of type \(\fun {\BOX p A} B\) inhabits the
interpretation of its type, which in turns guarantees
\(p\)-sensitivity.

In \Lamp[S], the principle \(\BOX {0} A \funOp 1\) means that
values of squashed types can be ignored, while \(\BOX {p+q} A \funOp \BOX p A \Tensor \BOX q A\)
lets one spread sensitivity across several copies of a
variable. The multiplicative and functorial principles are
unsurprising, but \ref{principle:convertibility} gives that whenever
\(p ≥_ℝ q\), we have \(\BOX p A \funOp \BOX q A\). This behaviour is
adequate to capture sensitivity as intended.



In the system of \citeauthor{reed_distance_2010}, there are native
coproducts, and one can analyse type variables of arbitrary non-zero
sensitivity levels, essentially sustaining the principle
\begin{equation*}
  \fun  {\BOX p (\Plus A B)}
  \fun {((\fun {\BOX p A} α) \With (\fun {\BOX p B} α))} α
\end{equation*}
assuming our encoding of coproducts, this is simply
\begin{equation}
  \fun  {\BOX p (\Plus A B)}
  {(\Plus {\BOX p A} {\BOX p B})} \label{eq:liberal-case-analysis}
\end{equation}

In general, \Lamp does \emph{not} admit this principle. However, sensitivity analysis is special,
because non-zero magnification commutes with object constructors:
\begin{lemma}
  Let \(p = s(x)\) for some \(x ≠ 0\). Then:
  \begin{enumerate}
  \item \(pA \withObj pB \homBidir p(A \withObj B) \)
    \agdaref[Scaling records]{Sem.Sensitivity}{scale-with}
  \item \(p(A \funObj B) \homBidir pA \funObj pB\)
    \agdaref[Scaling functions]{Sem.Sensitivity}{scale-fun}
  \end{enumerate}
\end{lemma}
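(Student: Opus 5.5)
The plan is to exhibit, in each case, the identity function on the underlying carrier as the morphism in both directions. Then the only thing to check is that it is $1$-sensitive in each direction, i.e.\ that the two metrics on the common carrier coincide. Since $\mathbf s(x)$ leaves carriers unchanged, both sides have the same carrier in each item. For (1) this is $\carrier A × \carrier B$. For (2) it is the set of $1$-sensitive functions $\carrier A → \carrier B$.

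This second carrier needs one remark. A function is $1$-sensitive from $xA$ to $xB$ iff $x\,d_B(f a, f a') ≤ x\,d_A(a,a')$. Since $x ≠ 0$ we may divide by $x$, so this holds iff $f$ is $1$-sensitive from $A$ to $B$. This division is the only place where the hypothesis $x≠0$ enters for (2).

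For (1), I will compute the metric of $p(A \withObj B)$:
\[ x · \max(d_A(a₁,a₂), d_B(b₁,b₂)) = \max(x\,d_A(a₁,a₂), x\,d_B(b₁,b₂)), \]
using that multiplication by a non-negative real is monotone and so commutes with $\max$. The right-hand side is exactly $d_{pA \withObj pB}$. The metrics are therefore equal and the identity is an isometry both ways.

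For (2), I will likewise compute:
\[ d_{p(A\funObj B)}(f₁,f₂) = x·\max_a d_B(f₁ a, f₂ a), \qquad d_{pA\funObj pB}(f₁,f₂) = \max_a x\,d_B(f₁ a,f₂ a). \]
These agree by the same commutation of scaling with $\max$ (or $\sup$), with the convention $x·∞ = ∞$ on $\Rinf$. In both items the distance comparison is then an equality, so the identity maps are $1$-sensitive. They are inverse to each other, which gives the bimaps.

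The main obstacle is the carrier issue in (2). For $x = 0$, $pA \funObj pB$ contains only constant-up-to-squashing maps, so the two carriers differ; the proof must therefore invoke $x≠0$ explicitly when identifying the carriers, before comparing metrics.
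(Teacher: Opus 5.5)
Your proof is correct and follows the argument behind the paper's result: the identity on carriers is a morphism in both directions because multiplying by $x>0$ commutes with $\max$ and $\sup$, and $x\neq 0$ is needed only to show that the two function spaces have the same carrier (the paper gives no written proof beyond the Agda lemmas \texttt{scale-with} and \texttt{scale-fun}). One correction to your closing aside: at $x=0$ the condition $0\cdot d_B(fa,fa')\le 0\cdot d_A(a,a')$ always holds, so $pA \funObj pB$ contains \emph{all} functions from the carrier of $A$ to that of $B$; the carriers differ because this one is larger, not because it is restricted to ``constant'' maps.
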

\begin{corollary}
  Let \(p = \mathbf s(x)\) for some \(x ≠ 0\).
  Then we have a morphism \(p(\Plus[X] A B) \hom \Plus [pX] {pA} {pB}\).
\end{corollary}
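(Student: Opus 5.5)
Recall that \(\Plus[X] A B = \fun {((\fun A X) \With (\fun B X))} X\), so its interpretation is the object \(((A \funObj X) \withObj (B \funObj X)) \funObj X\), and similarly \(\Plus[pX]{pA}{pB}\) is \(((pA \funObj pX) \withObj (pB \funObj pX)) \funObj pX\). The plan is to push \(p\) inwards through every constructor using the two items of the preceding lemma. Each use of the lemma is an isomorphism natural in its arguments, and composing them yields the required morphism, in fact an isomorphism.

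The chain of morphisms is as follows.
\begin{align*}
  p\big(((A \funObj X) \withObj (B \funObj X)) \funObj X\big)
  &\hom p\big((A \funObj X) \withObj (B \funObj X)\big) \funObj pX \\
  &\hom \big(p(A \funObj X) \withObj p(B \funObj X)\big) \funObj pX \\
  &\hom \big((pA \funObj pX) \withObj (pB \funObj pX)\big) \funObj pX .
\end{align*}
The first step is the forward direction of item (2), with \(A := (A \funObj X) \withObj (B \funObj X)\) and \(B := X\).

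The second and third steps act on the domain of an internal hom, which is contravariant. There we need morphisms going the \emph{other} way, namely
\[
  (pA \funObj pX) \withObj (pB \funObj pX) \hom p(A \funObj X) \withObj p(B \funObj X) \hom p\big((A \funObj X) \withObj (B \funObj X)\big).
\]
The first of these is the pairing of the backward directions of item (2), so it needs \(pA \funObj pX \hom p(A \funObj X)\) componentwise, together with functoriality of \(\withObj\). The second is the forward direction of item (1). Precomposition by this composite, using the contravariant functoriality of \(- \funObj pX\) (built from \(\curry\) and \(\appl\)), gives steps two and three.

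The main obstacle is therefore that we genuinely need both directions of item (2), because of the contravariant position. This is exactly why the statement requires \(x \neq 0\): for \(x = 0\) the squashed \(pA \funObj pX\) need not map back into \(p(A \funObj X)\). Since the lemma gives bidirectional morphisms (\(\homBidir\)), the backward directions are available and no further work is needed.

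If one wants to double-check concretely in \(\category S\), all maps are identities on carriers. The only content is the distance inequalities, which reduce to \(\max\) commuting with multiplication by \(x>0\) and the sup metric on functions scaling linearly. For the corollary itself, however, composing the lemma's morphisms suffices.
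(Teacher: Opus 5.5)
Your proof is correct and matches the intended argument; the paper gives no separate proof and treats this as an immediate consequence of the preceding lemma. You apply the forward direction of item (2) to the outer arrow, then precompose in the contravariant domain with the pairing of the backward directions of (2) followed by the forward direction of (1), and you handle the contravariance correctly. Your extra claim that the composite is a natural isomorphism goes beyond what the lemma states, since it only asserts morphisms in both directions, but the corollary does not need it.
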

Because we have not integrated polymorphism in \Lamp{}, we cannot
derive (\ref{eq:liberal-case-analysis}), but we can approximate it as in
the above corollary. If we had polymorphism, we could see that \(p\) also commutes
with universal quantification, and obtain
\[
       p(∀α. ((A \withObj B) \funObj α) \funObj α)
  \hom ∀α. ((pA \withObj pB) \funObj pα) \funObj pα.
\]
Because \(x ≠ 0\), there is an inverse magnifying functor \(q = \mathbf s(x^{-1})\),
and substituting \(α\) by \(qα\) provides a semantic justification
for (\ref{eq:liberal-case-analysis}) in the case of sensitivity analysis.

\subsection{Monadic Effects}\label{sec:monad}
\agdacomment{Subsection: Monadic Effects}
\agdacomment{---------------------------}
\agdablank
Because \Lamp[\Funct{C}] supports arbitrary endofunctors as
modalities, it can also support monads as such, and consequently it
can be used as a calculus for effects. In this section we consider how
using it in this way plays out. The monadic structure can be
captured directly in terms of conditions imposed on modalities:
\begin{definition}\label[definition]{def:monadic-modality}~
  \begin{itemize}
  \item A modality \(p\) is a monoid iff we have \(η : 1 ≼ p\) and \(μ : p·p ≼ p\) with the usual laws.
    \item A modality \(p\) is an effect if it is a monoid and we have \distributes p.
  \end{itemize}
\end{definition}
In usual functional programming languages, such monads (monoids in the
category of endofunctors) are strong, and thus are
also automatically lax monoidal functors---aka.\ applicative functors
or idioms~\citep{mcbride_applicative_2007}.  This lets one sequence
effects.

In our setting of monoidal categories, monads are rarely strong. This is why we require
\(\distributes p\) for any effect modalities, i.e., demand them to be lax functors at
the semantic level. (Laxity is just what is needed to make the monads strong.~\cite{kock1972strong}) There are some important monads in non-cartesian
categories. For instance, linear state \(S \hom (S \tensorObj -)\) is
a monad in every monoidal category, but not a strong one.\agdaref[State monad]{Cats}[BiCat.Endo]{state}

This sets up the familiar monadic situation: qualification by \(p\) means that \(p\)-effects are
associated with a value. As usual, to use a \(p\)-qualified
variable we must account for the effects associated with it, typically
by forwarding them to another \(p\)-qualified value; and this is
ensured by \Lamp{} rules.

Sequencing effects is achieved syntactically by \(\BOXintro\).
Its semantics is given by combining effects in the context,
using the fact that \(p\) interprets to a lax functor; thanks to the side condition.  Indeed the
associated morphism \(pA \tensorObj pB \hom p(A\tensorObj B)\)
sequences the effects associated with \(A\) and \(B\) into effects
associated with \(A\tensorObj B\).  Thus the \BOXintro{} rule is
similar in spirit to the idiom brackets of
\citet{mcbride_applicative_2007}, but the ordering of effects is not
determined by the box contents but rather the order of variables in
the context \(Γ\). While unusual, this strategy will not cause surprises to the programmer, because the semantics only use
\(A \tensorObj B \hom B \tensorObj A\) when the program explicitly
demands variable reordering. At a syntactic level, a substitution
procedure would never reorder variables in the
context.

\begin{remark}\label{rk:effects-are-relevant}
  Here modalities must form a proper category rather than just a
  poset, since \(p ≼ q\) contains computational content.  For
  instance, we need to distinguish the two principal ways to form the
  natural transformation \( p∘p∘p ⟶ p∘p \), which are $\mu$ and
  $p(\mu)$.  In particular for the list monad, where \(\mu\) is the
  list concatenation, those produce different results: the first outer
  concatenation, and the second inner concatenation.
\end{remark}

While it is clear that we have all the components necessary to
properly support monadic effects, how convenient it is to use
depends on the actual surface syntax. With just \Lamp, the
monadic bind is noisy, and explicitly references \(η\) and \(μ\):\agdaref[Monadic bind]{Natural.Pats.Examples}{bind}
\begin{align*}
  bind & : \BOX p A \funOp (A \funOp \BOX p B) \funOp \BOX p B \\
  bind & = \lam{\TBOX{\mathit{p}}{\mathit{x}}}{\lam{(\pconv{\mathit{η}}{\mathit{f}})}{\tlet{\TBOX{\mathit{p}}{\TBOX{\mathit{p}}{\pconv{\mathit{μ}}{\mathit{y}}}}}{\TBOX{\mathit{p}}{\mathit{f}\app{\mathit{x}}}}{\TBOX{\mathit{p}}{\mathit{y}}}}}
\end{align*}
The existence of \(bind\) means that \Lamp can be extended with a Haskell-like
do-notation. But one can also consider programming using \Lamp
syntax directly. In first approximation, one would replace each monadic bind
by a \(\ttlet\) binding like the above. For instance:
\[
\begin{array}{ll}
  \mathsf{do}
  & x₁ ← e₁ \\
  & x₂ ← e₂ \\
  & \mathsf{return} e₃
\end{array}
\quad::=\quad
\begin{array}{ll}
  & \tlet {\TBOX p {\TBOX p {\pconv {μ} {x₁}}}} {\TBOX p {e₁}}\\
  & \tlet {\TBOX p {\TBOX p {\pconv {μ} {x₂}}}} {\TBOX p {e₂}} \\
  & \TBOX p {e₃}
\end{array}
\]

\providecommand{\tget}{\ensuremath{\mathit{get}}}
\providecommand{\tput}{\ensuremath{\mathit{put}}}
\providecommand{\tinc}{\ensuremath{\mathit{inc}}}

However, there is a subtlety when effectful functions return the unit type.
For example, let us consider $p = ℕ → (ℕ × {-})$, the monad threading a linear state of type $ℕ$
and the constants $\tget : \BOX p ℕ$ and $\tput : ℕ → \BOX p 1$,
using increment $\tinc : ℕ → ℕ$ and constant $42 : ℕ$.
\agdaref[State monad: get-put example]{Natural.Pats.Examples}{get-put}
\[\tlet{\TBOX{\mathit{p}}{\mathit{y}}}{\mathit{get}}{\tlet{\TBOX{\mathit{p}}{\TBOX{\mathit{p}}{\pconv{\mathit{μ}}{\mathit{z}}}}}{\TBOX{\mathit{p}}{\mathit{put}\app{(\mathit{inc}\app{\mathit{y}})}}}{\TBOX{\mathit{p}}{\tlet{()}{\mathit{z}}{42}}}}\]
In the above, one must retain the effects of \(\mathit{put}\) in the
variable $z : \moda {p} 1$.  It must then be explicitly consumed before
returning the final value. This ensures that no effect is lost. Note
that the typesystem would not accept consuming \(u\) outside of a
\(\TBOX p {·}\) bracket, because we don't have \(\codistributes p\). Indeed, given
\(\codistributes p\) we could drop any \(p\)-effect, which is
unsound in general.





In closing, note that \(+\) is not all that useful on effect
modalities: it represents that either effect can be present (but not
both). Yet, knowing that a given effect distributes on addition of
\emph{other} modalities still corresponds to the effect being a lax
functor.

\subsection{Dual Modalities: Adjoint Functors}
\agdacomment{Subsection: Dual Modalities}
\agdacomment{------------------------------}
\agdablank

A common pattern are dual modalities ---not a surprising fact given
that they correspond to the common adjoint functor pattern. In the presence of dual
modalities, one can present some key \Lamp{} rules in a way which is
better from a type-checking point of view.

\begin{definition}
  The modality \(\bar p\) is left-dual to \(p\) iff we have
  morphisms \(ε : \bar p · p ≼ 1\) and \(η : 1 ≼ p · \bar p\) such
  that \((ε ⊗ id) ∘ (id ⊗ η) = id = (id ⊗ ε) ∘ (η ⊗ id)\).
\end{definition}

The following theorem is reminiscent of Fitch-style modalities
\citep{clouston_fitch_2018}.
However, since our modalities are not all (co)distributive,
we only recover the adjoint with side conditions:
either the context $\Gamma$ is a singleton or the modality comes with the required
(co)distributivity property.

\begin{theorem}
If \(p\) has a left-dual, then the following \BOXin\ rule
is equivalent to {\BOXintro}
and has an inverse \BOXel:
\[
  {\BOXin}\inferrule
    {\bar p Γ ⊢ t : A}
    {Γ ⊢ \TBOXalt p t : \BOX p A}
    {\distributesMaybe{\Gamma}{p}}
\qquad
\qquad
  {\BOXel}\inferrule
    {Γ ⊢ t : \BOX p A }
    {\bar p Γ ⊢ \TUNBOX p t : A}
    {\distributesMaybe \Gamma {\bar p}}
\]
\end{theorem}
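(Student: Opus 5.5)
We prove the statement by showing that each rule is admissible from the others, translating derivations syntactically inside \Lamp. Throughout we use the adjunction data \(ε : \bar p · p ≼ 1\) and \(η : 1 ≼ p · \bar p\), the \ref{principle:product}, the \ref{principle:unit}, and \ref{principle:convertibility}, all available through the pattern forms \(\TBOX q m\) and \(\pconv{φ} m\).

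\emph{Step 1: \BOXin{} from \BOXintro.} Suppose \(\bar p Γ ⊢ t : A\). Applying \BOXintro{} with modality \(p\) gives \(p\bar pΓ ⊢ \TBOX p t : \BOX p A\). The side condition is met: either \(Γ\) is a singleton, in which case \(\bar pΓ\) is too, or we are given the required element of \(\distributes p\). It remains to turn a context \(Γ\) into \(p\bar pΓ\). We do this variable by variable. For \(x : \moda q B\) in \(Γ\), we rebind \(x\) using a \(\ttlet\) whose pattern applies \(η\) through \(\pconv{η}{\cdot}\), moving \(x\) from modality \(q\) (read as \(1·q\)) to \(p·\bar p·q\). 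Context scaling is pointwise, so this is only a renaming at each binding. It is also well-defined, because \(\pconv\) acts on the binding's own modality. We define \(\TBOXalt p t\) as this composite.

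\emph{Step 2: \BOXintro{} from \BOXin.} Suppose \(Γ ⊢ t : A\), and we want \(pΓ ⊢ \TBOX p t : \BOX p A\). Applying \BOXin{} to the context \(pΓ\) requires a term in context \(\bar p p Γ\). We obtain one by rebinding each variable of \(\bar p pΓ\) along \(ε : \bar p·p ≼ 1\), which lands back in \(Γ\). The side condition only needs to be checked for the context \(pΓ\), and it is exactly the side condition of \BOXintro, since \(pΓ\) is a singleton iff \(Γ\) is.

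\emph{Step 3: \BOXel.} Suppose \(Γ ⊢ t : \BOX p A\), and we want \(\bar pΓ ⊢ \TUNBOX p t : A\). Here we use the \BOXintro{} rule with modality \(\bar p\), which is why the side condition mentions \(\distributes{\bar p}\). This gives \(\bar pΓ ⊢ \TBOX{\bar p}t : \BOX{\bar p}\BOX p A\). We then eliminate with the pattern \(\TBOX{\bar p}{\TBOX p{\pconv{ε}x}}\) at modality 1. This pattern types \(x : \moda 1 A\), because \(\TBOX{\bar p}{\TBOX p m}\) asks \(m\) to match at \(\bar p·p\), and \(\pconv{ε}\) moves that to \(1\). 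We then return \(x\).

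\emph{Step 4: inverses.} We show that \(\TUNBOX p\) and \(\TBOXalt p\) are mutually inverse. This is the main obstacle. Syntactically \Lamp{} has no equational theory that would let us conclude it directly, as the section on operational issues points out. We therefore argue in the categorical semantics (\cref{thm:semantics}) over \(\Funct C\). There the two rules correspond to the hom-set bijection \((\bar p X \hom A) ≅ (X \hom pA)\) induced by the adjunction \(\bar p \dashv p\). The unboxing is \(ε_A ∘ \bar p(f)\); the boxing is \(p(g) ∘ η_X\), pre-composed with the lax structure \(ψ\) to group the context.

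The verification is then the triangle identities stated in the definition of left-duality, together with naturality of \(ε\), \(η\) and \(ψ\). The delicate part is the coherence between \(ψ\) for \(p\) and for \(\bar p\) on non-singleton contexts. We would first handle the singleton case, where \(ψ_1 = \mathrm{id}\) and the triangle identities suffice directly. For the general case, we would use that \(\bar p\) preserves the tensor up to the given structure and that these lax maps commute with \(η\) and \(ε\). If that coherence is not automatic, we would assume it as a compatibility condition on the given distributivity witnesses.
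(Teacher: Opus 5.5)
Your Steps~1--3 are the paper's proof. The paper packages your variable-by-variable rebinding as an admissible whole-context rule \rul{conv} (from \(\varphi : p \preceq q\) and \(q\Gamma \vdash t : A\) infer \(p\Gamma \vdash \pconv{\varphi}{t} : A\)). It then defines \(\TBOXalt p t = \pconv{\eta}{\TBOX{p}{t}}\), recovers \(\TBOX{p}{t}\) as \(\TBOXalt{p}\app{(\pconv{\varepsilon}{t})}\), and sets \(\TUNBOX p t = \tlet{\TBOX{\bar{p}}{\TBOX{p}{\pconv{\varepsilon}{y}}}}{\TBOX{\bar{p}}{t}}{y}\), with the same side-condition bookkeeping as yours. ``Has an inverse \BOXel{}'' is meant in the proof-search sense: \BOXin{} read bottom-up is admissible. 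The appendix titles this ``admissibility of \BOXel{}'', and the text after the proof concludes that \BOXin{} is then safely applicable when deriving \(\Gamma \vdash \BOX p A\). So your Step~3 already completes the proof, and the paper gives no equational or semantic inverse argument.

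Step~4 is therefore not needed. As sketched, it also cannot be completed beyond singleton contexts, where your triangle-identity argument is correct. Take \(\Gamma = x_1{:}X_1, x_2{:}X_2\), identifying bindings with their interpretations. Naturality of \(\varepsilon\) shows that boxing and then unboxing sends \(g : \bar p X_1 \otimes \bar p X_2 \to A\) to \(g \circ \bigl(\varepsilon \circ \bar p\,\psi^{p} \circ \bar p(\eta \otimes \eta)\bigr) \circ \psi^{\bar p}\). The bracketed map is the oplax structure that \(\bar p\) inherits by mates as left adjoint of the lax functor \(p\). Nothing makes it a retraction of the independently chosen lax structure \(\psi^{\bar p}\) that witnesses \(\distributes{\bar p}\).

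For instance, work in \(\category{Set}\) with \(\bar p = X \times -\), \(p = X \to -\), and \(\psi^{\bar p}\) induced by a monoid structure on \(X\). There the composite is \(((x,a),(x',b)) \mapsto ((xx',a),(xx',b))\), which is not the identity. So ``\(\bar p\) preserves the tensor up to the given structure'' does not follow from the hypotheses. The compatibility condition you would assume is a genuine extra hypothesis that changes the theorem, so drop Step~4 or restrict it to singleton \(\Gamma\).
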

\begin{proof}
  Interderivability of \BOXintro[p] and \BOXin{} is mediated by whole
  context conversions. In general if \(φ : p ≼ q\) then we can apply
  the pattern \(\pconv {φ} x\) to each variable in \(x\) in the
  context. Hence the following rule, for whole-context conversion, is admissible:
  \[{\rul{conv}}\inferrule{φ : p ≼ q\\qΓ ⊢ t : A} {pΓ ⊢ \pconv {φ} t : A}\]

  Then we can define \(\TBOXalt p t\) and \(\TUNBOX p t\) as follows:
  \begin{align*}
   \TBOXalt p t & = \pconv{\mathit{η}}{\TBOX{\mathit{p}}{\mathit{t}}} & \text{\agdaref[Dual box introduction]{Natural.Pats.Examples}{box-}} \\
   \TUNBOX p t & = \tlet{\TBOX{\bar{p}}{\TBOX{\mathit{p}}{\pconv{\mathit{ε}}{\mathit{y}}}}}{\TBOX{\bar{p}}{\mathit{t}}}{\mathit{y}} & \text{\agdaref[Dual box elimination]{Natural.Pats.Examples}{invert-box-}}
  \end{align*}
The \BOXintro[p] rule can be recovered this way:
\( \TBOX p t = \TBOXalt{p}\app{(\pconv{\mathit{ε}}{\mathit{t}})}.\)\agdaref[Derived box introduction]{Natural.Pats.Examples}{box-'}
  The typing derivations are found in \cref{app:adjoints}.
\end{proof}
Assuming that we have \distributes{p}, then
the rule \BOXin{} is invertible and does not constrain the form of \(Γ\) when read backwards.
Thus, it in such a situation it is safely applicable
when deriving
\(Γ ⊢ \BOX p A\).
So, if a left-dual is available and \distributes{p},
\(\BOXin\) should be exposed to the users of the language.

Dual modalities are commonplace: every pair of adjoint
functors \(F ⊣ G\) can work as dual modalities. However, they are
not often both lax and oplax functors---and this fact is one of the
reasons why we have explicit modality classification through
\distributive\ and \codistributive.
Yet, when \(G\) is lax,
\(F\) is oplax,~\agdaref[Left adjoint is oplax]{Cats}[BiCat.Endo]{left-adjoint-oplax} and
\textit{vice-versa}.~\agdaref[Right adjoint is lax]{Cats}[BiCat.Endo]{right-adjoint-lax}

There are several ways to gain intuitive understanding for dual
modalities.
One understanding stems from the observation that the
\(\bar p\)-qualified variables from the context are inaccessible, so
they might just as well be discarded \emph{unless} their type is
\(p\)-qualified.  That is to say, multiplication of a
context by \(\bar p\) filters out variables which are not
\(p\)-qualified.

In programming, the paradigmatic example of adjoints is that of
currying and uncurrying: the functor \(\bar p = (X\tensorObj-)\) is
left-adjoint to \(p = (X \funObj -)\).~\agdaref[Tell is left adjoint to ask]{Cats}[BiCat.Endo]{tell-ask-adjoint}
The effect \(p\) corresponds to using
\(X\), and \(\bar p\) corresponds to providing \(X\).  Furthermore, if
\(X\) is a comonoid (which it will always be if \category C has
cartesian products),
then \(p\) is lax~\agdaref[Ask is lax]{Cats}[BiCat.Endo]{ask-lax}
and \(\bar p\) is oplax;~\agdaref[Tell is oplax]{Cats}[BiCat.Endo]{tell-oplax}
thus, they are made available to the rules which require \(\distributes p\)
and  \(\codistributes {\bar p}\).
%

In general, if \(p\) is associated to an effect, then \(\bar p\) is
associated to the corresponding co-effect.  Then, the \BOXel{} rule
means that, instead of dealing with effect \(p\), one can instead
demand a context which provides the coeffect \(\bar p\).

\subsection{Coeffects}\label{sec:comonad}
One can use \Lamp{} to work with comonadic co-effects instead of monadic effects. As expected, the definitions are dual:
\begin{definition}\label[definition]{def:comonadic-modality}~
  \begin{itemize}
  \item A modality \(\bar p\) is a comonoid iff it has \(ε : \bar p ≼ 1\) and \(δ : \bar p ≼ \bar p·\bar p\), satisfying the usual laws.
  \item A modality \({\bar p}\) is a co-effect if it is a comonoid and we have \(\codistributes {\bar p}\).
  \end{itemize}
\end{definition}

We can adapt our example from \cref{sec:monad} to
co-effects. Throughout this subsection, we assume \(\bar p\) to be
left adjoint to the state monad.
Then it is also a comonad and an oplax
functor.
\footnote{This left-adjoint is not the costate monad, and does not exist
  in most categories. To be able to interpret the whole program, it must
  to be (mechanically) translated back to monadic style.
  \Citet{hinze_monads_manuscript} provides a pedagogical exposition.}
Hence one can use \BOXel{} to
obtain the following co-primitives:
\(coget : \BOX{\bar{p}}1⊸\mathit{ℕ}\) and
\(coput : \BOX{\bar{p}}\mathit{ℕ}⊸1\)

In this style, values of type \(\BOX {\bar p} α\) act as a
coeffect source (or equivalently an effect sink). Each effectful
function must take one such argument.
Thus the program that increments the state and then
returns $42$ has type \(\BOX{\bar{p}}1⊸\mathit{ℕ}\) and becomes:
\[\lam{\TBOX{\bar{p}}{\pconv{\mathit{δ}}{\mathit{u}}}}{\tlet{()}{\mathit{coput}\app{\TBOX{\bar{p}}{\mathit{inc}\app{(\mathit{coget}\app{\TBOX{\bar{p}}{\mathit{u}}})}}}}{42}}\]
This construction is more direct than the monadic equivalent, suggesting that the \Lamp{} syntax is better suited to this style. Effects from several functions can be sequenced by duplicating effect sinks, which is possible because \(\codistributes {\bar p}\) (costate is oplax):
\begin{align*}
  \lam{\TBOX{\bar{p}}{\mathit{x}}}{\text{\ref{principle:oplax}}\app{\TBOX{\bar{p}}{\tlet{()}{\mathit{x}}{((),())}}}} & : \BOX{\bar{p}}1⊸\BOX{\bar{p}}1⊗\BOX{\bar{p}}1
\end{align*}
The effects will be sequenced in the left-to-right order of the
tuple. This kind of control is afforded by linearity: without it,
effect sinks could be duplicated structurally, and the sequencing order
would have to be specified by \textit{ad-hoc} syntactic means.

\agdacomment{Section: Resource Semantics}
\agdacomment{---------------------------}
\agdablank
\providecommand\Needs[2]{#2 ⊩ #1}
\providecommand\Inside[2]{#2 ∈ #1}
\providecommand\suchThat[2]{\braces { #1 ∣ #2 }}

\section{Resource Semantics (Kripke Model and Phase Semantics)}\label{sec:forgetful-model}

In this subsection, we define a semantics for \Lamp which predicts when a given resource
produces (values of) a given type, as an instance of the \nameref{thm:semantics}. This semantics combines aspects of the
phase semantics of linear logic~\citep{girard_linear_1987},
and aspects of intuitionistic Kripke semantics.
The model is parameterised over an arbitrary structure of \emph{resources}.
What we call a resource corresponds to a \emph{world} in the Kripke
semantics terminology, and a \emph{phase} in the linear logic
literature.
\begin{definition}[resource structure]\label[definition]{def:resource-frame}
  \agdaref[Resource structure]{Sem.Kripke.Worlds}{Monoid} A resource structure is a set
  \(W\) equipped with a preorder \(≤\) and a commutative monoid \((0,+)\) which
  respects the preorder.
\end{definition}
The motivating reading is that the resource 0 corresponds to no
resource at all; (+) corresponds to resource concatenation; and \(w' ≤ w\)
means that \(w'\) can be transformed to \(w\) without leftover (so
both insufficient or excessive resources may prevent \(\Needs A w\)
from holding). The preorder can also sustain \(w ≤ 0\) to model that \(w\)
can be discarded, or \(w ≤ w+w \) to model that \(w\) can be
duplicated.

We then proceed and define a relation between resources and types:
\(\Needs A w\).  In plain Kripke semantics \(\Needs A w\) means that
\(A\) holds at world \(w\), but for us \(\Needs A w\) denotes that a
value of type \(A\) can be produced from the resource \(w\).

\newcommand\downset[1]{\category {\check{#1}}  }
Instead of defining the relation \(\Needs A w\) directly we will get
it as an instance of the \nameref{thm:semantics} for types, thanks to
an adequate \nameref{def:bicat} structure, called hereafter {\downset W}. We let objects be sets
of resources, as is usually done in phase semantics, and define
\(\Needs A w ≔ w ∈ ⟦A⟧_{\downset W}\).  Additionally, resource sets will always be
closed downwards, meaning that if \(w' ≤ w\) then
\(\Needs A w → \Needs A {w'}\). This property
ensures that any valid resource transformation preserves the
\(( ⊩ A)\) predicate.

\newcommand\JJ[1]{#1 ≤ 0}
\newcommand\PP[3]{#1 ≤ #2 + #3}

\begin{definition}[category of resource sets, \(\downset W\)]
  \agdaref[Category of resource sets]{Sem.Kripke}{cat}\label[definition]{def:resource-set-cat}
  Downward-closed resource sets form a \nameref{def:bicat}.  A morphism \(A \hom_{\downset W} B\) witnesses inclusion \(A ⊆ B\),
  or equivalently \(\Needs {(A \funObj B)} 0\).  Object formers are:
\begin{align}
  {\unitObj} & = \suchThat w {w ≤ 0} \label{eq:kripke-unit}\\
  {A \tensorObj B}  & = \suchThat w {∃\Inside A {w_a}, ∃\Inside B {w_b}. ~~ w ≤ w_a + w_b} \label{eq:kripke-product}\\
  {\topObj} & = W \label{eq:kripke-garbage} \\
  {A \withObj B} & = A ∩ B \label{eq:kripke-additive-product} \\
  {A \funObj B} & = \suchThat w {∀ \Inside A {w_a}. ~~ \Inside B  {w + w_a}} \label{eq:kripke-fun}
\end{align}
\end{definition}%
The unit object (\ref{eq:kripke-unit}) is constructed from nothing;
i.e. the resource \(0\), or any resource that can be transformed to \(0\).  The
product  \((A \tensorObj B)\) (\ref{eq:kripke-product}) breaks down the resources $w$ used
for the construction of the pair into resources $w_a$ for the first
component and $w_b$ for the second component.  This is expressed by
the condition $\PP w {w_a} {w_b}$.
The product \(A \withObj B\) can be constructed if both \(A\)
and \(B\) can be constructed. The resources associated with the
components (\(A\), \(B\)) are the same as for the whole, because one
can only extract either component, never both. The \(\topObj\) object can be
constructed from anything. Intuitively it acts like a garbage bin
where one can throw any resource.

The definition of the internal hom (\ref{eq:kripke-fun}) states that
a function can be constructed from resources $w$ if for any argument
that brings its own resources $w_a$ the function result can be
constructed with resources $w + w_a$. Read differently, the resources
for a function application is the composition of the resources for
both function and argument.
Functions stemming from closed terms do not need own resources, thus,
they start with $0$, but as a curried function is applied to its
arguments one after another, it accumulates the resources coming with
each argument to eventually construct a result from all the gathered
resources.

To represent modalities, we need a notion of resource
\emph{scaling}. In general, we want resources to form a module-like
structure with \emph{modalities as scalars}.
\begin{definition}[modality-resource module]\label[definition]{def:resource-module}\agdaref[Modality-resource module]{Sem.Kripke.Worlds}{ModuleFor}
  In addition to a \nameref{def:resource-frame}, a modality-resource module is given by
  \begin{itemize}
  \item An arbitrary modality structure \(\structure Q\),
  \item A resource-scaling operation \(((◂) : \modSet Q → W → W) \) such that the following laws hold:
  \begin{align}
    \text{if \(s ≼ s'\)  and \(w ≤ w'\)} & \text{ then \(s ◂ w ≤ s' ◂ w'\)} \label{law:module-mono} \\
    s ◂ (s' ◂ w)	& = (s · s') ◂ w \\
    1 ◂ w	& = w \\
    0	& = s ◂ 0 \\
    0 ◂ w	& ≤ 0 \\
    (s + s') ◂ w	& ≤ (s ◂ w) + (s' ◂ w) \\
    (s ◂ w) + (s ◂ w')	& ≤ s ◂ (w + w') & \text {provided \(\distributes s\)} \label{law:module-distribute} \\
    s ◂ (w + w')	& ≤ (s ◂ w) + (s ◂ w') & \text {provided \(\codistributes s\)} \label{law:module-codistribute}
  \end{align}
  \end{itemize}
\end{definition}
Then, the resource semantics of a modality \(p\) is given by the following homomorphism from
\(\structure Q\) to endofunctors in \(\downset W\).  The idea is
that, to be able to produce \(\BOX p A\) with resources \(w\), one divides
\(w\) in \(p\) equal parts such that each part produces \(A\).

\begin{definition}\label[definition]{def:kripke-mul-by}
  \agdaref[Resource scaling functor]{Sem.Kripke}{mulBy}
  For \(s ∈ \modSet Q\), we define the functor \(\mathbf q(s)\) acting
  on resource sets as follows:
  \[\mathbf q(s)(A) = \braces {w ∣ {w ≤ s ◂ {w'}} ~∧~ w' ∈ A}\]
  The action of \(\mathbf q(s)\) on morphisms maps \(\Needs {(A \funObj B)} w\) to \(\Needs {\mathbf q(s)(A) \funObj \mathbf q(s)(B)} {(s ◂ w)}\).
 \end{definition}

\begin{theorem}
  \(\mathbf q\) is a \nameref{def:mod-interp} \(\structure Q → \Funct {\downset W}\).
\end{theorem}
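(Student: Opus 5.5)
We check, in order, each clause of \cref{def:mod-interp}, after first establishing that every \(\mathbf q(s)\) is a genuine endofunctor of \(\downset W\).

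\emph{Well-definedness and functoriality.} The set \(\mathbf q(s)(A)\) is downward closed by construction, since \(w ≤ w'' ≤ s ◂ w'\) gives \(w ≤ s ◂ w'\). A morphism \(A \hom B\) is an inclusion \(A ⊆ B\), and \(\mathbf q(s)\) evidently preserves inclusions. Because homsets are propositions (inclusions), the identity and composition laws are automatic. Equivalently, in the internal formulation of \cref{def:kripke-mul-by}, given \(\Needs {(A \funObj B)} 0\) we use \(s ◂ 0 = 0\).

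\emph{Functor on modality morphisms.} For \(φ : s ≼ s'\), the natural transformation \(\mathbf q(s) ⟶ \mathbf q(s')\) follows from \eqref{law:module-mono} with \(w' ≤ w'\): \(w ≤ s ◂ w' ≤ s' ◂ w'\). Naturality and the functor laws of \(\mathbf q\) again hold trivially, because homsets of \(\downset W\) are subsingletons.

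\emph{Multiplicative monoid (strong).} We show \(\mathbf q(s·s')(A) = \mathbf q(s)(\mathbf q(s')(A))\).
\begin{itemize}
\item For \(⊇\): if \(w ≤ s ◂ v\), \(v ≤ s' ◂ w'\), and \(w' ∈ A\), then monotonicity and the law \(s ◂ (s' ◂ w') = (s·s') ◂ w'\) give \(w ≤ (s·s') ◂ w'\).
\item For \(⊆\): take \(v = s' ◂ w'\), which lies in \(\mathbf q(s')(A)\) by reflexivity.
\end{itemize}
The unit is analogous: \(1 ◂ w' = w'\), together with downward closure of \(A\), gives \(\mathbf q(1)(A) = A\).

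\emph{Additive monoid (oplax).}
\begin{itemize}
\item For \(\mathbf q(0)(A) ⊆ \unitObj\): from \(w ≤ 0 ◂ w' ≤ 0\).
\item For \(\mathbf q(s+s')(A) ⊆ \mathbf q(s)(A) \tensorObj \mathbf q(s')(A)\): from \(w ≤ (s+s') ◂ w' ≤ (s ◂ w') + (s' ◂ w')\), choosing \(w_a = s ◂ w'\) and \(w_b = s' ◂ w'\) in \eqref{eq:kripke-product}.
\end{itemize}

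\emph{(Co)distributivity.} This is the step requiring the most care, since we must build lax and oplax structures from the module laws.

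Laxity from \(\distributes s\):
\begin{itemize}
\item Binary case. Given \(w ≤ w_a + w_b\) with \(w_a ≤ s ◂ a\), \(w_b ≤ s ◂ b\), \(a ∈ A\), and \(b ∈ B\), we have \(w ≤ (s ◂ a) + (s ◂ b) ≤ s ◂ (a+b)\) by \eqref{law:module-distribute}. Since \(a + b ∈ A \tensorObj B\), this places \(w\) in \(\mathbf q(s)(A \tensorObj B)\).
\item Nullary case. \(\unitObj ⊆ \mathbf q(s)(\unitObj)\): if \(w ≤ 0 = s ◂ 0\) and \(0 ∈ \unitObj\), we are done. This case does not even need \(\distributes s\).
\end{itemize}

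Oplaxity from \(\codistributes s\):
\begin{itemize}
\item Binary case. Given \(w ≤ s ◂ v\) with \(v ≤ a + b\), monotonicity and \eqref{law:module-codistribute} give \(w ≤ s ◂ (a+b) ≤ (s ◂ a) + (s ◂ b)\).
\item Nullary case. \(\mathbf q(s)(\unitObj) ⊆ \unitObj\): \(w ≤ s ◂ v\) with \(v ≤ 0\) gives \(w ≤ s ◂ 0 = 0\).
\end{itemize}

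The main obstacle is bookkeeping rather than ideas. Throughout, we must repeatedly interleave downward closure with \eqref{law:module-mono}, and we must ensure that whatever coherence/naturality obligations the Agda formulation imposes on the monoidal structures are discharged. Both are trivial here because \(\downset W\) is thin.
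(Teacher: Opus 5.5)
Your proposal is correct and follows the same approach as the paper. The paper's proof checks each clause of \cref{def:mod-interp} against the laws of the modality-resource module, with \eqref{law:module-distribute} and \eqref{law:module-codistribute} supplying the lax and oplax structures, and defers the details to the Agda mechanisation; your clause-by-clause verification is a faithful expansion of that argument, including the observations that the nullary cases need only $s \triangleleft 0 = 0$ and that all naturality and coherence obligations are vacuous because the category of downward-closed resource sets is thin.
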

\begin{proof}
  The proof relies on the distributivity properties of \((◃)\).
  \agdaref[Resource scaling is modality functor]{Sem.Kripke}{homo}
\end{proof}
\begin{corollary}[soundness of resource semantics]\label{thm:soundness}
  \agdaref[Soundness of resource semantics]{Sem.Kripke.Properties}[Soundness]{sound}
  Assume \(g : \TyConst → \downset W\) a map from base types to the set of resources which produce it.
  Then for any modality structure \structure{Q}, and any term \(\emptyContext ⊢ t : A\) of \Lamp[Q],  we have \(\Needs A 0\).
\end{corollary}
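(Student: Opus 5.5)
The plan is to obtain the statement as a direct instance of the \nameref{thm:semantics}, with no new induction on terms. By the preceding theorem, \(\mathbf q\) is a \nameref{def:mod-interp} \(\structure Q → \Funct {\downset W}\). By \cref{def:resource-set-cat}, \(\downset W\) is a \nameref{def:bicat}. Together with the map \(g\) on base types, the categorical semantics therefore yields, for every typing derivation \(Γ ⊢ t : A\) of \Lamp[Q], a morphism \(⟦t⟧ : ⟦Γ⟧ \hom_{\downset W} ⟦A⟧\). Any conditions \(\distributes p\) used in \(\BOXintro\), or \(\codistributes p\) used in patterns, are transported along \(\mathbf q\) into (op)laxity witnesses for \(\mathbf q(p)\). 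The relevant laws are (\ref{law:module-distribute}) and (\ref{law:module-codistribute}), and this is already packaged in the theorem that \(\mathbf q\) is a modality functor.

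Next, I specialise to a closed term, \(Γ = \emptyContext\). Then \(⟦\emptyContext⟧ = \unitObj = \suchThat w {w ≤ 0}\) by (\ref{eq:kripke-unit}). A morphism \(\unitObj \hom ⟦A⟧\) in \(\downset W\) is by definition an inclusion \(\unitObj ⊆ ⟦A⟧\). Since the preorder is reflexive, \(0 ≤ 0\), so \(0 ∈ \unitObj\), and hence \(0 ∈ ⟦A⟧\). By the definition \(\Needs A w ≔ w ∈ ⟦A⟧_{\downset W}\), this is exactly \(\Needs A 0\).

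The only step needing care is confirming that the interpretation of types used to define \(⊩\) is the one produced by the corollary. In particular, \(⟦\BOX p A⟧ = \mathbf q(p)(⟦A⟧)\), possibly up to the isomorphisms \(\mathbf h(1) ≅ 1\) and \(\mathbf h(p·q) ≅ \mathbf h(p)·\mathbf h(q)\) that arise when translating along the modality functor. Because those isomorphisms are morphisms, i.e.\ inclusions both ways, of downward-closed sets, they are equalities of objects, so no discrepancy arises. A second check is that every object involved is downward closed, which is what makes it an object of \(\downset W\) at all. This is already part of the earlier theorem establishing that \(\downset W\) is a \nameref{def:bicat} and that each \(\mathbf q(s)\) is an endofunctor on it, so it can be assumed.
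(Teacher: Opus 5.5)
Your proposal is correct and follows the paper's proof. Like the paper, you instantiate the categorical semantics in \(\downset W\) with \(g\) on base types and the modality functor \(\mathbf q\) on modalities, obtain \(\unitObj ⊆ ⟦A⟧_{\downset W}\), and conclude at the resource \(0\), which lies in \(\unitObj\) because \(0 ≤ 0\). Your extra check that the coherence isomorphisms become equalities in the thin category \(\downset W\) is harmless but unnecessary, since \(\Needs A w\) is defined directly from that same interpretation.
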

\begin{proof}
  We use the \nameref{thm:semantics} on \(\downset W\) to interpret $t$, letting \(g\)
  interpret base types and \(\mathbf q\) interpret modalities,
  leveraging the fact that it is a \nameref{def:mod-interp}.
  We get \(\unitObj ⊆ ⟦A⟧_{\downset W}\), or equivalently \(∀w. w ⊩ \unitObj → w ⊩
  A\). We conclude by letting \(w\) be \(0\).
\end{proof}

\newcommand\mset{\ensuremath{\mathcal{M}}}

\paragraph{Example: multiset resources}
Resources can be generated freely from a set \(E\) of elementary
resources.  Then a resource $w$ is a multiset $w \in \mset(E)$,
addition $w + w'$ is the
multiset union and $0$ is empty multiset. Linearity is then modeled by
letting \((≤)\) be equality. To sustain affinity for a given elementary resource
\(e ∈ E\), one can add \(\braces e ≤ 0\).

Typically, modalities would be natural numbers, scaling each
elementary resource and preserving the \(≤\) relation. If one makes a (perhaps frugal) meal with 5
carrots (\(5 \mathsf{carrot} ≤ \mathsf{meal}\)), then one can make two meals with ten
carrots: \(10 \mathsf{carrot} ≤ 2 \mathsf{meal}\).

However, multisets are \(ℕ\)-linear combinations of elements of \(E\),
thus one can let modalities be \(ℕ\)-linear maps on such multisets
(\(W ⊸ W\)). Indeed, as scalars these maps satisfy the laws of a
\nameref{def:resource-module}.
Running with our vegetable garden example, one can define a
season-shifting modality \(Season\), as a linear operator which maps one seed to one
carrot.  \Cref{def:kripke-mul-by} effectively computes the inverse of such linear operators, so
\(\mathbf q(Season)(\mathsf{\braces {carrot}}) = \braces {\mathsf{seed}}\). Then one can make the same meal from just
seeds, but a season later: \(5 \mathsf{seed} ⊩ \BOX {Season} \mathsf{Meal}\).

(It is easy to see how such a setup can be applied to vector spaces
and linear transformations.)

\subsection{Informational modalities}\label{sec:informational-modality}

\agdacomment{Subsection: Informal modalities}
\agdacomment{-------------------------------}
\agdablank

Even though we have motivated our Kripke-style semantics by a resource
analogy, it also applies to resources which are both discardable and
duplicable. We begin by characterizing the situation in terms of modality conversions.

\begin{definition}[informational modality]\label[definition]{def:informational-modality}
  A modality \(p\) is:
  \begin{description}
  \item[discardable] iff we have \(p ≼ 0\).\label{mor:disc}
  \item[duplicable] iff we have \(p ≼ p+p\).
  \item[informational] iff it is both discardable and duplicable.
  \end{description}
\end{definition}
Thanks to the use of the \pthrow{} (resp.\ \(\panalyse p x p y\) ) pattern, one can discard (resp.
duplicate) a \(p\)-qualified variable \(x\) if \(p\) is discardable or
duplicable, and an informational variable can be used as many times as one
wants. Consequently \(p\) is not used to control \emph{how much}
variables are used, but rather \emph{in which context} they are used.


\begin{definition*}
  We say that the modality structure \structure{F}, and by extension
  \Lamp[F], is informational if every modality in \structure{F} is
  informational.
\end{definition*}

\begin{remark}\label{thm:cartesian-informational}
  If the monoids \(( \category C ,\withObj, \topObj)\) and
  \(( \category C , \tensorObj , \unitObj )\) coincide, then every endofunctor \(F\)
  in \Funct{C} is an informational modality in \Lamp[\Funct{C}], i.e.,
  discardable~\agdaref[Cartesian implies discardable]{Cats}[BiCat.Endo.Functor]{getε}
  and duplicable.~\agdaref[Cartesian implies duplicable]{Cats}[BiCat.Endo.Functor]{getδ}
  In other words, when constructing a \nameref{def:mod-interp}
  \(\structure{F} → \Funct{C}\), the \(p≼ 0\) and
  \(p ≼ p+p\) conversions in \structure{F} can always be
  realised, thanks to the cartesian structure of \(\category C\).
\end{remark}
\begin{remark}\label{thm:semi-lattice}
  In a informational system, the modality preorder \(≼\) is a semi-lattice, with \(0\) being the top \(⊤\) and \((+)\) the meet (\(∧\)).
  That is, for all \(p, q\),
  \begin{itemize}
    \item \(p ≼ 0\)
    \item \(p ≼q\) iff \(p = p + q\).
  \end{itemize}
\end{remark}

We can then define a \nameref{def:ringoid} from informational modalities.
\begin{definition}[semi-lattice resources]\label[definition]{def:lattice-worlds}
  \agdaref[Semi-lattice resources]{Sem.Kripke.Worlds}[Lattice]{monoid}
  \(
  \begin{array}[t]{r@{\;}l}
    W & := L  \\
    p + q & := p ∧ q  \\&
  \end{array}
  \hspace{1cm}
  \begin{array}[t]{r@{\;}l}
    p ≤ q & \mbox{ iff } p = p \wedge q \\
    0 & := ⊤\\
  \end{array}
  \)\end{definition} %
A consequence of this choice of resources is that the tensor and cartesian products coincide. Indeed,
 $ (A \tensorObj B)  =  \braces{ w ∣ ∃w_a∈ A, w_b ∈ B.  w ≤ w_a ∧ w_b}$
and because \(A\) and \(B\) are downwards closed,
\(A \tensorObj B  =  A ∩ B\).
With the same line of argumentation, we also find \(\unitObj = W\) and
\begin{align*}
  \Needs {A \funObj B} w  & \mbox{ iff } ∀u≤w. \Needs A u  → \Needs B u,
\end{align*}
thereby recovering the usual intuitionistic Kripke semantics


\begin{definition}[lattice modalities]\agdaref[Lattice modalities]{Sem.Kripke.Worlds}[Lattice]{modalities}\label[definition]{def:lattice-modalities}
  Given any lattice \((L, {∧}, ⊤, {∨}, ⊥)\), we can construct
  \(\structure L\), the structure of lattice modalities, as follows:
\[    \begin{array}{r@{\;}l}
      \modSet L     & := L  \\
      p + q & := p ∧ q  \\
      0     & := ⊤ \\
    \end{array}
\hspace{2cm}
    \begin{array}{r@{\;}l}
      p ≤ q & := p ≤ q \\
      p · q & := p ∨ q  \\
      1     & :=  ⊥ \\
    \end{array}
\]\end{definition}
Because \(L\) is a ring, it is clear that \(L\) is a module with
itself as scalars and (∨) as
operation.~\agdaref[Lattice modules]{Sem.Kripke.Worlds}[Lattice]{modul} As such, \(L\) is an
appropriate instance of a resource structure, and together with the associated modalities we have a \nameref{def:resource-module}.
Then, the main \Lamp{} principles become:
\begin{align*}
\BOX {⊤} A & → 1 & \text{\ref{principle:zero}} \\
\BOX {p∧q} A & → \BOX p A × \BOX q A & \text{\ref{principle:sum}} \\
\BOX {⊥} A & ↔ A  &\text{\ref{principle:unit}}  \\
\BOX p {(\BOX q A)} & ↔ \BOX {p ∨ q} A & \text{\ref{principle:product}}
\end{align*}
Namely, at resource \(p∧q\) one has access to both data available at
\(p\) and at \(q\); and at resource \(p∨q\) one only have access to data
which is accessible simultaneously at \(p\) and at \(q\).
As for every \Lamp{} instance, information flow is determined by
\((≼)\), according to \ref{principle:convertibility}.
The elements of the lattice can be enthused with various intuitions,
depending on the application. For instance, they can represent:
\begin{itemize}
\item Lamport clocks for distributed computing;
\item Agents, and \(\BOX p A\) means that agent \(p\) believes \(A\);
\item Security levels, and \(\BOX p A\) means that agent
  \(p\) can access \(A\).
\end{itemize}
We give a more detailed semantics for lattices with information flow in \cref{app:security}.

\section{Relational Semantics}\label{sec:relmodel}
\agdacomment{Section: Relational Semantics}
\agdacomment{-----------------------------}
\agdablank

\providecommand\Relation[2]{#1_{\R}^{#2}}
\providecommand\RelatedAt[4]{(#3,#4) ∈ \Relation{#1}{#2}}

While the Kripke-style semantics described in
\cref{sec:forgetful-model} is effective to determine if some
resource can produce some type, in programming languages applications
it is often useful to simultaneously establish relations between programs.

In this section we extend Kripke predicates to indexed logical
relations. This extension remains a instance of the generic category
semantics. The idea is to have two forgetful interpretations (as in
\cref{ex:forget}) in the category \category{Set}, and capture the
meaning of modalities by a resource-indexed relation between them.  Thus,
in this interpretation, an object is a triple of two sets \(A_1\),
\(A_2\), and a family of contravariant resource-indexed relations
\(\Relation A w\) in \(\Pot(A_1 × A_2)\).   More
explicitly, resources related by \((≤)\) yield (inversely) contained
relations: $w ≤ w' → \Relation A {w'} ⊆ \Relation A w$.

\begin{definition}[category of indexed relations]\label[definition]{def:rel-cat}
  \agdaref[Category of indexed relations]{Sem.IndexedRelations}{relcat}
Contravariant relations form a \nameref{def:bicat} \(\category R\).
The relational components of the object formers are defined as follows:
{\small
\begin{align}
  \RelatedAt {\unitObj} w {()} {()} & = {w ≤ 0} \label{eq:rel-unit}\\
  \RelatedAt {(A \tensorObj B)} w {(x₁,y₁)} {(x_2,y_2)}  & =  {∃w_a w_b. w ≤ (w_a + w_b) ∧  \RelatedAt A {w_a} {x_1} {x_2}  ∧  \RelatedAt B {w_b} {y_1} {y_2}} \label{eq:rel-product}\\
  \RelatedAt {\topObj} w {x₁} {x₂} & = ⊤ \label{eq:rel-garbage}\\
  \RelatedAt {(A \withObj B)} w  {(x₁,y₁)} {(x_2,y_2)} & = \RelatedAt A {w} {x_1} {x_2} ∧ \RelatedAt B {w} {y_1} {y_2} \label{eq:rel-additive-product}\\
  \RelatedAt {(A \funObj B)} w {f_1} {f_2} & = ∀ w_a x_1 x_2.  \RelatedAt A {w_a} {x_1} {x_2} → \RelatedAt B  {w + w_a}  {f_1 (x_1)} {f_2(x_2)} \label{eq:rel-fun}
\end{align}
}A morphism in the homset \(A \hom_{\category R} B\) is constituted of a
pair of functions \(f_1 : A_1 \hom_{\category {Set}} B_1\) and
\(f_2 : A_2 \hom_{\category {Set}} B_2\),
and a witness that they are related at resource \(0\):
\(\RelatedAt {(A \funObj B)} 0 {f_1} {f_2}\).
\end{definition}

The extension of the simple Kripke-style model to relations is oblivious to
the related elements.
%
We have asserted in \cref{sec:semantics} that modalities are
interpreted as endofunctors. Unfortunately, in the special case of
relations, \emph{arbitrary functors do not yield the desired models}.
Indeed, the index-wise product functor \(p \tensorObj q\) is problematic as an
interpretation of the modality addition \(p+q\). It comes down the
fact that the interpretation of the points must be pairs, but useful
relational models assume that when qualifying a type by \(p+q\), the
underlying values are not pairs (as they were in the arrays example),
but rather two aliases of the same object.

\providecommand\idendo{identity-endofunctor\xspace}
\providecommand\idendos{identity-endofunctors\xspace}
\providecommand\Idendos{Identity-endofunctors\xspace}
\providecommand\femb[1]{{#1}^{*}}

Consequently, we must construct a new modality structure
\structure{R}, restricted to relational endofunctors whose underlying
functors (in the \(\category{Set}\) category) are the identity
functors.~\agdaref[Modality functor]{Sem.IndexedRelations}{ModalityFunctor}
(Hereafter we refer to this kind of functors as \idendos.)
Such functors are
characterized by:
\begin{itemize}
\item A map \(F\) between contravariant relations. If the input relation
  family is noted \(\Relation R w\), then the constructed family is noted
  \( \Relation {F(R)} w\).
\item An action on morphims in \(\category R\).
\end{itemize}
The functorial action on these morphisms is a proof that functions
remain related under the \(F\) action on morphisms. Namely:
\(\Relation {(A \funObj B)} w ⊆ \Relation {(F(A) \funObj F(B))} w \).
\agdaref{Sem.IndexedRelations}{from⊆}
We proceed to define a new product of objects in \(\category R\)
whose purpose is to interpret modality addition:
\begin{align}
  \label{eq:intersection-unit}  \RelatedAt {\otop} w {x₁} {x_2}  & =  w ≤ 0
    & \text{\agdaref{Sem.IndexedRelations}{Top}} \\
  \label{eq:intersection-product} \RelatedAt {(A \oand B)} w {x₁} {x_2}  & =  ∃w_a w_b. w ≤ (w_a + w_b) ∧  \RelatedAt A {w_a} {x_1} {x_2}  ∧  \RelatedAt B {w_b} {x_1} {x_2}
    & \text{\agdaref{Sem.IndexedRelations}{_∧I_}}
\end{align}
In contrast to the \(\unitObj\) and \(\tensorObj\) relations formers, the \( \otop \) and
\(\oand\) relation formers are defined for \emph{any} underlying domain,
not just the product in \(\category {Set}\). Coincidentally, the
\emph{same} underlying values are related by both operands of
\(A \oand B\).  Defining pointwise \idendos{} for \((\withObj)\) is impossible,
but not a problem for (\( \oand \)).

\begin{definition}[\nameref{def:ringoid} of \idendos]\label[definition]{def:relation-modalities} This structure, called \structure{R}, is:\agdaref{Sem.IndexedRelations}{IdFunctorsMods}
\begin{align*}
  M     & := \category R → \category R & \text{\idendos} \\
  p ≼ q & := p ⟶ q & \text {natural transformations}\\
  p · q & := p ∘ q & \text {functor composition}\\
  1     & := \mathsf{id} &  \text{identity functor} \\
  p + q & := p \oand q & \text {pointwise application of \( \oand \)}\\
  0     & := \KObj(\otop)   \\
  \distributes p, \codistributes p & := \text{\(p\) is lax}, \text{\(p\) is oplax}
\end{align*}
\end{definition}
Note that the distributivity conditions continue to refer to (op) laxity wrt. to the \((\tensorObj,\unitObj)\) monoid, not \((\oand, \otop)\).
Indeed, \ref{principle:oplax} and \ref{principle:lax} are about splitting and joining data, whereas
(\(\oand\)) has the same data on both sides. Hence \cref{lem:distr-lax} does not apply for relational semantics. Regardless, \Lamp{} is usable on \idendos, because they embed into relational functors.
\begin{theorem}
  The embedding \(\structure R \hom \Funct R\) is a \nameref{def:mod-interp}.\agdaref{Sem.IndexedRelations}{idFunctSem}
\end{theorem}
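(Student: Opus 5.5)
We need to check that the inclusion $\mathbf e : \structure R \hom \Funct R$, sending each \idendo{} to the underlying relational endofunctor and each natural transformation to itself, meets the three requirements of \cref{def:mod-interp}. Functoriality on the underlying categories is immediate, since the morphisms of $\structure R$ are exactly the natural transformations between the embedded functors. The multiplicative part is also strict. Composition of \idendos{} is composition of functors, and the identity functor is the identity. So $\mathbf e(p·q) = \mathbf e(p)∘\mathbf e(q)$ and $\mathbf e(1)=\mathsf{id}$ hold on the nose, and we take identity isomorphisms. The distributivity families cause no difficulty either: $\distributes p$ and $\codistributes p$ in $\structure R$ are by definition laxity and oplaxity witnesses with respect to $(\tensorObj,\unitObj)$, which is exactly what $\Funct R$ asks for. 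The map on witnesses is therefore the identity.

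The substantive part is oplaxity with respect to $(+,0)$. We must construct natural transformations
\[
  p \oand q \;⟶\; p \tensorObj q \qquad\text{and}\qquad \KObj(\otop) \;⟶\; \KObj(\unitObj),
\]
where the targets are the pointwise monoidal structure of \cref{def:functor-modalities}.

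For the binary case, fix an object $A$. The underlying set of $(p\oand q)(A)$ is $A_i$, while that of $p(A)\tensorObj q(A)$ is $A_i × A_i$. We take the diagonal $x ↦ (x,x)$ as the underlying function on both sides $i=1,2$. Relatedness at resource $0$ then amounts to the following: whenever $\RelatedAt{(p\oand q)(A)}{w}{x_1}{x_2}$, we need $\RelatedAt{(p(A)\tensorObj q(A))}{w}{(x_1,x_1)}{(x_2,x_2)}$ (using $0+w=w$). Unfolding \eqref{eq:intersection-product} and \eqref{eq:rel-product}, both sides say that there exist $w_a,w_b$ with $w≤w_a+w_b$ such that $x_1,x_2$ are related by $p(A)$ at $w_a$ and by $q(A)$ at $w_b$. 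The same witnesses therefore work.

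For the nullary case we take the unique map $A_i → \{()\}$. Here \eqref{eq:intersection-unit} and \eqref{eq:rel-unit} both reduce to $w≤0$, so the condition holds.

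Naturality in $A$ is routine. The underlying functors are identities, so for any $f$ we get $\mathsf{diag}∘f = (f×f)∘\mathsf{diag}$ in $\category{Set}$. Since morphisms of $\category R$ are determined by their underlying pair of functions, with relatedness a mere proposition, the relational component needs no separate check.

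If the coherence laws of oplax monoidal functors are required, they hold for the same reason: the diagonal maps are coassociative and counital in $\category{Set}$, and equalities of morphisms in $\category R$ reduce to equalities of the underlying functions.

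The main obstacle is making sure the resource bookkeeping in the binary case matches, that is, that monotonicity and the use of resource $0$ in the morphism condition \eqref{eq:rel-fun} line up with the existential in \eqref{eq:intersection-product}. Because both relation formers were designed with the same existential shape, I expect this to go through by direct reuse of the witnesses $w_a,w_b$.
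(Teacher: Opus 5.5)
Your proposal is correct and follows the paper's proof. Everything except the additive part is immediate, and the required natural transformations \(p \oand q \longrightarrow p \tensorObj q\) and \(\KObj(\otop) \longrightarrow \KObj(\unitObj)\) have duplication and discard in \(\category{Set}\) as their underlying maps; the relational conditions hold because, along these maps, \(\oand,\otop\) and \(\tensorObj,\unitObj\) coincide by definition. Your explicit reuse of the witnesses \(w_a,w_b\) is simply the unfolding of the paper's remark that there is ``by definition no difference'' at the relational level.
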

\begin{proof}
  Most of the cases are immediate, but the structures differ for sums,
  and thus we have to show that we have natural transformations
  \(\KObj(\otop) ⟶ \KObj(\unitObj)\) and \(A \oand B ⟶ A \tensorObj B\). The underlying natural
  transformations are discard (\(ε\)) and duplication (\(δ\)), from the
  cartesian structure of \category{Set}. But in fact, at the
  relational level, there is by definition no difference between
  \(\KObj(\otop)\) and \(\KObj(\unitObj)\) (and respectively \(A \oand B\) and \(A \tensorObj B\))
  when the underlying transformations are \(ε\) (respectively \(δ\)).
\end{proof}
We are now done with our generic relational interpretation for
modalities in \(\structure R\).  Because modality functors
compose, it means that to instanciate this generic relational model
on any new modality structure \(\structure F\), we need only a \nameref{def:mod-interp}
\(\structure F → \structure R\). We show an example in \cref{app:security}.

With the \nameref{def:rel-cat}, the \nameref{thm:semantics} specialise to the
fundamental lemma of logical relations:
\begin{theorem}[Fundamental Lemma]\agdaref{Rel.Sem.Kripke.Properties}{fundamental-lemma}\label{thm:fundamental-lemma}
  Given \(Γ ⊢ t : A\), a resource \(w\) and two
  environments \(ρ=(x₁,\ldots,xₙ )\) and \(ρ'=(x'₁,\ldots,x'ₙ )\) related by
  \(\Relation {Γ} w\) we have
  \(\RelatedAt A w {⟦t⟧_{\category {Set}}(ρ)} {⟦t⟧_{\category
      {Set}}(ρ')}\).
\end{theorem}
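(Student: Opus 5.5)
The plan is to obtain the Fundamental Lemma as a direct instance of the \nameref{thm:semantics}, applied to the \nameref{def:rel-cat} \(\category R\). No separate induction over typing derivations is needed; all the work goes into checking that \(\category R\) and the modality interpretation fit the hypotheses of that corollary.

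First, I would confirm that \(\category R\) is a \nameref{def:bicat}.
\begin{itemize}
\item Identity and composition of pairs of functions preserve relatedness at resource \(0\). This uses the preorder and the monoid laws of the resource structure, for instance \(0+w_a = w_a\).
\item The object formers (\ref{eq:rel-unit})--(\ref{eq:rel-fun}) are contravariant in \(w\), using that \(+\) is monotone.
\item The structural morphisms are related at \(0\): associators, symmetry, \(\appl\), \(\curry{}\), projections and pairing, and \(ε\). Each of these is a short calculation using commutativity and associativity of \(+\) together with downward closure.
\end{itemize}

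Next I fix the modality interpretation. Given \(\Lamp[F]\) with a modality functor \(\structure F → \structure R\), I compose it with the embedding \(\structure R → \Funct R\), which the preceding theorem shows is a \nameref{def:mod-interp}. Base types are interpreted by any chosen triple \((A_1,A_2,\Relation A {})\). The corollary then yields a morphism \(⟦t⟧_{\category R} : ⟦Γ⟧ \hom_{\category R} ⟦A⟧\) for every derivation \(Γ ⊢ t : A\).

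By definition of homsets, this morphism is a pair of \(\category{Set}\)-functions \((f_1,f_2)\) together with a witness \(\RelatedAt{(⟦Γ⟧ \funObj ⟦A⟧)}{0}{f_1}{f_2}\). Two points remain.

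The first point is identifying \(f_1\) and \(f_2\) with \(⟦t⟧_{\category{Set}}\). The underlying functions of every object former and structural morphism of \(\category R\) are the corresponding \(\category{Set}\) constructions. The modalities are \idendos, and the morphisms \(\oand ⟶ \tensorObj\) and \(\otop ⟶ \unitObj\) have underlying maps \(δ\) and \(ε\). Hence the forgetful projections \(\category R → \category{Set}\) commute with the interpretation, and both components equal the forgetful (\cref{ex:forget}) \(\category{Set}\)-interpretation of \(t\). I would prove this either by a routine induction following the clauses of the semantics summary after \cref{thm:semantics1}, or by noting that the projections form a structure-preserving functor. I expect this bookkeeping to be the main obstacle. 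In particular, the duplication underlying \(p+q\) in \(\category R\) must match the forgetful model's treatment of \(\panalyse p l q m\), and the lax/oplax witnesses \(ψ_n, \bar ψ_n\) must have identity-like underlying maps.

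The second point is unfolding the witness at resource \(0\) via (\ref{eq:rel-fun}). Given \(ρ, ρ'\) related by \(\Relation Γ w\), we obtain \(\RelatedAt A {0+w}{⟦t⟧(ρ)}{⟦t⟧(ρ')}\), and \(0+w = w\) gives the claim.
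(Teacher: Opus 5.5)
Your proposal is correct and takes the same route as the paper. The paper obtains the Fundamental Lemma directly as the \nameref{thm:semantics} instantiated in the \nameref{def:rel-cat}, via $\structure F → \structure R → \Funct R$, and then unfolds the resource-$0$ relational witness of the resulting morphism through (\ref{eq:rel-fun}), exactly as you do. The bookkeeping you single out as the main obstacle is built into the paper's construction: objects of $\category R$ are pairs of forgetful $\category{Set}$ interpretations related by a resource-indexed relation, and modalities are restricted to identity-endofunctors whose additive structure maps have underlying $δ$ and $ε$, so projecting to $\category{Set}$ recovers $⟦t⟧_{\category{Set}}$.
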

To clarify, for every binding \(xᵢ : \moda p Aᵢ\) in context \(Γ\),
the components \(x,x'\) of the respective environments \(ρ,ρ'\) are related
when \(\RelatedAt {p{A_i}} w x {x'}\).

\subsection{Application: Information-Flow Security and Non-Interference}\label{app:security}
\agdacomment{Subsection: Information-Flow Security and Non-Interference}
\agdacomment{----------------------------------------------------------}
\agdablank

One application of modal type systems is to ensure that certain parts
of a program do not have access to private (high security)
information. Several type systems have been proposed to explicitly
support this feature, notably the seminal work of
\citet{abadi_core_1999}.

We have seen in \cref{sec:informational-modality} that lattice modalities
model a flow of information in
the \((≼)\) direction. In the information-flow community, this
property is expressed as non-interference, namely that the public
output of a program does not depend on secret inputs. To prove this
property for \Lamp{}, we need to instantiate the relational model
appropriately for informational modalities.

Before proceeding, we can spend some effort checking that
the \nameref{def:rel-cat} provides the right definition for our
application. Recalling that each resource $w:W$ stands for a security
level, the relation $\Relation A w$ will identify values of type \(A\)
that an agent of clearing level $w$ is not allowed to distinguish
(``see''). One extreme is the discrete relation that hides nothing and
allows one to distinguish everything (full information); the other
extreme is the full relation that identifies any two values and thus
hides everything (no information).

Because security resource-levels form a lattice, the product structures
(\(\withObj, \topObj\) and \(\tensorObj, \unitObj\)) coincide, and our
relational semantics reduces to the usual Kripke logical relations.

In particular, the empty tuple is indistinguishable from itself; and
accordingly the corresponding semantic relation may hold at all
security levels, as there is never a need to look into the empty
tuple, regardless the capabilities an agent is equipped with.
Likewise, the capability to access a pair should include the
capability to access both components.  Turning this statement around,
a pair is only indistinguishable from another pair if both respective
components are so. Because resource-levels form a lattice, $w_a + w_b$ in
\cref{eq:rel-product} is the meet $w_a ∧ w_b$, and the condition
$w ≤ w_a ∧ w_b$ breaks down into the pair of conditions $w ≤ w_a$ and
$w ≤ w_b$.


From the security perspective, access to a function and access to its
argument should be sufficient to get access to the result.  Thus in
\cref{eq:rel-fun} setting the resource of the result to ${w} + {w_a}$,
does the correct thing for access control to functions.

We now proceed to map \nameref{def:lattice-modalities} (\(\structure L\))
 to \idendos.  Following
\citet{algehed_simple_2019}, we
provide an \idendo{} which hides the information below a given level.

\begin{definition}[hiding functor]\agdaref{Sem.IndexedRelations.Security}{raise}\label[definition]{def:raise-functor}
  For each security level \(w_0\) we define the functor \(\mathbf s(w_0)\) whose action on contravariant
  relations is:
  \(\RelatedAt {\mathbf s(w_0) A} w {x_1} {x_2} := (w_0 ≤ w) ⟶ \RelatedAt A w
  {x_1} {x_2} \).
\end{definition}
We can see that if \(w < w_0\) the premise is false, and thus the whole
relation is trivial: \(x₁\) and \(x₂\) can no longer be distinguished,
or ``hidden''. Otherwise, \(w_0 ≤ w\) and the relation remains the
same. Taking the dual viewpoint, the higher the clearing level $w$,
the more one is allowed to see, thus, the fewer values become related
by the indistinguishability relations.  This matches the fact that
relations are contravariant: $w ≤ w'$ implies
$\Relation A {w'} \subseteq \Relation A {w}$. This observation is the key
justification for \(\mathbf s\) being a suitable \nameref{def:mod-interp}:
\begin{theorem} \(\mathbf s\) is a modality functor
  \(\structure L → \structure R\).\agdaref{Sem.IndexedRelations.Security}{secSpec}
\end{theorem}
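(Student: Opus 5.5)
We have to check each clause of \cref{def:mod-interp} for the assignment \(w_0 \mapsto \mathbf s(w_0)\) from \structure{L} into \structure{R}. Since \(\mathbf s(w_0)\) acts as the identity on underlying sets, every required natural transformation has identity (or, for sums, diagonal-type) underlying functions. So each check reduces to an inclusion of relations at each resource \(w\), together with the functorial action on morphisms. The ingredient used throughout is downward closure of relations: \(w ≤ w'\) implies \(\Relation A {w'} ⊆ \Relation A w\).

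First I would show that \(\mathbf s(w_0)\) is an identity-endofunctor. The hidden relation is contravariant in \(w\): if \(w ≤ w'\) and \(w_0 ≤ w\), then \(w_0 ≤ w'\), and we conclude by contravariance of \(\Relation A {}\). For the action on morphisms, take \((f_1,f_2)\) related at \(w\) in \(A \funObj B\) and inputs related at \(w_a\) in \(\mathbf s(w_0)A\). We must relate the outputs at \(w ∧ w_a\) under the hypothesis \(w_0 ≤ w∧w_a\). That hypothesis gives \(w_0 ≤ w_a\), so the inputs are related at \(w_a\), and the function relation yields the result. Next comes functoriality of \(\mathbf s\) on conversions: if \(p ≤ q\) in \(L\), the identity is a natural transformation \(\mathbf s(p) ⟶ \mathbf s(q)\). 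Here I must match the direction of \(≼\) with the direction of hiding. Assuming \(p ≤ q\) means \(q\) hides more, the inclusion \(\Relation{\mathbf s(p)A}{w} ⊆ \Relation{\mathbf s(q)A}{w}\) holds because \(q ≤ w\) implies \(p ≤ w\).

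The multiplicative part must hold up to isomorphism. The unit \(1=⊥\) gives \(\mathbf s(⊥)=\mathsf{id}\), since \(⊥ ≤ w\) is always true. For products, \(p·q = p∨q\), and \(\mathbf s(p)(\mathbf s(q)A)\) at \(w\) is \(p ≤ w ⟶ q ≤ w ⟶ \Relation A w\). This is equivalent to \(p∨q ≤ w ⟶ \Relation A w\) by the universal property of the join. The additive part only needs one direction. For \(\mathbf s(⊤) ⟶ \KObj(\otop)\), we need: if \(⊤ ≤ w\) implies \(\Relation A w\), then \(w ≤ 0 = ⊤\). This holds trivially in the lattice. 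For \(\mathbf s(p∧q)A ⟶ \mathbf s(p)A \oand \mathbf s(q)A\), I would choose \(w_a = w_b = w\), using \(w ≤ w∧w\). It remains to show \(p ≤ w → \Relation A w\) from \(p∧q ≤ w → \Relation A w\), which follows since \(p∧q ≤ p ≤ w\).

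Finally, the \(\distributive\)/\(\codistributive\) clauses. In \structure{L} I would take these sets to be inhabited (the lattice is informational). So I must show that \(\mathbf s(p)\) is lax and oplax with respect to \((\tensorObj,\unitObj)\) in \category{R}, with the usual pairing maps underneath. Oplaxity is easy: from \(p≤w → \exists w_a,w_b\ldots\) we need to split \(w\). I would take \(w_a=w_b=w\), relying on \(\tensorObj\) being a meet in this resource structure and on contravariance. The case \(p \not≤ w\) is the expected obstacle, since then the hypothesis is vacuous and any pair is related. I would handle it by choosing \(w_a=w_b=⊥\), accepting that the components are then unconstrained. This requires the hiding clause to make components trivially related at levels not above \(p\), which holds because \(p\not≤⊥\) unless \(p=⊥\), and when \(p=⊥\) the hypothesis is never vacuous. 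Laxity goes the other way. Given components related under hiding at \(w_a,w_b\) with \(w ≤ w_a∧w_b\) and \(p ≤ w\), we get \(p ≤ w_a\) and \(p ≤ w_b\), so the components are related and thus the pair is related at \(w\). The nullary cases \(\unitObj \leftrightarrow \mathbf s(p)\unitObj\) follow because \(w≤⊤\) always holds. The remaining coherence and naturality obligations are trivial because all underlying maps are the identity.
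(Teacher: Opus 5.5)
Your clause-by-clause verification is the route the paper itself relies on: it justifies the theorem by contravariance of the relations and defers the bookkeeping to the Agda development. Your checks are correct for the identity-endofunctor property, conversions, the multiplicative and additive clauses, laxity, and the nullary cases.

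The one step that fails is oplaxity in the case $p \not\leq w$. In $\category{R}$ over the lattice resources, a witness for the tensor relation at $w$ must satisfy $w \leq w_a + w_b = w_a \wedge w_b$, that is, $w \leq w_a$ and $w \leq w_b$. Your choice $w_a = w_b = \bot$ therefore needs $w \leq \bot$, which holds only when $w = \bot$. It does not help that the hiding clause at $\bot$ is vacuous for $p \neq \bot$, because the existential still demands the bound on $w$.

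The obstacle you anticipated is not actually there. Take $w_a = w_b = w$ uniformly, as you already did in the case $p \leq w$. The side condition $w \leq w \wedge w$ holds, and you must show that $p \leq w$ implies $x_1, x_2$ are related in $A$ at $w$ (and likewise $y_1, y_2$ in $B$).
\begin{itemize}
\item If $p \leq w$, the hypothesis supplies $w_a', w_b'$ with $w \leq w_a' \wedge w_b'$ and components related at $w_a'$ and $w_b'$. Contravariance then moves them down to $w$.
\item If $p \not\leq w$, the required implication is vacuous at $w$ itself.
\end{itemize}
The uniform choice also removes the case split on $p \leq w$. That split presupposes a decidable order on $L$, which the paper does not assume and which would not be available in a constructive development like the Agda one.

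A minor point: you show that every $\mathbf s(p)$ is both lax and oplax. So the distributivity and codistributivity clauses hold for whatever sets $\structure{L}$ assigns, and you need not assume those sets are inhabited.
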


We have now all the pieces to obtain non-interference for \Lamp[L],
which states formally that if a level \(l\) is more private than
\(l'\), then an input protected at \(l\) cannot influence an output at
\(l'\).
\begin{theorem}[non-interference]\label{thm:non-interference}\agdaref{Rel.Sem.Security.NonInterference}[NIWiProducts]{non-interference}
  If \(l \not ≼ l'\), and $f : \dent {\fun A \fun [l] B \BOX {l'} \Bool}$, and $⊢ u : A$, and
  $b_1,b_2 ∈ \dent B$ then $f {\dent u} b_1 = f {\dent u} b_2$.
\end{theorem}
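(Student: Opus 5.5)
Read the statement as follows: \(f\) is the \(\category{Set}\)-interpretation of a closed term \(⊢ f : \fun A {\fun {\BOX l B} {\BOX {l'} \Bool}}\) in \Lamp[L], with base types (in particular \(\Bool\)) interpreted discretely. The plan is to apply the \nameref{thm:fundamental-lemma} to the composite modality functor \(\structure L \to \structure R \to \Funct R\) given by the hiding functor \(\mathbf s\) and the embedding of \idendos, and then evaluate the resulting relational statement at one well-chosen resource. The natural choice is \(w = l'\), the observer's clearance level.

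The first step applies the fundamental lemma to the closed terms. For the empty context, the hypothesis \(\Relation{\emptyContext}{w}\) is \(w ≤ 0\), and \(0 = ⊤\) in the lattice resources of \cref{def:lattice-worlds}, so it holds at every \(w\). Hence \(\RelatedAt{(A \funObj \mathbf s(l)B \funObj \mathbf s(l')\Bool)}{w}{\dent f}{\dent f}\) for every \(w\), and likewise \(\RelatedAt{A}{w}{\dent u}{\dent u}\) for every \(w\).

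The second step unfolds the function relation (\ref{eq:rel-fun}) twice. Instantiate the first argument with \(w_a = w\) and \(x_1 = x_2 = \dent u\). Instantiate the second argument with resource \(w_b = w\) and \(x_1 = b_1\), \(x_2 = b_2\). For that second instantiation we need \(\RelatedAt{\mathbf s(l)B}{w}{b_1}{b_2}\), i.e.\ \((l ≤ w) \to \RelatedAt{B}{w}{b_1}{b_2}\). If \(l \not≤ w\), this holds vacuously for arbitrary \(b_1, b_2\). The conclusion is then a relation at resource \(w+w+w = w ∧ w ∧ w = w\), using that \(+\) is meet and hence idempotent. Up to this identification, and contravariance if needed, we obtain \((l' ≤ w) \to \RelatedAt{\Bool}{w}{f\dent u b_1}{f\dent u b_2}\).

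The third step sets \(w = l'\). Then \(l \not≤ l'\) is exactly the hypothesis \(l \not≼ l'\), and \(l' ≤ l'\) discharges the premise of the output hiding functor. This leaves \(\RelatedAt{\Bool}{l'}{f\dent u b_1}{f\dent u b_2}\), which is equality because \(\Bool\) is discrete. This gives the claim.

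I expect the main obstacle to be bookkeeping rather than an idea. First, the order relation on resources must match the modality order \(≼\) under \(\mathbf s\); in particular, we need to confirm that \(≤\) in \cref{def:raise-functor} is the same order in which \(l \not≼ l'\) gives \(l \not≤ l'\). Second, the resource \(w+w_a+w_b\) must collapse to \(w\) via idempotence of meet. Third, the interpretation of \(\Bool\) must be genuinely the diagonal relation at every level. If the order convention turns out reversed, I would instead choose \(w\) as the appropriate level so that the \(l\)-premise fails while the \(l'\)-premise holds. The hypothesis \(l\not≼ l'\) is precisely what guarantees such a level exists.
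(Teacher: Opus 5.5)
Your proposal is correct and takes the same route as the paper. You apply the fundamental lemma, via the modality functor \(\mathbf s\) into the relational model, at resource level \(l'\). The \(l\)-hidden input relation is then vacuous because \(l \not\leq l'\), and the output relation on \(\Bool\) at \(l'\) is plain equality. Your extra bookkeeping only makes explicit what the paper's sketch leaves implicit: the empty-context relation holds since \(0 = \top\), and \(l' \wedge l' \wedge l'\) collapses to \(l'\). The order-convention worry does not arise, since \(\preceq\) on \(\structure L\) and \(\leq\) on resources are both the lattice order.
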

\begin{proof}
  The proof is an application of the \nameref{thm:fundamental-lemma}
  at resource-level \(l'\), and essentially relies on the definition
  of the \nameref{def:mod-interp} \(\mathbf s\).  Because the inputs
  are at a higher level \(l\), they are not observable and thus always
  related. Because outputs are at the default level (that for which we
  use the \nameref{thm:fundamental-lemma}) the output relation is the
  standard one for booleans (simple equality).
\end{proof}

\section{Computationally-Tracking Resource Semantics (Presheaf Model)}\label{sec:computational-model}

The model of \cref{sec:forgetful-model} does not account for what
happens in the (co)effectful models, because it ignores the
computational content associated with resource conversion. In this
section we amend it to account for computational effects.

First, the resource structure must be a category rather than a set:
\begin{definition}[resource structure]\label[definition]{def:resource-frame-2}
  \agdaref[Resource structure]{Sem.Presheaf.Worlds}{Monoid}
  A resource structure is a category \(\category W\), with \(w ⇝ w'\) as hom-sets, equipped with the
  symmetric monoidal structure \((+,0)\).
\end{definition}
The homset \(w ⇝ w'\) is a proof-relevant version of the predicate \(w ≤ w'\) used in the plain Kripke semantics.
The laws required by the
\nameref{def:resource-module} structure are thus witnessed by morphisms in the category
of resources, which are also required to be natural.\agdaref[Modality-world module]{Sem.Presheaf.Worlds}{ModuleFor}

Second, the target category is one of presheaves rather than just
downward-closed sets. In fact a downward-closed set is exactly
a contravariant functor from the preordered set \(W\) to \category{Prop},
and a presheaf is a contravariant functor from the category
\(\category W\) to \category{Set}.  Thus for each type \(A\), we have a functor that
maps each resource \(w\) to a set \(\Needs A w\) of
\emph{values}. Each value is a witness that \(w\) produce \(A\). In particular, a value may contain explicit reference to resource transformations.
The presheaf action on a conversion \(f : w ⇝ w'\) turns any value \(u : \Needs A {w'}\) into
an element of \(\Needs A w\).
Following \citet{scott1980relating}, we write \(u↿f\) for this action.
As any functorial action, it must respect identity \(u ↿ \mathsf{id} = u\) and composition\((u↿(f∘g) = u↿f↿g)\).

\providecommand\ObjNeeds[2]{#2 \cobj{⊩} #1}

\begin{definition}[category of presheaves over resources]\label[definition]{def:world-presheaves-category}
  Presheaves over a resource category \category{W} form a \nameref{def:bicat}, hereafter called \(\category {\widehat W}\).
The action of object formers on resources are:
\begin{align*}
  \ObjNeeds {{\unitObj}} w & = w ⇝ 0 \\
  \ObjNeeds {({A \tensorObj B})} w  & = \int^{w_a, w_b} \ObjNeeds A {w_a} \quad\times\quad \ObjNeeds B {w_b} \quad\times\quad (w ⇝ w_a + w_b) \\
  \ObjNeeds {{\topObj}} w & = \top \\
  \ObjNeeds {({A \withObj B})} w & = \ObjNeeds A w \quad\times\quad \ObjNeeds B w \\
  \ObjNeeds {({A \funObj B})} w & = \prod_{w_a} \left(\ObjNeeds A {w_a} \quad\to\quad \ObjNeeds B {(w + w_a)}\right)
\end{align*}
\end{definition}

Most cases extend the non
computationally-tracking variant straightforwardly, but \({A \tensorObj B}\) deserves a
commentary. It is the Day convolution of the presheaves \(A\) and
\(B\). It is similar to the set
\(∃_{w_a, w_b} (\ObjNeeds A {w_a}) \times (\ObjNeeds B {w_b}) \times (w ⇝ w_a + w_b)\), but
with an additional naturality condition, expressed by using a coend
instead of an existential.  What is this additional condition? Consider
two triples \((u₁,v₁,h₁)\) and \((u₂,v₂,h₂)\) inhabiting the above
set.  We want to consider them equal when, for any conversions
\(f,g\) such that \(u₁↿f = u₂\) and \(v₁↿g=v₂\) we have \(h₁= (f⊗g)∘h₂\).
That is, it does not matter if we convert by \(f\) and \(g\)
separately in the components or by \(f⊗g\) as part of the resource
splitting.

In the \(\downset W\) category, morphisms are trivial, but not in \(\category {\widehat W}\). Hence we
must check that they satisfy the laws of \nameref{def:bicat}, and that
they respect naturality in resources: for every morphism \(h : \ObjNeeds A w → \ObjNeeds B w\) we have
\(h (u↿f) = h (u )↿f\). The proofs are sometimes tedious, but machine-checked.
  \agdaref[Category of presheaves over resources]{Sem.Presheaf}{cat}

Third, the resource-scaling functor family must be amended in a similar manner to the product.
\begin{definition}\label[definition]{def:presheaf-mul-by}
  \agdaref[Resource scaling functor]{Sem.Presheaf}{mulBy}
  For any modality \(s\) in \(\modSet Q\), \quad
  \(\ObjNeeds {\mathbf q(s)(A)} w = \int^{w'}  \ObjNeeds A {w'} ~×~ (w ⇝ s ◂ {w'})\).
   It is the left Kan extension of \(⟦A⟧_{\widehat W} \) along $s$.
\end{definition}

A payoff of this model is that
the \nameref{thm:soundness} now becomes an evaluator:
\begin{corollary}[evaluation]\label{thm:presheaf-soundness}
  \agdaref{Sem.Presheaf.Properties}[Soundness]{sound}
  Assume \(g : \TyConst → \widehat W\) a map from base types to resource presheaves.
  Then for any modality structure \structure{F}, we can evaluate a closed term \(\emptyContext ⊢ t : A\) of \Lamp[F],  to a value in \(\Needs A 0\).
\end{corollary}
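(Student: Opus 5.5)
The plan is to follow the proof of \cref{thm:soundness} step by step, replacing the category of downward-closed resource sets \(\downset W\) by the presheaf category \(\category{\widehat W}\) of \cref{def:world-presheaves-category}. First I take as given that \(\category{\widehat W}\) is a \nameref{def:bicat}. The object formers are the standard closed monoidal structure on presheaves: Day convolution for \(\tensorObj\), pointwise products for \(\withObj\), and the right adjoint of Day convolution for \(\funObj\). The laws are checked mechanically, as the paper says. The base-type map \(g\) then lands directly in the objects of this category.

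The second step is to show that \(\mathbf q\) from \cref{def:presheaf-mul-by} is a \nameref{def:mod-interp} \(\structure F \to \Funct{\widehat W}\), where \(\structure F\) is the modality structure of the given modality-resource module. Each \(\mathbf q(s)\) is a left Kan extension along \(s ◂ -\), and \(s ◂ -\) is a functor on \(\category W\) by the monotonicity law (\ref{law:module-mono}), so \(\mathbf q(s)\) is an endofunctor on presheaves. The required structure is obtained as follows.
\begin{itemize}
\item The isomorphisms \(\mathbf q(s·s') \cong \mathbf q(s)∘\mathbf q(s')\) and \(\mathbf q(1)\cong \mathsf{id}\) come from composition of Kan extensions (coend Fubini plus the co-Yoneda lemma), using the module equalities \(s◂(s'◂w)=(s·s')◂w\) and \(1◂w=w\).
\item The oplax maps \(\mathbf q(s+s')\to \mathbf q(s)\tensorObj\mathbf q(s')\) and \(\mathbf q(0)\to \KObj(\unitObj)\) are built by postcomposing the stored conversion \(w ⇝ (s+s')◂w'\) with the module morphism \((s+s')◂w' ⇝ s◂w' + s'◂w'\), and similarly with \(0◂w' ⇝ 0\).
\item Laxity under \(\distributes s\) uses the morphism from law (\ref{law:module-distribute}). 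Oplaxity under \(\codistributes s\) uses law (\ref{law:module-codistribute}) together with \(s◂0=0\) for the units.
\item Modality morphisms \(φ : s ≼ s'\) act by \(φ◂\mathsf{id} : s◂w' ⇝ s'◂w'\).
\end{itemize}

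Third, I apply the \nameref{thm:semantics}: this composes \cref{thm:syntax-mapping} with \cref{thm:semantics1}. A closed term \(\emptyContext ⊢ t : A\) is interpreted as a morphism \(\unitObj \to ⟦A⟧\) in \(\category{\widehat W}\), that is, a natural family of functions \((w ⇝ 0) \to \ObjNeeds{⟦A⟧}{w}\). Instantiating at \(w=0\) with the identity conversion \(\mathsf{id}_0\) gives the desired value in \(\Needs A 0\). Since everything is a constructive construction rather than a proposition, this value is the result of evaluating \(t\).

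The main obstacle is the second step. Each construction has to be well defined on the coend, meaning it must respect the quotient described after \cref{def:world-presheaves-category}. This relies on naturality of the module morphisms, which \cref{def:resource-frame-2} explicitly requires. One must also check naturality in \(A\) and the functor laws for \(\mathbf q\) on natural transformations. I would handle this by defining every map on representatives \((u,h)\) and then verifying invariance under the generating relation \((u↿f, h) \sim (u, (s◂f)∘h)\), which reduces each check to the functoriality of \(s◂-\) and to the naturality squares of the module laws.
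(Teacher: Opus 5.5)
Your proposal is correct and follows the same route as the paper. The paper gives no separate proof: this corollary is the proof of \cref{thm:soundness} transplanted to \(\category{\widehat W}\). You use \(\mathbf q\) as a modality functor, interpret \(t\) through the categorical semantics as a morphism \(\unitObj \to ⟦A⟧\), and evaluate it at resource \(0\) on \(\mathsf{id}_0\). Your second step checks, on coend representatives, the modality-functor laws that the paper leaves to the Agda mechanisation, and those checks are sound.
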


\subsection{Example: Parallel Execution Time}\label{sec:parallel-timings}
\agdacomment{Example: Parallel Execution Time}
\agdacomment{--------------------------------}
\agdablank

Let us now considering \emph{time} as a resource.
We let the resource structure be the positive real numbers, and use \emph{maximum} as a monoid, yielding an \nameref{def:informational-modality} set:
\begin{definition}[time resources]
\(    \begin{array}[t]{r@{\;}l}
      W       & := ℝ^+ \\
      p ≤_W q & := p ≥_ℝ q
\end{array}
\hspace{1cm}
    \begin{array}[t]{r@{\;}l}
      0_W     & := 0_ℝ \\
      p +_W q & := \max(p,q)  \\
    \end{array}
\)\end{definition}
The model gives us \(\Needs A w\), the set of \(A\)-values that can be
produced in \(w\) units of time.  Even though the category {\category
  W} is a preorder in this case, and thus conversions have no useful content, the
presheaf model lets us associate arbitrary sets to base types. For
instance, assuming that we have the base type $b_{ℕ}$ of natural
numbers, we can let $w ⊩ b_{ℕ}$ be the subset of natural numbers which
can be generated in $w$ units of time or less.  For instance,
picturing a unary representation, we may have
$w ⊩ b_{ℕ} = \braces { n ∈ ℕ ∣ n ≤_{ℝ} w}$. For any \(u ∈ w ⊩ b_{ℕ}\) and any \(f:w'≤_W w\) we have \(u ↿f = u\). Indeed, the order is opposite to the usual
order on natural numbers, and \(f\) witnesses
\(w'≥_{ℝ}w\), and thus \(\Needs {b_{ℕ}} w ⊆ \Needs {b_{ℕ}} {w'}\). That is to
say, when getting more time, one can still produce (at least) the same
values.  In fact this inclusion is true for all types. In general, \(\Needs A 0\) corresponds to
values which are ready \emph{now}, and \(\Needs A w\) corresponds to
values which will be ready \(w\) units of time in the future.
\begin{remark}
  The category {\category W} is a semi-lattice (\cref{def:lattice-worlds}), and thus both products coincide:
  \begin{align}
      \ObjNeeds {\unitObj} w & = \top \\
      \ObjNeeds {({A \tensorObj B})} w & = \ObjNeeds A w \quad\times\quad \ObjNeeds B w \\
      \ObjNeeds {{A \funObj B}} w  & = \prod_{u≤_W w} ( \ObjNeeds A u \quad →\quad \ObjNeeds B u) \label{eq:presheaf-lattice-fun}
    \end{align}
\end{remark}
In timing terms, the pair \(A \tensorObj B\) is ready when both
components are ready; so with this setup the model corresponds to
maximal parallelism.  Additionally function application does not
take any time all (\ref{eq:presheaf-lattice-fun}): the result of a
function application is ready when both the function and the argument
are ready. In order ascribe time to primitive functions, we can annotate their
type with a modality, as explained below.

Modalities correspond to \emph{relative} timings, or delays, which we can choose
to be in \(ℝ⁺\). Their structure is borrowed from the so-called \emph{arctic
semiring}:
\begin{definition}[timing modalities]
  \(
    \begin{array}[t]{r@{\;}l}
      M     & := ℝ⁺  \\
      p + q & := \max(p,q)  \\
      0     & := 0_ℝ \\
    \end{array}
    \hspace{1cm}
    \begin{array}[t]{r@{\;}l}
      p ≤ q & := p ≥_ℝ q \\
      p · q & := p +_ℝ q \\
      1     & := 0_ℝ \\
    \end{array}
  \)\end{definition}
  \begin{remark}
    The above structure forms a ring and is an extension of the
    monoidal structure of time as resource, so it is clear that \((M, W, ·)\)
    forms a \nameref{def:resource-module}.
  \end{remark}

  \Cref{def:presheaf-mul-by} simplifies to:
  \begin{align*}
    \ObjNeeds {\mathbf q(p)(A)} w
    & = \int^{w'}  \ObjNeeds A {w'} ~×~ {w ≥_{ℝ} p + w'} & \text{by definition} \\
    & = \ObjNeeds A {(w - p)}  & \text{by closure of \(A\)}
  \end{align*}
  Hence, qualification by \(p\) adds an artificial waiting time of
  \(p\). Equivalently, it lets us consider the values
  that were ready \(p\)-time in the past. Thus \(\fun {\BOX p A} B\) is
  interpreted as a function converting an \(A\)-value that was
  ready \(p\)-time in the past to a \(B\)-value ready now. In more
  direct terms, this type refers to a function taking \(p\) time to perform its
  work.  The \Lamp principles match well with this
  meaning of modalities.
  For instance, \ref{principle:convertibility} lets us add
  arbitrary waiting times, and the \ref{principle:product} becomes
  \(\BOX p {(\BOX q A)} ↔ \BOX {p +_ℝ q} A\) and corresponds to
  compounding delays.





\subsection{Completeness}
\agdacomment{Subsection: Completeness}
\agdacomment{------------------------}
\agdablank

It may be still unclear if the resource model can account for
arbitrary functors, such as (co)monadic effects. It can: the key is to use a
\nameref{def:resource-module} which captures \emph{exactly} how a type can be
constructed, by letting resources be objects and modalities be functors.
By fully reflecting the \nameref{def:bicat}
structure, this instance of the resource model is complete wrt. the
chosen \nameref{def:bicat}.

\paragraph{Complete model}
Let us pick an arbitrary \nameref{def:bicat} \category{C}.
Then let the \nameref{def:resource-frame-2} arise from the \((\Tensor,\unitObj)\) monoid
from \category C (\(+ ≔ \Tensor\) ; \(0 ≔ \unitObj\) ; \(⇝~≔~→\)
) %
\agdaref[Worlds for completeness]{Sem.Presheaf.Properties}[Completeness]{worlds} and the
associated scalars be the
\nameref{def:functor-modalities}: \(\structure Q := \Funct{C}\). \agdaref[Modalities for
completeness]{Sem.Presheaf.Properties}[Completeness]{functorModul} Then we have a \nameref{def:resource-module}.
The \((◃)\) operation is the action of those functors. The module properties are immediate consequences of the
definitions. In particular, the distributivity properties are satisfied thanks to \cref{lem:distr-lax}.
In the arising presheaf model, a value \(u ∈ w ⊩ A\) contains \(\category C\) morphisms.

\begin{theorem}[completeness]\label{thm:completeness}
  Assume arbitrary base types \(\TyConst\) and arbitrary modalities \(\structure F\).
  Let \(g : \TyConst → \mathsf{Obj}(\category C)\), and \(\mathbf h : \structure F → \Funct C\).
  Then any type \(A\) can be interpreted to an object \(⟦A⟧\) in \category{C}, and:
  \begin{center}
    Given \(v ∈ (\Needs A 0)\) coming from the complete model, we can construct a morphism \(\unitObj \hom_{\category C} ⟦A⟧\).\agdaref[Completeness of resource-module semantics]{Sem.Presheaf.Properties}[Completeness]{complete}
  \end{center}
\end{theorem}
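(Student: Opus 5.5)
The plan is a Yoneda-style "quote" argument. We prove by induction on \(A\) a stronger, resource-indexed statement and then specialise it to \(w = 0 = \unitObj\). Concretely, we construct for every type \(A\) and every object \(w\) of \category{C} (a resource in the complete model) a function
\[
\mathsf{quote}_A : (\Needs A w) \to (w \hom_{\category C} ⟦A⟧),
\]
natural in \(w\): for \(f : w' \to w\) we require \(\mathsf{quote}_A(u↿f) = \mathsf{quote}_A(u)∘f\). Because some cases (notably functions) need to inject syntax back into the model, we simultaneously construct an inverse-direction map
\[
\mathsf{reflect}_A : (w \hom ⟦A⟧) \to (\Needs A w),
\]
again natural in \(w\). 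The theorem is then \(\mathsf{quote}_A\) at \(w = \unitObj\).

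The cases go as follows.
\begin{itemize}
\item For base types we take \(g\) for the complete model to be the representable presheaf \(w \mapsto (w \hom g(\basetype))\), so quote and reflect are the identity.
\item For \(\unitObj\), a value is a morphism \(w \to \unitObj\), and we return it unchanged.
\item For \(\topObj\) we use \(ε\).
\item For \(A\withObj B\) we pair with \(▴\), and reflect by post-composing with \(π_i\).
\item For \(A\tensorObj B\), an element of the coend is a triple \((u,v,h)\) with \(h : w \to w_a\tensorObj w_b\). We map it to \((\mathsf{quote}_A u \otimes \mathsf{quote}_B v)∘h\). Well-definedness on the coend is exactly the naturality of the quotes. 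Reflection sends \(m\) to \((\mathsf{reflect}_A\,\mathsf{id},\ \mathsf{reflect}_B\,\mathsf{id},\ m)\) at \(w_a = ⟦A⟧\) and \(w_b = ⟦B⟧\).
\item For \(\BOX p A\) with \(p\) interpreted as the functor \(\mathbf h(p)\), a value is a pair \((u, h)\) with \(u ∈ \Needs A {w'}\) and \(h : w \to \mathbf h(p)(w')\). We quote it to \(\mathbf h(p)(\mathsf{quote}_A u)∘h\), using functoriality of \(\mathbf h(p)\). Reflection is \((\mathsf{reflect}_A\,\mathsf{id}_{⟦A⟧}, m)\). Coend well-definedness again follows from naturality of quote and functoriality.
\item For \(A\funObj B\), given \(u ∈ \Needs {(A\funObj B)} w\), we instantiate the end at \(w_a := ⟦A⟧\) with the argument \(\mathsf{reflect}_A(\mathsf{id}_{⟦A⟧})\). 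This yields an element of \(\Needs B {(w\tensorObj ⟦A⟧)}\), and we quote it and curry. Reflection of \(m : w \to (⟦A⟧\funObj⟦B⟧)\) is the family sending \(a ∈ \Needs A {w_a}\) to \(\mathsf{reflect}_B(\appl∘(m\otimes \mathsf{quote}_A a))\).
\end{itemize}
This is why the two directions must be defined by simultaneous induction.

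The main obstacle is bookkeeping rather than concept: checking that each quote and reflect respects the coend quotients for \(\tensorObj\) and \(\mathbf q(s)\), and that each is natural in \(w\). These hinge on the naturality of the inductive hypotheses and on the coherence of the monoidal structure of \category{C}, which makes them tedious but routine. This fits the machine-checked development.

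Note that we do not need quote and reflect to be mutually inverse, only to exist. Hence no \(\beta\eta\)-style equations must be verified, and the modality-structure data (\(\distributes{}\), \(\codistributes{}\), the maps \(\mathbf h(p+q) ≼ \cdots\)) play no role. Only the object part \(\mathbf h(p)\) and its functorial action are used, via \cref{lem:distr-lax} only insofar as the complete model is already a resource module.
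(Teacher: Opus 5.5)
Your proposal is correct and is essentially the paper's proof. The paper also proceeds by mutual induction on \(A\), defining a natural reification \((w \Vdash A) \to (w \to ⟦A⟧)\) together with a reflection. Your clauses for \(1\), \(\top\), \(A \otimes B\), \(A \With B\), \(A \multimap B\) and \(\BOX p A\) coincide with its clauses, including instantiating the function case at \(⟦A⟧\) with the reflected variable and quoting boxes as \(\mathbf h(p)(\mathsf{quote}_A u)\circ h\).

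The only difference is cosmetic. The paper states reflection as a single generic element \(\mathord{\uparrow}_A \in (⟦A⟧ \Vdash A)\), while you use a natural family; the two are interchangeable by Yoneda, via \(\mathsf{reflect}_A(m) = \mathord{\uparrow}_A \upharpoonright m\) and \(\mathord{\uparrow}_A = \mathsf{reflect}_A(\mathrm{id})\).
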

\newcommand\reify{\mathord{{↓}}}
\newcommand\reflect{\mathord{{↑}}}
\begin{proof}
  We prove the following two lemmas by mutual induction on the structure of the type \(A\):
  \begin{description}

  \item[Reification] If an object \(X\) produces \(A\) in the
    sense of the complete model, then it also produces its
    interpretation, \textit{via} a morphism.  That is, given an element \(u ∈ \Needs A X\), we can construct a morphism
    \(\reify^Au : X \hom_{\category C} ⟦A⟧\). Its definition is:\agdaref[Reification]{Sem.Presheaf.Properties}[Completeness]{reify}
  \[\begin{array}[t]{r@{\;}l}
    \reify^{A⊗B} (u,v,f) & = (\reify^A u ~⊗~ \reify^B v) ∘ f \\
    \reify^{A \With B} (u,v) & = (\reify^A u) ~▴~ (\reify^B v) \\
    \reify^{A⊸B} u & = \curry {\reify^B (u(\reflect_A))} \\
  \end{array}\hspace{1cm}
  \begin{array}[t]{r@{\;}l}
    \reify^{1} u & = u \\
    \reify^{\Ttop} u & = ε \\
    \reify^{\BOX p A} (u , f) & = \mathbf h (p) (\reify^A u) ∘ f \\
  \end{array}\]

    Furthermore, reification is natural: \(\reify^A {(u↿f)} = \reify^Au ∘ f \).

  \item[Reflection] The interpretation of a type as an object produces it in the
    sense of the resource semantics: for every type \(A\) there is an element \((\reflect_A) ∈ \Needs A {⟦A⟧_{\category C}}\).
    \agdaref[Reflection]{Sem.Presheaf.Properties}[Completeness]{reflect}
    The definition is:
    \[    \begin{array}[t]{r@{\;}l}
  \reflect^{A⊗B} & = ((\reflect^A) , (\reflect^B) , id) \\
  \reflect^{A \With B} & = ((\reflect^A) ↿ π₁ , (\reflect^B) ↿ π₂) \\
  (\reflect^{A⊸B})(u) & = (\reflect^B) ↿ (\appl ∘ (id ⊗ (\reify^A u) )) \\
\end{array}\hspace{1cm}
\begin{array}[t]{r@{\;}l}
  \reflect^{1} & = id \\
  \reflect^{\Ttop} & = ◆ \\
  \reflect^{\BOX p A} & = ((\reflect^A), \mathbf h (p) id) \\
\end{array}\]

  \end{description}
\end{proof}


\section{Normal Syntax}\label{sec:normalisation}
\agdacomment{Section: Normalisation}
\agdacomment{----------------------}
\agdablank

We now have the groundwork laid out to derive a normalising
syntax. Observe that \cref{thm:presheaf-soundness} can be used to map
\Lamp{} to a \nameref{def:world-presheaves-category}. The morphisms in
the presheaf category effectively perform the computations in
\category{Set} (the set of values). We also have
\nameref{thm:completeness}, which maps {\category {\widehat C}} back
to \category{C} via \nameref{def:bicat}. Thus, the composition of
these two results provides a \emph{normalisation by evaluation} (NbE)
procedure~\citep{coquand1991algorithm}. Much of the literature on NbE
is concerned with the computation of normal forms \textit{per se}, but
here we will also use it to derive the syntax of normal forms. We
could let \category{C} be the free \nameref{def:bicat}, but it is
preferable to target a language which excludes redexes \emph{by
  construction}.  This property is ensured by dividing the language
into two separate syntactic categories, one for constructors and one
for eliminators.  Additionally, we'd rather not target a category
(which would yield a point-free language), but rather a lambda
calculus with usual named variables.  We write \(Γ ⊨ A\) for the typing judgement associated with
constructors. Strikingly, the contexts for this language cannot be
\Lamp{} contexts.  Instead, we must use the \emph{free}
\nameref{def:resource-module} generated by syntactic modalities (for
scalars) and types (for resources).\footnote{Indeed, in \Lamp{}
  context scaling distributes on contexts \emph{by definition}, and
  consequently it would not be possible to reflect \(\BOX p\) types.}
In practice, this means two things. First, the syntax for contexts is
extended with explicit scaling:
\providecommand\singletonContext[1]{(#1)}
\providecommand\pairContext[2]{(#1, #2)}
\begin{align*}
  Γ,Δ,Ξ,Φ ::= \emptyContext ∣ Γ, x : A ∣ p ◃ Γ
\end{align*}
Second, patterns are extended with the means to manipulate such
scaling. In fact, scaling patterns is not sufficient, one must scale
whole \emph{pattern bindings}. The resulting syntax is shown in
\cref{fig:final-patterns}.  These pattern bindings correspond to the
laws of a \nameref{def:resource-module}.

The associated presheaf semantics arising from the resource structure
of \((⇝)\) provides a notion of \emph{value}, \(\Needs A {Γ}\).
\Cref{thm:presheaf-soundness} gives us that any \Lamp{} term \(Γ ⊢ A\)
evaluates to an element of \(\Needs A {Γ}\). We can proceed as in the
proof of \nameref{thm:completeness}.  The following two lemmas are
then proven by mutual induction on the type \(A\):
\begin{description}
  \item[Reification] Given a value \(v ∈ \Needs A {Γ}\), we can construct a normal form \(Γ ⊨ (\reify^A{v}) : A\). \agdaref{Normal.Free}{reify}
  \item[Reflection] For any type \(A\), there is a value \((\reflect_A) ∈ (\Needs A {\singletonContext {x:A}})\)\agdaref{Normal.Free}{reflect}
  \end{description}
  
We discuss normal forms below, in conjunction with the reification
lemma. But let us focus on reflection first. Its proof follows the
same structure as the corresponding lemma in the
\nameref{thm:completeness} theorem. Whenever a morphism is used in the
completeness proof, we add it to the generators of the pattern binding
syntax \(⇝\) as an explicit eliminator, one for each type
(\cref{fig:final-patterns}). These eliminators work on variables and
bind their results to variables directly, rather than acting on
arbitrary terms as in \Lamp{}. Notably, the application rule applies a
function (variable \(x\)) to a normal form (\(t\)). In sum, the
\((⇝)\) sub-language corresponds to generalized eliminators.

  \begin{figure}
    \figsepar{Pattern binding composition (Category structure)}
    \begin{mathpar}
      \inferrule{}{id : Γ ⇝ Γ}
      \and
    \inferrule{f : Γ ⇝ Δ  \\ g : Δ ⇝ Ξ}{f ; g : Γ ⇝ Ξ}
    \end{mathpar}
    \figsepar{Context reordering (Symmetric monoidal structure)}
  \begin{mathpar} %
    \inferrule{f : Γ_1 ⇝ Δ_1  \\ g : Γ_2 ⇝ Δ_2 }{f,g ~:~ Γ_1,Γ_2 ⇝ Δ_1,Δ_2}
      \and
    \inferrule{}{σ : Γ,Δ ⇝ Δ,Γ}
  \end{mathpar} %
    \figsepar{Context scaling manipulation (functorial properties thereof)}
  \begin{mathpar} %
    \inferrule{}{π : (p+q)◃Γ ⇝ p◃Γ, q◃Γ}
      \and
    \inferrule{}{ζ : 0◃Γ ⇝ \emptyContext}
      \and
    \inferrule{}{α : pq ◃ Γ ↭ p◃(q◃Γ)}
      \and
    \inferrule{}{ρ : 1◃Γ ↭ Γ}
      \and
    \inferrule{φ : p≼q \\ f : Γ ⇝ Δ}{φ(f) ~:~ p◃Γ ⇝ q◃Δ}
      \and
    \inferrule{\distributes{p}}{δ : p◃(Γ,Δ) ⇝ p◃Γ,p◃Δ}
      \and
    \inferrule{\codistributes{p}}{δ' : p◃Γ,p◃Δ ⇝ p◃(Γ,Δ)}
      \and
    \inferrule{\distributes{p}}{ε : p ◃ \emptyContext ⇝ \emptyContext}
      \and
    \inferrule{\codistributes{p}}{η : \emptyContext ⇝ p ◃ \emptyContext}
  \end{mathpar}
    \figsepar{Eliminators}
  \begin{mathpar}
    \inferrule{}{{\TBOX p y} = x ~:~  \singletonContext {x:\BOX p A} ⇝ p ◃ \singletonContext {y : A}}
    \and
    \inferrule{}{(x,y)=z~:~\singletonContext {z:A \Tensor B} ⇝ \pairContext {x{:}A} {y{:}B}}
    \and
    \inferrule{}{()=x~:~\singletonContext {x{:}\Unit} ⇝ \emptyContext}
    \and
    \inferrule{}{y = \tout i x~:~\singletonContext {x : A_1 \With A_2} ⇝ \singletonContext {y : A_i} }
    \and
    \inferrule{Γ ⊨ t : A}{y = x(t)~:~Γ, (x : \fun A B) ⇝ \singletonContext {y : B} }{\rul{app}}
  \end{mathpar}
  \figsepar{Term binding}
  \[\inferrule {Γ ⊨ t : A} {x=t ~:~ Γ ⇝ \singletonContext{x : A}}\]
  \figsepar{Normal forms (constructors)}
\begin{mathpar}
  \inferrule{s ~:~ Γ ⇝ \singletonContext {x{:}b} }{Γ ⊨ \wlet s x : b}
  \and
  \inferrule{s ~:~ Γ ⇝ \emptyContext} {Γ ⊨ \wlet s () : \Unit}
  \and
  \inferrule{s : Ξ ⇝ Γ,Δ\\Γ ⊨ t : A\\Δ ⊨ u : B}{Ξ ⊨ \wlet s (t,u) : A \Tensor B}
  \and
  \inferrule{\bla}{Γ ⊨ \ttop{Γ} : \Ttop}
  \and
  \inferrule{\forall i.~ Γ ⊨ tᵢ : A_i}{Γ ⊨ \twith {t₁} {t₂} : A_1 \With A_2}
  \and
  \inferrule{Γ,x:A ⊨ t : B}{Γ ⊨ \lam x t : \fun A B}
  \and
  \inferrule{Γ ⊨ t : A \\ s ~:~ Δ ⇝ p◃Γ }{Δ ⊨ \wlet s \TBOX p t : \BOX p A}
\end{mathpar}
\caption{Syntax of \PLamp{}. \agdaref{Normal.Syntax}{_≼_}.  We resort
  to a point-free syntax for scaling pattern bindings,
  because such scaling operations do not touch variable names or types; only
  modalities.  The symbol \(↭\) means that one can convert in both
  directions. For instance, \(ρ : 1◃Γ ↭ Γ\) stands for
  \(ρ⁺ : 1◃Γ ⇝ Γ\) and \(ρ⁻: Γ ⇝ 1◃Γ\). The associators of the symmetric monoidal structure are silent.}
    \label{fig:final-patterns}
  \end{figure}


  It remains to spell out the reification lemma.  In the \cref{thm:completeness},
  the corresponding lemma builds a morphism in the chosen
  category, which in this section would be
  \(Γ ⇝ \singletonContext {x{:}A}\). Applying the recipe literally would
  mean adding constructors to that category, mixing them with
  eliminators. But then it would not be obvious that terms are
  normalised. So instead we define a new syntactic category of normal
  forms, \(Γ ⊨ t:A\). To find out its inhabitants, we simply follow the
  structure of \cref{thm:completeness} and assert the pieces of
  syntax which we need, one for each type.

  What we have obtained is a \emph{polarised} syntax for our semantics. That is,
every intermediate result is bound to a variable.\agdaref{Normal.Syntax}{_⊨_}
A property of the constructor forms is that they can always be prefaced by a pattern binding. That is, the following rule is admissible:
\(\inferrule{Δ ⊨ t : A \\ f : Γ ⇝ Δ}{Γ ⊨ \wlet f t : A}\)\agdaref{Normal.Syntax}{_↿_}. The idea is
that the bindings (\(f\)) can be pushed down the syntax tree until
reaching one of the constructions with native binding (variable, \((t,u)\) or \(\TBOX p {t}\)) and inserted there.
Note that this procedure never performs a computation: it consists in
prepending a context transformation before constructors, not the other way around. This operation the restriction map for \(⊨ A\), as a presheaf.

The equational theory of the pattern-binding fragment is given by the laws of a \nameref{def:resource-module} (\cref{fig:eq-neg}).
Those correspond to commuting conversions in sequent calculi presentations.
Equality for \((⊨)\) just a congruence for each constructor. However, congruence for \((t,u)\) and \(\TBOX p {t}\) is unusual:
they assert that bindings can equally be applied to
the compound or to the constituents, as in the presheaf semantics.

\section{Completing the calculus}
Unlike \Lamp{}, the language of normal forms presented above has a (well-behaved) equational theory.
However, by definition, it contains only normal forms and thus cannot represent computation. This
shortcoming is addressed in this section. The resulting language
is called \PLamp{}. We prove that \PLamp{} retains all the
desirable theoretical properties of a core calculus. One top of normal forms, we only
add the binding of an arbitrary term: \agdaref{Normal.Syntax}{cut}
\[\inferrule {Γ ⊨ t : A} {x=t ~:~ Γ ⇝ A}\]

\begin{figure}
  \figsepar{Equivalence}
  \begin{mathpar}
    {f ≅ f} \and 
    \inferrule{f ≅ g\\g ≅ h}{f ≅ h} \and 
    \inferrule{f ≅ g}{g ≅ f} \and 
  \end{mathpar}

  \figsepar{Laws of a \nameref{def:resource-module} (naturality of context scaling manipulation)}
  \begin{mathpar}
    id(id) ≅ id  \and
    φ(f) ; ψ(g) ≅ (φ ∘ ψ)(f ; g)  \and
    δ'; (id(f), id(g)) ≅ id(f,g) ; δ'  \and
    δ; id(f,g) ≅ (id(f), id(g)) ; δ  \and
    π ; id(g) ≅ (id(g), id(g)) ; π  \and
    ζ ; id(f) ≅ ζ  \and
    x=z(t);(id, f) ≅ x=z(\wlet f t)  \and
    ρ⁺; f ≅ id(f) ; ρ⁺  \and
    ρ⁻;f  ≅  ρ⁻;id(f)  \and
    α⁺ ; (φ ∘ ψ)(f) ≅ φ(ψ(f)) ; α⁺ \and
    (φ ∘ ψ)(f) ; α⁻ ≅ α⁻ ; φ(ψ(f)) \and
  \end{mathpar}


   \figsepar{Administration of let bindings}
  \begin{mathpar}
  ((x=t) ; f)   \; ≅ \;   (x = \wlet f t) 
    \and
    {(x=(\wlet f x))\; ≅ \; f} 
    \and
    (\wlet {x=t} x)  ≅  t 
\end{mathpar}
  \figsepar{Computational rules}
  \begin{mathpar}
    (y=x(u) ; x = λz.t)  ≅ (y=t ; z=u)  \and 
    ((x,y)=z ;  z = (t,u)) ≅ (x=t; y=u) \and
    (\TBOX p t=z ; x = \TBOX p t) ≅ id_p(x=t) \and
    ()=z ; z= () ≅ id \and
    x = \tout i z ; z = \twith {t₁} {t₂} ≅ x = t_i \and
  \end{mathpar}
  \figsepar{Congruence for constructors}
  \begin{mathpar}
    \inferrule{t ≅ t' \\ u ≅ u'}{\twith t u ≅ \twith {t'} {u'}} \and
    \inferrule{t ≅ t'}{\lam x t ≅ \lam x t'} \and

  \inferrule {\wlet f t ≅ t' \\ \wlet g u ≅ u'}{\wlet {f,g} (t,u) ≅ (t',u')}
  \and
  \inferrule {\wlet f t ≅ t'}{\wlet {id_p(f)} \TBOX p t ≅ \TBOX p t'}
  \end{mathpar}
  \caption{Equational theory for \PLamp{}. We omit the usual SMC laws
    (equivalence, naturality) as well as congruence rules for
    pattern bindings: they are unsurprising.  \agdaref{Normal.Equations}{_≋_}
    \agdaref{Normal.Equations}{_≅_}
  }
  \label{fig:eq-neg}
\end{figure}

The equational theory is extended with administrative rules for term bindings, naturality laws of for the manipulation of context scaling,
and computational rules for each type (\cref{fig:eq-neg}).
This completes the definition of \PLamp{}. It admits the same
models as \Lamp{}: given a \nameref{def:mod-interp} into \nameref{def:functor-modalities}, terms can be evaluated into any
\nameref{def:bicat} \(\category C\), in a straightforward
manner. \agdaref{Normal.Sem}{eval} This interpretation,
\(⟦·⟧^*_{\category C}\), is complete with respect to term equality: if
\(t ≅ u\) then \(⟦t⟧^*_{\category C} = ⟦u⟧^*_{\category
  C}\). \agdaref{Normal.Sem1}{eval≅}

Because both \PLamp{} and \Lamp{} are interpreted into the same
categorical structure, \PLamp{} is subject to the same normalisation
by evaluation procedure as \Lamp{}. To reiterate, one first evaluates
in the presheaf category over eliminators, then reifies into normal
forms: \(\nf(t) ≜ \reify{⟦t⟧^*}\). \agdaref{Normal.Renormalisation}{renorm}
Since \Lamp{} lacks an equational theory, we could not prove, or even
state, any useful property of its normalisation procedure. However, we can do so
for \PLamp{}:
\begin{theorem}[Soundness of normalisation for \PLamp{}]
  \(\nf(t) ≅ t\) for any \(Γ ⊨ t : A\).
  \agdaref{Normal.Renormalisation}{renorm-idem}
\end{theorem}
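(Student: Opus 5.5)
The plan is a Kripke logical relation that glues the presheaf semantics with syntax up to \(≅\), as in the standard proof of soundness of NbE. For each type \(A\) and context \(Γ\) we define a relation \(t \mathrel{R^A_Γ} v\) between terms \(Γ ⊨ t : A\) and values \(v ∈ \Needs A Γ\) of the presheaf model over pattern bindings. It is defined by induction on \(A\), following the clauses of \cref{def:world-presheaves-category}:
\begin{itemize}
\item At base types, \(t\) and \(v\) are related when \(t ≅ \reify v\).
\item At \(\fun A B\), \(t\) and \(v\) are related when for all \(Δ\) and all \(u \mathrel{R^A_Δ} a\) we have \(t\,u \mathrel{R^B_{Γ,Δ}} v(a)\). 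Here \(t\,u\) is encoded as a let-binding \(y = x(u)\) preceded by \(x=t\).
\item At \(A \Tensor B\) and \(\BOX p A\), the value is an element of a coend. We take \((w,v',f)\) and ask that \(t ≅ \wlet f {(t_1,t_2)}\), respectively \(t ≅ \wlet f {\TBOX p {t'}}\), with the components related.
\item At \(\With\) and \(\Ttop\), the definition is pointwise.
\end{itemize}
We must check that this is well defined on the coend quotient. That is exactly what the unusual congruence rules for \((t,u)\) and \(\TBOX p t\) in \cref{fig:eq-neg} provide, together with the naturality laws of the resource module.

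Next come the two standard lemmas, proved by mutual induction on \(A\). First, relatedness implies reification: if \(t \mathrel{R^A_Γ} v\) then \(t ≅ \reify^A v\). Second, reflection is related to the variable: \(x \mathrel{R^A_{(x:A)}} \reflect_A\), and more generally neutral bindings \(s ; x\) are related to \(\reflect_A ↿ s\). Both proofs mirror the definitions of \(\reify\) and \(\reflect\) given for \cref{thm:completeness}. The key equations are:
\begin{itemize}
\item \(η\)-laws for \(\funOp\), \(\With\), \(\Tensor\), \(\BOX p\), which follow from the ``administration of let bindings'' rules, e.g.\ \((\wlet {x=t} x) ≅ t\);
\item the computational rules, e.g.\ \(\wlet {(x,y)=z} (x,y)\) against \(z\).
\end{itemize}
We also need a monotonicity lemma: relatedness is stable under restriction along any binding \(f : Δ ⇝ Γ\), i.e.\ \(t \mathrel{R} v\) implies \(\wlet f t \mathrel{R} v↿f\). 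This uses the admissible prefixing rule \(\wlet f t\) and its compatibility with \(≅\).

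Then we prove a fundamental lemma for \(⟦·⟧^*\) by induction on the typing derivation of \(Γ ⊨ t : A\), simultaneously with pattern bindings \(s : Γ ⇝ Δ\). For terms, the statement is that if an environment (a value in \(\Needs{Γ}{Ξ}\)) is related to a syntactic substitution \(σ\), then \(t[σ] \mathrel{R} ⟦t⟧^*(\text{env})\); bindings are treated analogously. The cases are as follows:
\begin{itemize}
\item Constructors are immediate from the definition of \(R\) and the congruence rules.
\item Context-scaling manipulations (\(π,ζ,α,ρ,φ(f),δ,δ'\)) follow from the resource-module laws of \cref{fig:eq-neg}, matched against how the presheaf model interprets them via \(\mathbf q\).
\item The eliminators follow from the computational rules, and the cut \(x=t\) from the administrative rules.
\end{itemize}
Instantiating the fundamental lemma with the identity environment \(\reflect_Γ\), related to the identity substitution by the reflection lemma, yields \(t \mathrel{R} ⟦t⟧^*(\reflect_Γ)\). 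The first lemma then gives \(t ≅ \reify ⟦t⟧^* = \nf(t)\).

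I expect the main obstacle to be the \(\BOX p\) and \(\Tensor\) cases, because the semantics is a coend and scaling \(p◃Γ\) is explicit. Showing that \(R\) respects the coend's identifications, and that \(\mathbf q(p)\)'s functorial action on environments matches \(id_p(f)\) up to \(≅\), requires the naturality laws \(φ(f);ψ(g) ≅ (φ∘ψ)(f;g)\) and the \(α,ρ\) coherences in just the right form. If a direct argument gets stuck, I would first try to define \(R\) on representatives and prove invariance separately, using the congruence rule \(\wlet {id_p(f)} \TBOX p t ≅ \TBOX p {t'}\) as the generating case.
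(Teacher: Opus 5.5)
Your proposal is correct and takes the same route as the paper. The paper defines a logical relation between \PLamp{} terms and presheaf values, and obtains \(\mathrm{nf}(t) \cong t\) by combining a reification lemma (values are related to their reifications) with the fundamental lemma. The paper gives only this sketch, so your reflection and monotonicity lemmas, the instantiation at the reflected identity environment, and your care with the coend clauses are the standard details it leaves implicit.
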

\begin{proof}
  The proof follows the standard pattern. A logical relation \((∼)\) is
  defined between terms and values. \agdaref{Normal.Relations}{_∼_} The result falls out of the
  combination of two lemmas: the fact that values are related to their
  reification \((v ∼ \reify{v})\) \agdaref{Normal.Relations}{Sim-reify} and the fundamental lemma of logical relations.\agdaref{Normal.EvalRel}{fundamental}
\end{proof}
As corollaries, we have:
  \begin{itemize}
  \item stability: \(\nf(\nf(t)) ≅ \nf(t)\) and
  \item preservation of meaning: \(⟦t⟧^* = ⟦\nf(t)⟧^*\).
  \end{itemize}

  Finally, we can show that the semantics of \Lamp{} and \PLamp{} agree. To do this, we construct the syntactic category of
  \PLamp{} terms, hereafter called \(\category {Syn}\). \agdaref{Normal.Cat}{scat}
  \(\category{Syn}\) is a \nameref{def:bicat} where objects are types, and any morphism \(A → B\) is a term \(A ⊨ t : B\).
  \begin{theorem}
     For any \nameref{def:bicat} {\category C} and any \(Γ ⊢ t : A\), we have  \(⟦⟦t⟧_{\category {Syn}}⟧^*_{\category C} = ⟦t⟧_{\category C}\). \agdaref{Normal.CatOK}{thm}
   \end{theorem}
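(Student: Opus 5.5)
The plan is to prove the equation by induction on the typing derivation of \(Γ ⊢ t : A\), simultaneously with a companion statement for patterns. Since \(\category{Syn}\) is a \nameref{def:bicat} and the endofunctor modalities are realised there by \PLamp{} terms, \cref{thm:semantics1} (via \cref{thm:semantics}) gives \(⟦t⟧_{\category{Syn}}\) as a \PLamp{} term. The interpretation \(⟦·⟧^*_{\category C}\) of that term is taken in the same \category{C} that interprets \Lamp{} directly. The key observation is that \(⟦·⟧^*_{\category C}\) is a structure-preserving functor \(\category{Syn} → \category C\). It sends:
\begin{itemize}
\item the monoidal product and unit of \(\category{Syn}\) (context concatenation) to \(\tensorObj\) and \(\unitObj\), up to coherent isomorphism;
\item \(\With\) and \(\Ttop\) to \(\withObj\) and \(\topObj\);
\item \(\funOp\) to \(\funObj\), preserving \(\appl\) and \(\curry{}\);
\item the syntactic modality functors \(\BOX p\) on \(\category{Syn}\) to \(\mathbf h(p)\);
\item the syntactic lax and oplax witnesses \(ψ_n\), \(\bar ψ_n\) and natural transformations \(φ\) to their semantic counterparts.
\end{itemize}
Once this is established, the theorem follows because \(⟦·⟧_{\category C}\) is defined compositionally from exactly these operations (see the clause-by-clause summary after \cref{thm:semantics1}). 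A structure-preserving functor then commutes with the interpretation.

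The key steps are as follows.

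\textbf{Step 1: functoriality of \(⟦·⟧^*\).} Show that \(⟦·⟧^*_{\category C}\) respects composition and identities of \(\category{Syn}\). Composition in \(\category{Syn}\) is term substitution, realised by the admissible rule \(\wlet f t\) and by term binding \(x=t\). We prove \(⟦\wlet{(x=u)}t⟧^* = ⟦t⟧^* ∘ ⟦u⟧^*\) and \(⟦\wlet f t⟧^* = ⟦t⟧^*∘⟦f⟧^*\) by induction on \(t\). This is where the congruence rules for \((t,u)\) and \(\TBOX p t\) and the soundness of \(≅\) (\(⟦·⟧^*\) is invariant under \(≅\)) are used.

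\textbf{Step 2: the remaining operations.} Check that \(⟦·⟧^*\) sends the SMC structure maps, the projections and pairing, \(\appl\) and \(\curry{}\), and each modality operation of \(\category{Syn}\) to the corresponding maps in \category{C}. The modality operations are the functorial action of \(\BOX p\), the natural transformation for \(φ\), the maps for \(0\) and \(+\), and the lax/oplax maps obtained from \(δ,δ',ε,η\) composed with the box eliminator. Each syntactic operation is a specific small \PLamp{} term, and its \(⟦·⟧^*\) unfolds by definition to the intended morphism.

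\textbf{Step 3: the induction.} Proceed by induction on \(Γ ⊢ t : A\), and on \(\typat Γ m p A\) for patterns, with the case analysis mirroring the semantic equations. For instance:
\begin{itemize}
\item \(⟦\TBOX p t⟧ = p⟦t⟧∘ψ_{|Γ|}\) in \(\category{Syn}\) is mapped to \(\mathbf h(p)⟦t⟧^*∘ψ_{|Γ|}\), using Steps 1–2 and the induction hypothesis;
\item \(\ttlet\) and \(λ\) with patterns use the pattern version of the hypothesis, \(⟦⟦m⟧_{\category{Syn}}⟧^* = ⟦m⟧_{\category C}\).
\end{itemize}
The silent context reorderings are handled by the associators and symmetries, which \(⟦·⟧^*\) preserves.

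I expect the main obstacle to be Step 1 together with the modal part of Step 2. The functorial action of \(\BOX p\) on a \(\category{Syn}\)-morphism, and the lax maps \(ψ_n\), are built from box eliminators, scaling bindings \(id_p(f)\), \(δ\) and \(δ'\), and a final \(\TBOX p{}\) constructor. Verifying that their denotation is literally \(\mathbf h(p)(⟦f⟧^*)\) composed with the semantic \(ψ\) requires unfolding how \(⟦·⟧^*\) interprets scaled contexts \(p◃Γ\). I would first fix a normalising lemma: \(⟦p◃Γ⟧^* = \mathbf h(p)⟦Γ⟧^*\), with \(⟦δ⟧^*\) and \(⟦δ'⟧^*\) equal to the chosen (op)lax witnesses. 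The rest would then be equational reasoning in \category{C}, relying on naturality of those witnesses (the coherence of \cref{def:laxoplax} is not needed, since only the specific maps chosen are compared).
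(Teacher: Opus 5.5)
Your proposal is correct and, as far as the paper shows (it gives no prose proof, only a pointer to the mechanisation), it is the intended argument: \(⟦·⟧^*_{\category C}\) is well defined on \(\category{Syn}\) by soundness of \(≅\), it preserves every piece of structure the generic interpretation uses, and induction on term and pattern derivations then gives the equation. That preservation holds strictly on objects rather than merely up to isomorphism, which is what makes the stated equation well typed. The one omission is that your list of preserved modal operations should also include the \((·,1)\) isomorphisms realised by \(α\) and \(ρ\), which the modality functor into \(\category{Syn}\) inserts for variables, modality-\(1\) patterns, box patterns and context scaling.
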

   As a corollary, the normal forms obtained directly from \Lamp{}
   terms are the same as those that one would obtain by first
   translating them to \PLamp{}.  The following diagram depicts the
   possible interpretations, as well as reification (\(\reify\))
   \begin{center}
     \begin{tikzpicture}
       [post/.style={semithick,auto,inner sep=1pt}]
       \node (source) at (120:1.2)  {\Lamp{}};
       \node (target) at (240:1.2)  {\PLamp{}};
       \node (values) at (0:1.2)  {$\widehat {⇝}$};
       \node (arbit) at (-2.4,0)  {$\category C$};

       \draw [->] (source) to node[post,swap] {$⟦·⟧_{\category {Syn}}$} (target);
       \draw [->] (source) to node[post] {$⟦·⟧_{\widehat {⇝}}$} (values);
       \draw [->] (target) to[bend right] node[post,swap] {$⟦·⟧^*_{\widehat {⇝}}$} (values);
       \draw [->] (values) to node[post,swap] {$\reify$} (target);
       \draw [->] (source) to node[post,swap] {$⟦·⟧_{C}$} (arbit);
       \draw [->] (target) to node[post] {$⟦·⟧^*_{C}$} (arbit);
     \end{tikzpicture}
   \end{center}
   In the above, \(\widehat {⇝}\) refers to the presheaf category over the \(⇝\) category.

   In summary \Lamp{} has a well-defined semantics, but reasoning
   about \Lamp{} programs cannot be done using its syntax. If
   syntactic reasoning is needed, it one can safely translate first
   into \PLamp{}.

\section{Related work}\label{sec:related-work}

At its core, our framework is syntactically and semantically based on
linear logic~\citep{girard_linear_1987}.  The generalisation of
modalities to ring-like structures is due to
\citet{girard_bounded_1992}, and further refined, \textit{inter alia},
by \citet{dal_lago_bounded_2009} or \citet{ghica_bounded_2014}.

The idea of interpreting modalities as functors can be traced at least
to \citet{ghilardi1988modal}, and more explicitly
\citet{bierman2000intuitionistic}.
\Citet{brunel_core_2014} lays the concrete groundwork which we
build upon. The principal generalisations that we offer are those
listed in the introduction: we don't require distributivity (monoidal functors), and we
explicitly allow computational content in the \nameref{def:ringoid}, but
we also unify the treatment of effects and coeffects, as well as quantitative modalities.
(The requirement on distributivity and monoidal functor pervades literature which extends modality applications in several ways, i.a. the work of \citet{gratzer_multimodal_2021}.)
Another difference is that \citeauthor{brunel_core_2014} propose a
realisability semantics which resembles our resource semantics, but it
features (double) orthogonalisation, and as such cannot be made an
instance of the general categorical semantics. Additionally, we
provide a completeness theorem and support $(\With)$ products.


\label{sec:merge}
Finally, \citeauthor{brunel_core_2014} sustain the converse of the \ref{principle:sum}:
\(\BOX p A \Tensor \BOX q A \funOp \BOX {p+q} A\). For arrays, such a
principle would correspond to concatenation, and it is central in the fractional ownership
system of \citet{makwana_numlin_2019}, so it is clearly useful for some
applications. The main reason for us leaving it out
is that we could not make sense of it with regard to
substitution. Indeed, it demands that two substitutions \(Γ₁ ↦ Δ\) and
\(Γ₂ ↦ Δ\) can be merged, which appears impossible in general.


\Citet{abel_unified_2020} have proposed a unified framework
for modalities in type-systems similar to ours, and have provided the relational
semantics combining parametricity interpretations
\citep{reynolds_types_1983}, classified sets
\citep{kavvos:popl19,abadi_core_1999}, and resource indexing
\citep{atkeyWood:tyde18,brunel_core_2014}. We have made this semantics
an instance of the more general categorical framework, hoping to make
it more approachable. Additionally, we support non-distributive
modalities and our semantics is proven complete.
\Citeauthor{abel_unified_2020} natively support coproducts,
whereas we encode them using inversion of control. It leads them to
propose that sum types can be analysed iff they are qualified with a
modality \(p ≼ 1\). According to our semantic interpretation, the
condition on \(p\) is the existence of a morphism
\(p(A \sumObj B) \funObj pA \sumObj pB\). In sum, our framework
provides a semantic reason for the fact that sensitivity analysis
treats coproducts differently from other systems. We propose that supporting sum requires shifting to a semantics based on sheaves (instead of presheaves).
The treatment of \citet{DBLP:conf/lics/AltenkirchDHS01} should be adaptable to our setting, but we leave it as future work.

\paragraph{Modalities as 2-categories and arbitrary functorial interpretations}
In a recent line of work,
\citet{kavvos_two-dimensional-I_2024,kavvos_two-dimensional-II_2024}
has proposed to interpret modalities as
functors. \Citeauthor{kavvos_two-dimensional-I_2024} does not consider
generic modality structures as we do, but concentrate on the case of
dual modalities and an underlying cartesian category where (\(\With\))
and (\(\Tensor\)) coincide.
However \citeauthor{kavvos_two-dimensional-I_2024}, similarly to \citet{Orchard_2020},  offers another interesting direction of
generalisation: for him modalities are not a set, but rather
morphisms in a category. The objects of this category are called
\emph{modes}. It appears that \Lamp{} can be generalised in this
direction: the relation \((≼)\) becomes the hom-set of a 2-category
and the typing relation is indexed by a mode \(m\), as in
\(Γ ⊢ t : A @ m\). The semantic interpretation would target a
\emph{family} of monoidal categories \(\category C(m)\), and
modalities would be interpreted as functors between those. Working out
the details, in particular the generalised resource-semantics and its
implications, is left as future work.

To our knowledge, the present work is the only one to offer a fully
mechanised formalisation.


\paragraph{Semantics}

The resource semantics can be considered a reformulation of that of
\citet{abel_unified_2020} in a presheaf
framework. \citet{ohearn_logic_1999} defined the same semantics much
earlier, but for a language lacking modalities altogether. The
presheaf semantics of tensor products have been studied by
\citet{day_closed_1970}. The presheaf semantics for functors can be
traced to \citet{day_kan_1995}. They are primarily interested in
monoidal functors, but what we use (the left Kan extension) carries
over to our setting.

\section{Discussion}
\label{sec:discussion}

\paragraph{Type-checking}\label{sec:typechecking}
We have focused primarily on \Lamp{} semantics, but any concrete
language based on it should feature a typechecker. As presented, the
\rul{\Tensor-intro} and \rul{app} rules split the context
non-deterministically, and it may not be clear how this split should
be implemented in a typechecker. One solution is to let
the typing rules to return a context of unused
variables~\citep{DBLP:conf/types/Allais17}.
Additionally, for type checking to be
possible, the set of modalities need a decidable equality in
addition to the structure necessary for semantic purposes.

\paragraph{Type-Polymorphism}\label{sec:polymorphism}
We have chosen to leave out polymorphism from our development, for the following reasons:
1. Polymorphism is mostly orthogonal from our concerns.
The only interaction happens in the encoding of sums, but
that is invisible except for the discussion~(\ref{eq:liberal-case-analysis}) of case analysis for sensitivity.
2. A relational treatment
of polymorphism does not fit the simple categorical framework. One
would have to bring out at least the concept of display category~\citep{Ahrens_2019}.
Even though we have not presented this version in the
paper, some results based on this approach are available in our
formalisation. \agdaref{Rel.Sem.Kripke}{cat}

\paragraph{Even fewer structural rules}
We have considered modalities which limit duplication and discard of
variables, but one can also use them to, for instance, control movement
(associativity and commutativity) of variables in the
context~\citep{moortgat_multimodal_1996}. Support for this even more
general case could work in a similar way as our treatment of
(optional) distributivity over products, but is left for future work.

\paragraph{Poset modalities}\label{sec:poset-modalities}

We have considered modalities which form a category, which is a
computationally-relevant version of a poset. This lets us support
arbitrary effects (see \cref{rk:effects-are-relevant}).  This means
that in general modality conversions must be explicit in syntax.  In
contrast, if one assumes irrelevant conversions, one can on the
contrary let the type-checker infer such conversions, as
\citet{orchard_quantitative_2019} do. But, one can also compute whether
there is a single conversion, and use it if it is unique or report an
error otherwise.

\section{Conclusion}\label{sec:conclusion}

We have shown that general functorial programming can be expressed
using the syntax typically reserved for modalities or grade, albeit
with some tweaks.  Certain syntactic constructs are reserved to
(op)lax functors, and when multiple natural transformations exist
between functors, conversions between modalities must be made
explicit.

Our language unifies what is traditionally seen as modalities
(quantitative applications) and as functors (data structures, effects,
coeffects). The functional-programming community has proposed a
variety of surface syntax for each domain—do-notations, idiom
brackets, parallel list comprehensions, etc. Our proposal subsumes
these alternatives, even when all these applications are mixed in a
single program. It does so with only a handful of syntactic
elements. Although there is some syntactic overhead (particularly for
effects), it can be mitigated with syntactic sugar. In cases where
modality conversions can be inferred, the syntax can even be
lightweight.

Given that modalities are, in fact, functors, why not program directly
with those functors? Indeed, functorial and (co)monadic programming is
already possible today, but it requires juggling with a zoo of
combinators. We argue that a direct syntax—much like lambda notation’s
advantage over SKI combinators—offers a clearer, more ergonomic
approach.





\bibliography{local,jp,confs}


\begin{thebibliography}{38}


\ifx \showCODEN    \undefined \def \showCODEN     #1{\unskip}     \fi
\ifx \showDOI      \undefined \def \showDOI       #1{#1}\fi
\ifx \showISBNx    \undefined \def \showISBNx     #1{\unskip}     \fi
\ifx \showISBNxiii \undefined \def \showISBNxiii  #1{\unskip}     \fi
\ifx \showISSN     \undefined \def \showISSN      #1{\unskip}     \fi
\ifx \showLCCN     \undefined \def \showLCCN      #1{\unskip}     \fi
\ifx \shownote     \undefined \def \shownote      #1{#1}          \fi
\ifx \showarticletitle \undefined \def \showarticletitle #1{#1}   \fi
\ifx \showURL      \undefined \def \showURL       {\relax}        \fi
\providecommand\bibfield[2]{#2}
\providecommand\bibinfo[2]{#2}
\providecommand\natexlab[1]{#1}
\providecommand\showeprint[2][]{arXiv:#2}

\bibitem[\protect\citeauthoryear{Abadi, Banerjee, Heintze, and Riecke}{Abadi
  et~al\mbox{.}}{1999}]%
        {abadi_core_1999}
\bibfield{author}{\bibinfo{person}{M. Abadi}, \bibinfo{person}{A. Banerjee},
  \bibinfo{person}{N. Heintze}, {and} \bibinfo{person}{J.G. Riecke}.}
  \bibinfo{year}{1999}\natexlab{}.
\newblock \showarticletitle{A core calculus of dependency}. In
  \bibinfo{booktitle}{\emph{Proceedings of the 26th {ACM} {SIGPLAN-SIGACT}
  symposium on Principles of programming languages}}. ACM,
  \bibinfo{pages}{147--160}.
\newblock


\bibitem[\protect\citeauthoryear{Abel and Bernardy}{Abel and Bernardy}{2020}]%
        {abel_unified_2020}
\bibfield{author}{\bibinfo{person}{Andreas Abel} {and}
  \bibinfo{person}{Jean-Philippe Bernardy}.} \bibinfo{year}{2020}\natexlab{}.
\newblock \showarticletitle{A unified view of modalities in type systems}.
\newblock \bibinfo{journal}{\emph{Proceedings of the ACM on Programming
  Languages}} \bibinfo{volume}{4}, \bibinfo{number}{ICFP}
  (\bibinfo{year}{2020}).
\newblock


\bibitem[\protect\citeauthoryear{Ahrens and Lumsdaine}{Ahrens and
  Lumsdaine}{2019}]%
        {Ahrens_2019}
\bibfield{author}{\bibinfo{person}{Benedikt Ahrens} {and}
  \bibinfo{person}{Peter~LeFanu Lumsdaine}.} \bibinfo{year}{2019}\natexlab{}.
\newblock \showarticletitle{Displayed Categories}.
\newblock \bibinfo{journal}{\emph{Logical Methods in Computer Science}}
  \bibinfo{volume}{Volume 15, Issue 1} (\bibinfo{date}{March}
  \bibinfo{year}{2019}).
\newblock
\showISSN{1860-5974}
\urldef\tempurl%
\url{https://doi.org/10.23638/lmcs-15(1:20)2019}
\showDOI{\tempurl}


\bibitem[\protect\citeauthoryear{Algehed and Bernardy}{Algehed and
  Bernardy}{2019}]%
        {algehed_simple_2019}
\bibfield{author}{\bibinfo{person}{Maximilian Algehed} {and}
  \bibinfo{person}{Jean-Philippe Bernardy}.} \bibinfo{year}{2019}\natexlab{}.
\newblock \showarticletitle{Simple noninterference from parametricity}.
\newblock \bibinfo{journal}{\emph{Proceedings of the ACM on Programming
  Languages}} \bibinfo{volume}{3}, \bibinfo{number}{ICFP}
  (\bibinfo{year}{2019}), \bibinfo{pages}{89}.
\newblock


\bibitem[\protect\citeauthoryear{Allais}{Allais}{2017}]%
        {DBLP:conf/types/Allais17}
\bibfield{author}{\bibinfo{person}{Guillaume Allais}.}
  \bibinfo{year}{2017}\natexlab{}.
\newblock \showarticletitle{Typing with Leftovers - {A} mechanization of
  Intuitionistic Multiplicative-Additive Linear Logic}. In
  \bibinfo{booktitle}{\emph{23rd International Conference on Types for Proofs
  and Programs, {TYPES} 2017, Budapest, Hungary, May 29 - June 1, 2017}}
  \emph{(\bibinfo{series}{LIPIcs})}, \bibfield{editor}{\bibinfo{person}{Andreas
  Abel}, \bibinfo{person}{Fredrik~Nordvall Forsberg}, {and}
  \bibinfo{person}{Ambrus Kaposi}} (Eds.), Vol.~\bibinfo{volume}{104}.
  \bibinfo{publisher}{Schloss Dagstuhl - Leibniz-Zentrum f{\"{u}}r Informatik},
  \bibinfo{pages}{1:1--1:22}.
\newblock
\urldef\tempurl%
\url{https://doi.org/10.4230/LIPICS.TYPES.2017.1}
\showDOI{\tempurl}


\bibitem[\protect\citeauthoryear{Altenkirch, Dybjer, Hofmann, and
  Scott}{Altenkirch et~al\mbox{.}}{2001}]%
        {DBLP:conf/lics/AltenkirchDHS01}
\bibfield{author}{\bibinfo{person}{Thorsten Altenkirch}, \bibinfo{person}{Peter
  Dybjer}, \bibinfo{person}{Martin Hofmann}, {and} \bibinfo{person}{Philip~J.
  Scott}.} \bibinfo{year}{2001}\natexlab{}.
\newblock \showarticletitle{Normalization by Evaluation for Typed Lambda
  Calculus with Coproducts}. In \bibinfo{booktitle}{\emph{16th Annual {IEEE}
  Symposium on Logic in Computer Science, Boston, Massachusetts, USA, June
  16-19, 2001, Proceedings}}. \bibinfo{publisher}{{IEEE} Computer Society},
  \bibinfo{pages}{303--310}.
\newblock
\urldef\tempurl%
\url{https://doi.org/10.1109/LICS.2001.932506}
\showDOI{\tempurl}


\bibitem[\protect\citeauthoryear{Atkey and Wood}{Atkey and Wood}{2018}]%
        {atkeyWood:tyde18}
\bibfield{author}{\bibinfo{person}{Robert Atkey} {and} \bibinfo{person}{James
  Wood}.} \bibinfo{year}{2018}\natexlab{}.
\newblock \showarticletitle{Context constrained computation}. In
  \bibinfo{booktitle}{\emph{Third Workshop on Type-Driven Development
  (TyDe'18)}}.
\newblock
\urldef\tempurl%
\url{https://personal.cis.strath.ac.uk/james.wood.100/pub/context-constrained.pdf}
\showURL{%
\tempurl}


\bibitem[\protect\citeauthoryear{Bierman and de~Paiva}{Bierman and
  de~Paiva}{2000}]%
        {bierman2000intuitionistic}
\bibfield{author}{\bibinfo{person}{Gavin~M. Bierman} {and}
  \bibinfo{person}{Valeria~CV de Paiva}.} \bibinfo{year}{2000}\natexlab{}.
\newblock \showarticletitle{On an intuitionistic modal logic}.
\newblock \bibinfo{journal}{\emph{Studia Logica}} \bibinfo{volume}{65},
  \bibinfo{number}{3} (\bibinfo{year}{2000}), \bibinfo{pages}{383--416}.
\newblock


\bibitem[\protect\citeauthoryear{Brunel, Gaboardi, Mazza, and Zdancewic}{Brunel
  et~al\mbox{.}}{2014}]%
        {brunel_core_2014}
\bibfield{author}{\bibinfo{person}{Aloïs Brunel}, \bibinfo{person}{Marco
  Gaboardi}, \bibinfo{person}{Damiano Mazza}, {and} \bibinfo{person}{Steve
  Zdancewic}.} \bibinfo{year}{2014}\natexlab{}.
\newblock \showarticletitle{A Core Quantitative Coeffect Calculus}. In
  \bibinfo{booktitle}{\emph{Programming Languages and Systems - 23rd European
  Symposium on Programming, {ESOP} 2014, Held as Part of the European Joint
  Conferences on Theory and Practice of Software, {ETAPS} 2014, Grenoble,
  France, April 5-13, 2014, Proceedings}} \emph{(\bibinfo{series}{Lecture Notes
  in Computer Science})}, \bibfield{editor}{\bibinfo{person}{Zhong Shao}}
  (Ed.), Vol.~\bibinfo{volume}{8410}. \bibinfo{publisher}{Springer},
  \bibinfo{pages}{351--370}.
\newblock
\urldef\tempurl%
\url{https://doi.org/10.1007/978-3-642-54833-8_19}
\showDOI{\tempurl}


\bibitem[\protect\citeauthoryear{Clouston}{Clouston}{2018}]%
        {clouston_fitch_2018}
\bibfield{author}{\bibinfo{person}{Ranald Clouston}.}
  \bibinfo{year}{2018}\natexlab{}.
\newblock \showarticletitle{Fitch-style modal lambda calculi}. In
  \bibinfo{booktitle}{\emph{International Conference on Foundations of Software
  Science and Computation Structures}}. Springer, \bibinfo{pages}{258--275}.
\newblock


\bibitem[\protect\citeauthoryear{Coquand}{Coquand}{1991}]%
        {coquand1991algorithm}
\bibfield{author}{\bibinfo{person}{Thierry Coquand}.}
  \bibinfo{year}{1991}\natexlab{}.
\newblock \showarticletitle{An algorithm for testing conversion in type
  theory}.
\newblock \bibinfo{journal}{\emph{Logical frameworks}}  \bibinfo{volume}{1}
  (\bibinfo{year}{1991}), \bibinfo{pages}{255--279}.
\newblock


\bibitem[\protect\citeauthoryear{{Dal Lago} and Hofmann}{{Dal Lago} and
  Hofmann}{2009}]%
        {dal_lago_bounded_2009}
\bibfield{author}{\bibinfo{person}{Ugo {Dal Lago}} {and}
  \bibinfo{person}{Martin Hofmann}.} \bibinfo{year}{2009}\natexlab{}.
\newblock \showarticletitle{Bounded linear logic, revisited}.
\newblock In \bibinfo{booktitle}{\emph{Typed Lambda Calculi and Applications}}.
  \bibinfo{publisher}{Springer}, \bibinfo{pages}{80--94}.
\newblock


\bibitem[\protect\citeauthoryear{Day}{Day}{1970}]%
        {day_closed_1970}
\bibfield{author}{\bibinfo{person}{Brian Day}.}
  \bibinfo{year}{1970}\natexlab{}.
\newblock \showarticletitle{On closed categories of functors}. In
  \bibinfo{booktitle}{\emph{Reports of the Midwest Category Seminar IV}}.
  Springer-Verlag, \bibinfo{pages}{1--38}.
\newblock


\bibitem[\protect\citeauthoryear{Day and Street}{Day and Street}{1995}]%
        {day_kan_1995}
\bibfield{author}{\bibinfo{person}{Brian Day} {and} \bibinfo{person}{Ross
  Street}.} \bibinfo{year}{1995}\natexlab{}.
\newblock \showarticletitle{Kan extensions along promonoidal functors}.
\newblock \bibinfo{journal}{\emph{Theory and applications of categories}}
  \bibinfo{volume}{1}, \bibinfo{number}{4} (\bibinfo{year}{1995}),
  \bibinfo{pages}{72--77}.
\newblock


\bibitem[\protect\citeauthoryear{de~Amorim, Gaboardi, Hsu, Katsumata, and
  Cherigui}{de~Amorim et~al\mbox{.}}{2017}]%
        {amorim_semantic_2017}
\bibfield{author}{\bibinfo{person}{Arthur~Azevedo de Amorim},
  \bibinfo{person}{Marco Gaboardi}, \bibinfo{person}{Justin Hsu},
  \bibinfo{person}{Shin{-}ya Katsumata}, {and} \bibinfo{person}{Ikram
  Cherigui}.} \bibinfo{year}{2017}\natexlab{}.
\newblock \showarticletitle{A semantic account of metric preservation}. In
  \bibinfo{booktitle}{\emph{Proceedings of the 44th {ACM} {SIGPLAN} Symposium
  on Principles of Programming Languages, {POPL} 2017, Paris, France, January
  18-20, 2017}}. \bibinfo{pages}{545--556}.
\newblock
\urldef\tempurl%
\url{http://dl.acm.org/citation.cfm?id=3009890}
\showURL{%
\tempurl}


\bibitem[\protect\citeauthoryear{Ghica and Smith}{Ghica and Smith}{2014}]%
        {ghica_bounded_2014}
\bibfield{author}{\bibinfo{person}{Dan~R. Ghica} {and} \bibinfo{person}{Alex~I.
  Smith}.} \bibinfo{year}{2014}\natexlab{}.
\newblock \showarticletitle{Bounded Linear Types in a Resource Semiring}. In
  \bibinfo{booktitle}{\emph{Programming Languages and Systems - 23rd European
  Symposium on Programming, {ESOP} 2014, Held as Part of the European Joint
  Conferences on Theory and Practice of Software, {ETAPS} 2014, Grenoble,
  France, April 5-13, 2014, Proceedings}}. \bibinfo{pages}{331--350}.
\newblock
\urldef\tempurl%
\url{https://doi.org/10.1007/978-3-642-54833-8_18}
\showDOI{\tempurl}


\bibitem[\protect\citeauthoryear{Ghilardi and Meloni}{Ghilardi and
  Meloni}{2006}]%
        {ghilardi1988modal}
\bibfield{author}{\bibinfo{person}{Silvio Ghilardi} {and} \bibinfo{person}{GC
  Meloni}.} \bibinfo{year}{2006}\natexlab{}.
\newblock \showarticletitle{Modal and tense predicate logic: models in
  presheaves and categorical conceptualization}. In
  \bibinfo{booktitle}{\emph{Categorical Algebra and its Applications:
  Proceedings of a Conference, held in Louvain-La-Neuve, Belgium, July
  26--August 1, 1987}}. Springer, \bibinfo{pages}{130--142}.
\newblock


\bibitem[\protect\citeauthoryear{Girard}{Girard}{1987}]%
        {girard_linear_1987}
\bibfield{author}{\bibinfo{person}{Jean-Yves Girard}.}
  \bibinfo{year}{1987}\natexlab{}.
\newblock \showarticletitle{Linear logic}.
\newblock \bibinfo{journal}{\emph{Theoretical Computer Science}}
  \bibinfo{volume}{50}, \bibinfo{number}{1} (\bibinfo{year}{1987}),
  \bibinfo{pages}{1--101}.
\newblock


\bibitem[\protect\citeauthoryear{Girard}{Girard}{1991}]%
        {girard_new_1991}
\bibfield{author}{\bibinfo{person}{Jean-Yves Girard}.}
  \bibinfo{year}{1991}\natexlab{}.
\newblock \showarticletitle{A new constructive logic: classical logic}.
\newblock \bibinfo{journal}{\emph{Mathematical Structures in Computer Science}}
  \bibinfo{volume}{1}, \bibinfo{number}{3} (\bibinfo{year}{1991}),
  \bibinfo{pages}{255--296}.
\newblock


\bibitem[\protect\citeauthoryear{Girard, Scedrov, and Scott}{Girard
  et~al\mbox{.}}{1992}]%
        {girard_bounded_1992}
\bibfield{author}{\bibinfo{person}{Jean-Yves Girard}, \bibinfo{person}{Andre
  Scedrov}, {and} \bibinfo{person}{Philip~J. Scott}.}
  \bibinfo{year}{1992}\natexlab{}.
\newblock \showarticletitle{Bounded linear logic: a modular approach to
  polynomial-time computability}.
\newblock \bibinfo{journal}{\emph{Theoretical Computer Science}}
  \bibinfo{volume}{97}, \bibinfo{number}{1} (\bibinfo{year}{1992}),
  \bibinfo{pages}{1 -- 66}.
\newblock
\showISSN{0304-3975}
\urldef\tempurl%
\url{https://doi.org/10.1016/0304-3975(92)90386-T}
\showDOI{\tempurl}


\bibitem[\protect\citeauthoryear{Gratzer, Kavvos, Nuyts, and Birkedal}{Gratzer
  et~al\mbox{.}}{2021}]%
        {gratzer_multimodal_2021}
\bibfield{author}{\bibinfo{person}{Daniel Gratzer}, \bibinfo{person}{G.~A.
  Kavvos}, \bibinfo{person}{Andreas Nuyts}, {and} \bibinfo{person}{Lars
  Birkedal}.} \bibinfo{year}{2021}\natexlab{}.
\newblock \showarticletitle{Multimodal Dependent Type Theory}.
\newblock \bibinfo{journal}{\emph{Log. Methods Comput. Sci.}}
  \bibinfo{volume}{17}, \bibinfo{number}{3} (\bibinfo{year}{2021}).
\newblock
\urldef\tempurl%
\url{https://doi.org/10.46298/LMCS-17(3:11)2021}
\showDOI{\tempurl}


\bibitem[\protect\citeauthoryear{Hinze}{Hinze}{[n.d.]}]%
        {hinze_monads_manuscript}
\bibfield{author}{\bibinfo{person}{Ralf Hinze}.}
  \bibinfo{year}{[n.d.]}\natexlab{}.
\newblock \bibinfo{title}{Monads from Comonads, Comonads from Monads}.
\newblock
\newblock
\urldef\tempurl%
\url{https://www.cs.ox.ac.uk/ralf.hinze/WG2.8/28/slides/Comonad.pdf}
\showURL{%
\tempurl}


\bibitem[\protect\citeauthoryear{Kavvos}{Kavvos}{2019}]%
        {kavvos:popl19}
\bibfield{author}{\bibinfo{person}{Georgios~Alexandros Kavvos}.}
  \bibinfo{year}{2019}\natexlab{}.
\newblock \showarticletitle{Modalities, cohesion, and information flow}.
\newblock \bibinfo{journal}{\emph{Proceedings of the ACM on Programming
  Languages}} \bibinfo{volume}{3}, \bibinfo{number}{{POPL}}
  (\bibinfo{year}{2019}), \bibinfo{pages}{20:1--20:29}.
\newblock
\urldef\tempurl%
\url{https://doi.org/10.1145/3290333}
\showDOI{\tempurl}


\bibitem[\protect\citeauthoryear{Kavvos}{Kavvos}{2024a}]%
        {kavvos_two-dimensional-I_2024}
\bibfield{author}{\bibinfo{person}{Georgios~Alexandros Kavvos}.}
  \bibinfo{year}{2024}\natexlab{a}.
\newblock \showarticletitle{Two-Dimensional Kripke Semantics {I:} Presheaves}.
  In \bibinfo{booktitle}{\emph{9th International Conference on Formal
  Structures for Computation and Deduction, {FSCD} 2024, July 10-13, 2024,
  Tallinn, Estonia}} \emph{(\bibinfo{series}{LIPIcs})},
  \bibfield{editor}{\bibinfo{person}{Jakob Rehof}} (Ed.),
  Vol.~\bibinfo{volume}{299}. \bibinfo{publisher}{Schloss Dagstuhl -
  Leibniz-Zentrum f{\"{u}}r Informatik}, \bibinfo{pages}{14:1--14:23}.
\newblock
\urldef\tempurl%
\url{https://doi.org/10.4230/LIPICS.FSCD.2024.14}
\showDOI{\tempurl}


\bibitem[\protect\citeauthoryear{Kavvos}{Kavvos}{2024b}]%
        {kavvos_two-dimensional-II_2024}
\bibfield{author}{\bibinfo{person}{G.~A. Kavvos}.}
  \bibinfo{year}{2024}\natexlab{b}.
\newblock \showarticletitle{Two-dimensional Kripke Semantics {II:} Stability
  and Completeness}.
\newblock \bibinfo{journal}{\emph{CoRR}}  \bibinfo{volume}{abs/2406.03578}
  (\bibinfo{year}{2024}).
\newblock
\urldef\tempurl%
\url{https://doi.org/10.48550/ARXIV.2406.03578}
\showDOI{\tempurl}
\showeprint[arXiv]{2406.03578}


\bibitem[\protect\citeauthoryear{Kock}{Kock}{1972}]%
        {kock1972strong}
\bibfield{author}{\bibinfo{person}{Anders Kock}.}
  \bibinfo{year}{1972}\natexlab{}.
\newblock \showarticletitle{Strong functors and monoidal monads}.
\newblock \bibinfo{journal}{\emph{Archiv der Mathematik}} \bibinfo{volume}{23},
  \bibinfo{number}{1} (\bibinfo{year}{1972}), \bibinfo{pages}{113--120}.
\newblock


\bibitem[\protect\citeauthoryear{Makwana and Krishnaswami}{Makwana and
  Krishnaswami}{2019}]%
        {makwana_numlin_2019}
\bibfield{author}{\bibinfo{person}{Dhruv~C. Makwana} {and}
  \bibinfo{person}{Neelakantan~R. Krishnaswami}.}
  \bibinfo{year}{2019}\natexlab{}.
\newblock \showarticletitle{NumLin: Linear Types for Linear Algebra}. In
  \bibinfo{booktitle}{\emph{33rd European Conference on Object-Oriented
  Programming, {ECOOP} 2019, July 15-19, 2019, London, United Kingdom}}
  \emph{(\bibinfo{series}{LIPIcs})},
  \bibfield{editor}{\bibinfo{person}{Alastair~F. Donaldson}} (Ed.),
  Vol.~\bibinfo{volume}{134}. \bibinfo{publisher}{Schloss Dagstuhl -
  Leibniz-Zentrum f{\"{u}}r Informatik}, \bibinfo{pages}{14:1--14:25}.
\newblock
\showISBNx{978-3-95977-111-5}
\urldef\tempurl%
\url{https://doi.org/10.4230/LIPIcs.ECOOP.2019.14}
\showDOI{\tempurl}


\bibitem[\protect\citeauthoryear{{McBride} and Paterson}{{McBride} and
  Paterson}{2007}]%
        {mcbride_applicative_2007}
\bibfield{author}{\bibinfo{person}{Conor {McBride}} {and} \bibinfo{person}{Ross
  Paterson}.} \bibinfo{year}{2007}\natexlab{}.
\newblock \showarticletitle{Applicative programming with effects}.
\newblock \bibinfo{journal}{\emph{Journal of Functional Programming}}
  \bibinfo{volume}{18}, \bibinfo{number}{01} (\bibinfo{year}{2007}),
  \bibinfo{pages}{1--13}.
\newblock
\urldef\tempurl%
\url{https://doi.org/10.1017/S0956796807006326}
\showDOI{\tempurl}


\bibitem[\protect\citeauthoryear{Moortgat}{Moortgat}{1996}]%
        {moortgat_multimodal_1996}
\bibfield{author}{\bibinfo{person}{Michael Moortgat}.}
  \bibinfo{year}{1996}\natexlab{}.
\newblock \showarticletitle{Multimodal Linguistic Inference}.
\newblock \bibinfo{journal}{\emph{J. Log. Lang. Inf.}} \bibinfo{volume}{5},
  \bibinfo{number}{3/4} (\bibinfo{year}{1996}), \bibinfo{pages}{349--385}.
\newblock
\urldef\tempurl%
\url{https://doi.org/10.1007/BF00159344}
\showDOI{\tempurl}


\bibitem[\protect\citeauthoryear{Murphy, Crary, and Harper}{Murphy
  et~al\mbox{.}}{2005}]%
        {murphy_distributed_2005}
\bibfield{author}{\bibinfo{person}{Tom Murphy}, \bibinfo{person}{Karl Crary},
  {and} \bibinfo{person}{Robert Harper}.} \bibinfo{year}{2005}\natexlab{}.
\newblock \showarticletitle{Distributed Control Flow with Classical Modal
  Logic}. In \bibinfo{booktitle}{\emph{Computer Science Logic, 19th
  International Workshop, {CSL} 2005, 14th Annual Conference of the EACSL,
  Oxford, UK, August 22-25, 2005, Proceedings}} \emph{(\bibinfo{series}{Lecture
  Notes in Computer Science})}, \bibfield{editor}{\bibinfo{person}{C.{-}H.~Luke
  Ong}} (Ed.), Vol.~\bibinfo{volume}{3634}. \bibinfo{publisher}{Springer},
  \bibinfo{pages}{51--69}.
\newblock
\showISBNx{3-540-28231-9}
\urldef\tempurl%
\url{https://doi.org/10.1007/11538363\_6}
\showDOI{\tempurl}


\bibitem[\protect\citeauthoryear{O'Hearn and Pym}{O'Hearn and Pym}{1999}]%
        {ohearn_logic_1999}
\bibfield{author}{\bibinfo{person}{Peter~W O'Hearn} {and}
  \bibinfo{person}{David~J Pym}.} \bibinfo{year}{1999}\natexlab{}.
\newblock \showarticletitle{The logic of bunched implications}.
\newblock \bibinfo{journal}{\emph{Bulletin of Symbolic Logic}}
  \bibinfo{volume}{5}, \bibinfo{number}{2} (\bibinfo{year}{1999}),
  \bibinfo{pages}{215--244}.
\newblock


\bibitem[\protect\citeauthoryear{Orchard, Liepelt, and III}{Orchard
  et~al\mbox{.}}{2019}]%
        {orchard_quantitative_2019}
\bibfield{author}{\bibinfo{person}{Dominic Orchard},
  \bibinfo{person}{Vilem{-}Benjamin Liepelt}, {and}
  \bibinfo{person}{Harley~Eades III}.} \bibinfo{year}{2019}\natexlab{}.
\newblock \showarticletitle{Quantitative program reasoning with graded modal
  types}.
\newblock \bibinfo{journal}{\emph{{PACMPL}}} \bibinfo{volume}{3},
  \bibinfo{number}{{ICFP}} (\bibinfo{year}{2019}),
  \bibinfo{pages}{110:1--110:30}.
\newblock
\urldef\tempurl%
\url{https://doi.org/10.1145/3341714}
\showDOI{\tempurl}


\bibitem[\protect\citeauthoryear{Orchard, Wadler, and Eades}{Orchard
  et~al\mbox{.}}{2020}]%
        {Orchard_2020}
\bibfield{author}{\bibinfo{person}{Dominic Orchard}, \bibinfo{person}{Philip
  Wadler}, {and} \bibinfo{person}{Harley Eades}.}
  \bibinfo{year}{2020}\natexlab{}.
\newblock \showarticletitle{Unifying graded and parameterised monads}.
\newblock \bibinfo{journal}{\emph{Electronic Proceedings in Theoretical
  Computer Science}}  \bibinfo{volume}{317} (\bibinfo{date}{May}
  \bibinfo{year}{2020}), \bibinfo{pages}{18–38}.
\newblock
\showISSN{2075-2180}
\urldef\tempurl%
\url{https://doi.org/10.4204/eptcs.317.2}
\showDOI{\tempurl}


\bibitem[\protect\citeauthoryear{Petricek, Orchard, and Mycroft}{Petricek
  et~al\mbox{.}}{2014}]%
        {petricek_coeffects_2014}
\bibfield{author}{\bibinfo{person}{Tomas Petricek}, \bibinfo{person}{Dominic~A.
  Orchard}, {and} \bibinfo{person}{Alan Mycroft}.}
  \bibinfo{year}{2014}\natexlab{}.
\newblock \showarticletitle{Coeffects: a calculus of context-dependent
  computation}. In \bibinfo{booktitle}{\emph{Proceedings of the 19th {ACM}
  {SIGPLAN} international conference on Functional programming, Gothenburg,
  Sweden, September 1-3, 2014}}, \bibfield{editor}{\bibinfo{person}{Johan
  Jeuring} {and} \bibinfo{person}{Manuel M.~T. Chakravarty}} (Eds.).
  \bibinfo{publisher}{{ACM}}, \bibinfo{pages}{123--135}.
\newblock
\showISBNx{978-1-4503-2873-9}
\urldef\tempurl%
\url{https://doi.org/10.1145/2628136.2628160}
\showDOI{\tempurl}


\bibitem[\protect\citeauthoryear{Reed and Pierce}{Reed and Pierce}{2010}]%
        {reed_distance_2010}
\bibfield{author}{\bibinfo{person}{Jason Reed} {and}
  \bibinfo{person}{Benjamin~C. Pierce}.} \bibinfo{year}{2010}\natexlab{}.
\newblock \showarticletitle{Distance makes the types grow stronger: a calculus
  for differential privacy}. In \bibinfo{booktitle}{\emph{Proceeding of the
  15th {ACM} {SIGPLAN} international conference on Functional programming,
  {ICFP} 2010, Baltimore, Maryland, USA, September 27-29, 2010}},
  \bibfield{editor}{\bibinfo{person}{Paul Hudak} {and}
  \bibinfo{person}{Stephanie Weirich}} (Eds.). \bibinfo{publisher}{ACM Press},
  \bibinfo{pages}{157--168}.
\newblock
\urldef\tempurl%
\url{https://doi.org/10.1145/1863543.1863568}
\showDOI{\tempurl}


\bibitem[\protect\citeauthoryear{Reynolds}{Reynolds}{1983}]%
        {reynolds_types_1983}
\bibfield{author}{\bibinfo{person}{John~C. Reynolds}.}
  \bibinfo{year}{1983}\natexlab{}.
\newblock \showarticletitle{Types, abstraction and parametric polymorphism}.
\newblock \bibinfo{journal}{\emph{Information processing}}
  \bibinfo{volume}{83}, \bibinfo{number}{1} (\bibinfo{year}{1983}),
  \bibinfo{pages}{513--523}.
\newblock


\bibitem[\protect\citeauthoryear{Scott}{Scott}{1980}]%
        {scott1980relating}
\bibfield{author}{\bibinfo{person}{Dana~S Scott}.}
  \bibinfo{year}{1980}\natexlab{}.
\newblock \showarticletitle{Relating theories of the lambda calculus}.
\newblock \bibinfo{journal}{\emph{To HB Curry: Essays on combinatory logic,
  lambda calculus and formalism}} (\bibinfo{year}{1980}),
  \bibinfo{pages}{403--450}.
\newblock


\bibitem[\protect\citeauthoryear{Walker}{Walker}{2005}]%
        {walker:substructural}
\bibfield{author}{\bibinfo{person}{David Walker}.}
  \bibinfo{year}{2005}\natexlab{}.
\newblock \showarticletitle{Substructural Type Systems}.
\newblock In \bibinfo{booktitle}{\emph{Advanced Topics in Types and Programming
  Languages}}, \bibfield{editor}{\bibinfo{person}{Benjamin~C. Pierce}} (Ed.).
  \bibinfo{publisher}{MIT Press}, Chapter~1.
\newblock


\end{thebibliography}



\newpage
\appendix

\section{Typing derivations}

\providecommand\fullinfo[1]{\[\input{moldy/#1_term.tex} : \input{moldy/#1_type.tex}\] %
  \begin{center}%
    \input{moldy/#1_derivation.tex}%
  \end{center}%
}
\providecommand\fullinfolong[2]{\begin{align*} #2 & : \input{moldy/#1_type.tex}\\ #2 &= \input{moldy/#1_term.tex} \end{align*} %
  \begin{center}%
    \input{moldy/#1_derivation.tex}%
  \end{center}%
}

\subsection{Duals}\label[appendix]{app:adjoints}
\paragraph{Admissibility of \(\BOXin\)} \[\pconv{\mathit{η}}{\TBOX{\mathit{p}}{\mathit{t}}} : \mathit{A}⊸\BOX{\mathit{p}}\mathit{B}\text{, ~if~}t : \BOX{\mathit{q}}\mathit{A}⊸\mathit{B}\text{, ~if~}\mathit{η} : 1\convertible{}\mathit{p}\mathit{q}\] %
  \begin{center}%
    {\begin{tikzpicture}\path[-,line width=0.4pt,line cap=butt,line join=miter,dash pattern=](-66.1496pt,19.47pt)--(-24.5388pt,19.47pt)--(-24.5388pt,10.625pt)--(-66.1496pt,10.625pt)--cycle;
\node[anchor=north west,inner sep=0] at (-66.1496pt,19.47pt){\savebox{\marxupbox}{{\(\mathit{η}:1\convertible{}\mathit{p}\bar{p}\)}}\immediate\write\boxesfile{70}\immediate\write\boxesfile{\number\wd\marxupbox}\immediate\write\boxesfile{\number\ht\marxupbox}\immediate\write\boxesfile{\number\dp\marxupbox}\box\marxupbox};
\path[-,line width=0.4pt,line cap=butt,line join=miter,dash pattern=](-67.6496pt,20.97pt)--(-23.0388pt,20.97pt)--(-23.0388pt,20.97pt)--(-67.6496pt,20.97pt)--cycle;
\path[-,line width=0.4pt,line cap=butt,line join=miter,dash pattern=](-67.6496pt,20.97pt)--(-23.0388pt,20.97pt)--(-23.0388pt,20.97pt)--(-67.6496pt,20.97pt)--cycle;
\path[-,draw=white,line width=0.4pt,line cap=butt,line join=miter,dash pattern=](-67.6496pt,20.97pt)--(-23.0388pt,20.97pt);
\path[-,line width=0.4pt,line cap=butt,line join=miter,dash pattern=](45.8804pt,48.5781pt)--(45.8804pt,48.5781pt)--(45.8804pt,35.9581pt)--(45.8804pt,35.9581pt)--cycle;
\path[-,line width=0.4pt,line cap=butt,line join=miter,dash pattern=](41.4204pt,45.0781pt)--(44.3804pt,45.0781pt)--(44.3804pt,39.4581pt)--(41.4204pt,39.4581pt)--cycle;
\node[anchor=north west,inner sep=0] at (41.4204pt,45.0781pt){\savebox{\marxupbox}{{\(\mathit{t}\)}}\immediate\write\boxesfile{71}\immediate\write\boxesfile{\number\wd\marxupbox}\immediate\write\boxesfile{\number\ht\marxupbox}\immediate\write\boxesfile{\number\dp\marxupbox}\box\marxupbox};
\path[-,line width=0.4pt,line cap=butt,line join=miter,dash pattern=](45.8804pt,50.0781pt)--(45.8804pt,50.0781pt)--(45.8804pt,50.0781pt)--(45.8804pt,50.0781pt)--cycle;
\node[anchor=north west,inner sep=0] at (45.8804pt,50.0781pt){\savebox{\marxupbox}{{}}\immediate\write\boxesfile{72}\immediate\write\boxesfile{\number\wd\marxupbox}\immediate\write\boxesfile{\number\ht\marxupbox}\immediate\write\boxesfile{\number\dp\marxupbox}\box\marxupbox};
\path[-,line width=0.4pt,line cap=butt,line join=miter,dash pattern=](44.3804pt,51.5781pt)--(47.3804pt,51.5781pt)--(47.3804pt,51.5781pt)--(44.3804pt,51.5781pt)--cycle;
\path[-,line width=0.4pt,line cap=butt,line join=miter,dash pattern=](44.3804pt,51.5781pt)--(47.3804pt,51.5781pt)--(47.3804pt,51.5781pt)--(44.3804pt,51.5781pt)--cycle;
\path[-,draw=white,line width=0.4pt,line cap=butt,line join=miter,dash pattern=](44.3804pt,51.5781pt)--(47.3804pt,51.5781pt);
\path[-,draw=black,line width=0.4pt,line cap=butt,line join=miter,dash pattern=on 0.4pt off 1pt](45.8804pt,35.9581pt)--(45.8804pt,48.5781pt);
\path[-,line width=0.4pt,line cap=butt,line join=miter,dash pattern=](32.1361pt,34.4581pt)--(59.6246pt,34.4581pt)--(59.6246pt,23.8541pt)--(32.1361pt,23.8541pt)--cycle;
\node[anchor=north west,inner sep=0] at (32.1361pt,34.4581pt){\savebox{\marxupbox}{{\(\moda{\bar{p}}\mathit{Γ} ⊢ \mathit{A}\)}}\immediate\write\boxesfile{73}\immediate\write\boxesfile{\number\wd\marxupbox}\immediate\write\boxesfile{\number\ht\marxupbox}\immediate\write\boxesfile{\number\dp\marxupbox}\box\marxupbox};
\path[-,line width=0.4pt,line cap=butt,line join=miter,dash pattern=](45.8804pt,48.5781pt)--(45.8804pt,48.5781pt)--(45.8804pt,35.9581pt)--(45.8804pt,35.9581pt)--cycle;
\path[-,line width=0.4pt,line cap=butt,line join=miter,dash pattern=](30.6361pt,35.9581pt)--(61.1246pt,35.9581pt)--(61.1246pt,35.9581pt)--(30.6361pt,35.9581pt)--cycle;
\path[-,draw=white,line width=0.4pt,line cap=butt,line join=miter,dash pattern=](30.6361pt,35.9581pt)--(61.1246pt,35.9581pt);
\path[-,line width=0.4pt,line cap=butt,line join=miter,dash pattern=](25.6111pt,20.8541pt)--(66.1496pt,20.8541pt)--(66.1496pt,8.46pt)--(25.6111pt,8.46pt)--cycle;
\node[anchor=north west,inner sep=0] at (25.6111pt,20.8541pt){\savebox{\marxupbox}{{\(\moda{\mathit{p}\bar{p}}\mathit{Γ} ⊢ \BOX{\mathit{p}}\mathit{A}\)}}\immediate\write\boxesfile{74}\immediate\write\boxesfile{\number\wd\marxupbox}\immediate\write\boxesfile{\number\ht\marxupbox}\immediate\write\boxesfile{\number\dp\marxupbox}\box\marxupbox};
\path[-,line width=0.4pt,line cap=butt,line join=miter,dash pattern=](30.6361pt,35.9581pt)--(61.1246pt,35.9581pt)--(61.1246pt,22.3541pt)--(30.6361pt,22.3541pt)--cycle;
\path[-,line width=0.4pt,line cap=butt,line join=miter,dash pattern=](24.1111pt,22.3541pt)--(67.6496pt,22.3541pt)--(67.6496pt,22.3541pt)--(24.1111pt,22.3541pt)--cycle;
\path[-,line width=0.4pt,line cap=butt,line join=miter,dash pattern=](70.6496pt,27.429pt)--(136.6092pt,27.429pt)--(136.6092pt,18.2941pt)--(70.6496pt,18.2941pt)--cycle;
\node[anchor=north west,inner sep=0] at (70.6496pt,27.429pt){\savebox{\marxupbox}{{\(\distributesMaybe{\mathit{Γ}}{\mathit{p}}\)}}\immediate\write\boxesfile{75}\immediate\write\boxesfile{\number\wd\marxupbox}\immediate\write\boxesfile{\number\ht\marxupbox}\immediate\write\boxesfile{\number\dp\marxupbox}\box\marxupbox};
\path[-,line width=0.4pt,line cap=butt,line join=miter,dash pattern=](-15.0388pt,26.9841pt)--(21.1111pt,26.9841pt)--(21.1111pt,16.0941pt)--(-15.0388pt,16.0941pt)--cycle;
\node[anchor=north west,inner sep=0] at (-15.0388pt,26.9841pt){\savebox{\marxupbox}{{\rul{\BOXintro[\mathit{p}]}}}\immediate\write\boxesfile{76}\immediate\write\boxesfile{\number\wd\marxupbox}\immediate\write\boxesfile{\number\ht\marxupbox}\immediate\write\boxesfile{\number\dp\marxupbox}\box\marxupbox};
\path[-,draw=black,line width=0.4pt,line cap=butt,line join=miter,dash pattern=](24.1111pt,22.3541pt)--(67.6496pt,22.3541pt);
\path[-,line width=0.4pt,line cap=butt,line join=miter,dash pattern=](-16.0827pt,5.46pt)--(16.0827pt,5.46pt)--(16.0827pt,-5.46pt)--(-16.0827pt,-5.46pt)--cycle;
\node[anchor=north west,inner sep=0] at (-16.0827pt,5.46pt){\savebox{\marxupbox}{{\(\mathit{Γ} ⊢ \BOX{\mathit{p}}\mathit{A}\)}}\immediate\write\boxesfile{77}\immediate\write\boxesfile{\number\wd\marxupbox}\immediate\write\boxesfile{\number\ht\marxupbox}\immediate\write\boxesfile{\number\dp\marxupbox}\box\marxupbox};
\path[-,line width=0.4pt,line cap=butt,line join=miter,dash pattern=](-67.6496pt,22.3541pt)--(67.6496pt,22.3541pt)--(67.6496pt,6.96pt)--(-67.6496pt,6.96pt)--cycle;
\path[-,line width=0.4pt,line cap=butt,line join=miter,dash pattern=](-67.6496pt,6.96pt)--(67.6496pt,6.96pt)--(67.6496pt,6.96pt)--(-67.6496pt,6.96pt)--cycle;
\path[-,line width=0.4pt,line cap=butt,line join=miter,dash pattern=](-92.4296pt,9.53pt)--(-70.6496pt,9.53pt)--(-70.6496pt,4.875pt)--(-92.4296pt,4.875pt)--cycle;
\node[anchor=north west,inner sep=0] at (-92.4296pt,9.53pt){\savebox{\marxupbox}{{\rul{conv}}}\immediate\write\boxesfile{78}\immediate\write\boxesfile{\number\wd\marxupbox}\immediate\write\boxesfile{\number\ht\marxupbox}\immediate\write\boxesfile{\number\dp\marxupbox}\box\marxupbox};
\path[-,draw=black,line width=0.4pt,line cap=butt,line join=miter,dash pattern=](-67.6496pt,6.96pt)--(67.6496pt,6.96pt);
\end{tikzpicture}}%
  \end{center}%

\paragraph{Admissibility of \(\BOXel\)} \[\tlet{\TBOX{\bar{p}}{\TBOX{\mathit{p}}{\pconv{\mathit{ε}}{\mathit{y}}}}}{\TBOX{\bar{p}}{\mathit{t}}}{\mathit{y}} : \BOX{\mathit{q}}\mathit{A}⊸\mathit{B}\text{, ~if~}t : \mathit{A}⊸\BOX{\mathit{p}}\mathit{B}\text{, ~if~}\mathit{ε} : \mathit{q}\mathit{p}\convertible{}1\] %
  \begin{center}%
    \input{moldy/bar_elim_derivation.tex}%
  \end{center}%

\paragraph{Rule \(\BOXintro[p]\) from \(\BOXin\)}\[\TBOXalt{p}\app{(\pconv{\mathit{ε}}{\mathit{t}})} : \BOX{\mathit{p}}\mathit{A}⊸\BOX{\mathit{p}}\mathit{B}\text{, ~if~}t : \mathit{A}⊸\mathit{B}\text{, ~if~}\mathit{ε} : \mathit{q}\mathit{p}\convertible{}1\text{, ~if~}\mathit{η} : 1\convertible{}\mathit{p}\mathit{q}\] %
  \begin{center}%
    {\begin{tikzpicture}\path[-,line width=0.4pt,line cap=butt,line join=miter,dash pattern=](-38.6981pt,31.011pt)--(2.9327pt,31.011pt)--(2.9327pt,22.166pt)--(-38.6981pt,22.166pt)--cycle;
\node[anchor=north west,inner sep=0] at (-38.6981pt,31.011pt){\savebox{\marxupbox}{{\(\mathit{ε}:\bar{p}\mathit{p}\convertible{}1\)}}\immediate\write\boxesfile{96}\immediate\write\boxesfile{\number\wd\marxupbox}\immediate\write\boxesfile{\number\ht\marxupbox}\immediate\write\boxesfile{\number\dp\marxupbox}\box\marxupbox};
\path[-,line width=0.4pt,line cap=butt,line join=miter,dash pattern=](-40.1981pt,32.511pt)--(4.4327pt,32.511pt)--(4.4327pt,32.511pt)--(-40.1981pt,32.511pt)--cycle;
\path[-,line width=0.4pt,line cap=butt,line join=miter,dash pattern=](-40.1981pt,32.511pt)--(4.4327pt,32.511pt)--(4.4327pt,32.511pt)--(-40.1981pt,32.511pt)--cycle;
\path[-,draw=white,line width=0.4pt,line cap=butt,line join=miter,dash pattern=](-40.1981pt,32.511pt)--(4.4327pt,32.511pt);
\path[-,line width=0.4pt,line cap=butt,line join=miter,dash pattern=](27.3154pt,45.041pt)--(27.3154pt,45.041pt)--(27.3154pt,32.421pt)--(27.3154pt,32.421pt)--cycle;
\path[-,line width=0.4pt,line cap=butt,line join=miter,dash pattern=](22.8554pt,41.541pt)--(25.8154pt,41.541pt)--(25.8154pt,35.921pt)--(22.8554pt,35.921pt)--cycle;
\node[anchor=north west,inner sep=0] at (22.8554pt,41.541pt){\savebox{\marxupbox}{{\(\mathit{t}\)}}\immediate\write\boxesfile{97}\immediate\write\boxesfile{\number\wd\marxupbox}\immediate\write\boxesfile{\number\ht\marxupbox}\immediate\write\boxesfile{\number\dp\marxupbox}\box\marxupbox};
\path[-,line width=0.4pt,line cap=butt,line join=miter,dash pattern=](27.3154pt,46.541pt)--(27.3154pt,46.541pt)--(27.3154pt,46.541pt)--(27.3154pt,46.541pt)--cycle;
\node[anchor=north west,inner sep=0] at (27.3154pt,46.541pt){\savebox{\marxupbox}{{}}\immediate\write\boxesfile{98}\immediate\write\boxesfile{\number\wd\marxupbox}\immediate\write\boxesfile{\number\ht\marxupbox}\immediate\write\boxesfile{\number\dp\marxupbox}\box\marxupbox};
\path[-,line width=0.4pt,line cap=butt,line join=miter,dash pattern=](25.8154pt,48.041pt)--(28.8154pt,48.041pt)--(28.8154pt,48.041pt)--(25.8154pt,48.041pt)--cycle;
\path[-,line width=0.4pt,line cap=butt,line join=miter,dash pattern=](25.8154pt,48.041pt)--(28.8154pt,48.041pt)--(28.8154pt,48.041pt)--(25.8154pt,48.041pt)--cycle;
\path[-,draw=white,line width=0.4pt,line cap=butt,line join=miter,dash pattern=](25.8154pt,48.041pt)--(28.8154pt,48.041pt);
\path[-,draw=black,line width=0.4pt,line cap=butt,line join=miter,dash pattern=on 0.4pt off 1pt](27.3154pt,32.421pt)--(27.3154pt,45.041pt);
\path[-,line width=0.4pt,line cap=butt,line join=miter,dash pattern=](15.9327pt,30.921pt)--(38.6981pt,30.921pt)--(38.6981pt,24.261pt)--(15.9327pt,24.261pt)--cycle;
\node[anchor=north west,inner sep=0] at (15.9327pt,30.921pt){\savebox{\marxupbox}{{\(\mathit{Γ} ⊢ \mathit{A}\)}}\immediate\write\boxesfile{99}\immediate\write\boxesfile{\number\wd\marxupbox}\immediate\write\boxesfile{\number\ht\marxupbox}\immediate\write\boxesfile{\number\dp\marxupbox}\box\marxupbox};
\path[-,line width=0.4pt,line cap=butt,line join=miter,dash pattern=](27.3154pt,45.041pt)--(27.3154pt,45.041pt)--(27.3154pt,32.421pt)--(27.3154pt,32.421pt)--cycle;
\path[-,line width=0.4pt,line cap=butt,line join=miter,dash pattern=](14.4327pt,32.421pt)--(40.1981pt,32.421pt)--(40.1981pt,32.421pt)--(14.4327pt,32.421pt)--cycle;
\path[-,draw=white,line width=0.4pt,line cap=butt,line join=miter,dash pattern=](14.4327pt,32.421pt)--(40.1981pt,32.421pt);
\path[-,line width=0.4pt,line cap=butt,line join=miter,dash pattern=](-15.5693pt,19.166pt)--(15.5693pt,19.166pt)--(15.5693pt,8.5619pt)--(-15.5693pt,8.5619pt)--cycle;
\node[anchor=north west,inner sep=0] at (-15.5693pt,19.166pt){\savebox{\marxupbox}{{\(\moda{\bar{p}\mathit{p}}\mathit{Γ} ⊢ \mathit{A}\)}}\immediate\write\boxesfile{100}\immediate\write\boxesfile{\number\wd\marxupbox}\immediate\write\boxesfile{\number\ht\marxupbox}\immediate\write\boxesfile{\number\dp\marxupbox}\box\marxupbox};
\path[-,line width=0.4pt,line cap=butt,line join=miter,dash pattern=](-40.1981pt,32.511pt)--(40.1981pt,32.511pt)--(40.1981pt,20.666pt)--(-40.1981pt,20.666pt)--cycle;
\path[-,line width=0.4pt,line cap=butt,line join=miter,dash pattern=](-40.1981pt,20.666pt)--(40.1981pt,20.666pt)--(40.1981pt,20.666pt)--(-40.1981pt,20.666pt)--cycle;
\path[-,line width=0.4pt,line cap=butt,line join=miter,dash pattern=](-64.9781pt,23.236pt)--(-43.1981pt,23.236pt)--(-43.1981pt,18.581pt)--(-64.9781pt,18.581pt)--cycle;
\node[anchor=north west,inner sep=0] at (-64.9781pt,23.236pt){\savebox{\marxupbox}{{\rul{conv}}}\immediate\write\boxesfile{101}\immediate\write\boxesfile{\number\wd\marxupbox}\immediate\write\boxesfile{\number\ht\marxupbox}\immediate\write\boxesfile{\number\dp\marxupbox}\box\marxupbox};
\path[-,draw=black,line width=0.4pt,line cap=butt,line join=miter,dash pattern=](-40.1981pt,20.666pt)--(40.1981pt,20.666pt);
\path[-,line width=0.4pt,line cap=butt,line join=miter,dash pattern=](-17.6577pt,5.5619pt)--(17.6577pt,5.5619pt)--(17.6577pt,-5.5619pt)--(-17.6577pt,-5.5619pt)--cycle;
\node[anchor=north west,inner sep=0] at (-17.6577pt,5.5619pt){\savebox{\marxupbox}{{\(\moda{\mathit{p}}\mathit{Γ} ⊢ \BOX{\mathit{p}}\mathit{A}\)}}\immediate\write\boxesfile{102}\immediate\write\boxesfile{\number\wd\marxupbox}\immediate\write\boxesfile{\number\ht\marxupbox}\immediate\write\boxesfile{\number\dp\marxupbox}\box\marxupbox};
\path[-,line width=0.4pt,line cap=butt,line join=miter,dash pattern=](-17.0693pt,20.666pt)--(17.0693pt,20.666pt)--(17.0693pt,7.0619pt)--(-17.0693pt,7.0619pt)--cycle;
\path[-,line width=0.4pt,line cap=butt,line join=miter,dash pattern=](-19.1577pt,7.0619pt)--(19.1577pt,7.0619pt)--(19.1577pt,7.0619pt)--(-19.1577pt,7.0619pt)--cycle;
\path[-,line width=0.4pt,line cap=butt,line join=miter,dash pattern=](-59.4376pt,12.4319pt)--(-22.1577pt,12.4319pt)--(-22.1577pt,0.767pt)--(-59.4376pt,0.767pt)--cycle;
\node[anchor=north west,inner sep=0] at (-59.4376pt,12.4319pt){\savebox{\marxupbox}{{\(\BOXin{}\)}}\immediate\write\boxesfile{103}\immediate\write\boxesfile{\number\wd\marxupbox}\immediate\write\boxesfile{\number\ht\marxupbox}\immediate\write\boxesfile{\number\dp\marxupbox}\box\marxupbox};
\path[-,draw=black,line width=0.4pt,line cap=butt,line join=miter,dash pattern=](-19.1577pt,7.0619pt)--(19.1577pt,7.0619pt);
\end{tikzpicture}}%
  \end{center}%

\subsection{Effects}
\paragraph{bind}\begin{align*} bind & : \BOX{\mathit{p}}\mathit{A}⊸(\mathit{A}⊸\BOX{\mathit{p}}\mathit{B})⊸\BOX{\mathit{p}}\mathit{B}\text{, ~if~}\mathit{μ} : \mathit{p}\mathit{p}\convertible{}\mathit{p}\text{, ~if~}\mathit{η} : 1\convertible{}\mathit{p}\\ bind &= \lam{\TBOX{\mathit{p}}{\mathit{x}}}{\lam{(\pconv{\mathit{η}}{\mathit{f}})}{\tlet{\TBOX{\mathit{p}}{\TBOX{\mathit{p}}{\pconv{\mathit{μ}}{\mathit{y}}}}}{\TBOX{\mathit{p}}{\mathit{f}\app{\mathit{x}}}}{\TBOX{\mathit{p}}{\mathit{y}}}}} \end{align*} %
  \begin{center}%
    \input{moldy/bind_derivation.tex}%
  \end{center}%

\paragraph{example program (increment with state effect)} \[\tlet{\TBOX{\mathit{p}}{\mathit{y}}}{\mathit{get}}{\tlet{\TBOX{\mathit{p}}{\TBOX{\mathit{p}}{\pconv{\mathit{μ}}{\mathit{z}}}}}{\TBOX{\mathit{p}}{\mathit{put}\app{(\mathit{inc}\app{\mathit{y}})}}}{\TBOX{\mathit{p}}{\tlet{()}{\mathit{z}}{42}}}} : \BOX{\mathit{p}}\mathit{ℕ}\] %
  \begin{center}%
    \input{moldy/state_example_derivation.tex}%
  \end{center}%

\subsection{Co-effects}
\paragraph{example program (increment with coeffect)} \[\lam{\TBOX{\bar{p}}{\pconv{\mathit{δ}}{\mathit{u}}}}{\tlet{()}{\mathit{coput}\app{\TBOX{\bar{p}}{\mathit{inc}\app{(\mathit{coget}\app{\TBOX{\bar{p}}{\mathit{u}}})}}}}{42}} : \BOX{\bar{p}}1⊸\mathit{ℕ}\] %
  \begin{center}%
    \input{moldy/costate_example_derivation.tex}%
  \end{center}%

\paragraph{duplication of coeffect sources} \[\lam{\TBOX{\bar{p}}{\mathit{x}}}{\text{\ref{principle:oplax}}\app{\TBOX{\bar{p}}{\tlet{()}{\mathit{x}}{((),())}}}} : \BOX{\bar{p}}1⊸\BOX{\bar{p}}1⊗\BOX{\bar{p}}1\] %
  \begin{center}%
    \input{moldy/dupSinks_derivation.tex}%
  \end{center}%

\end{document}

